\documentclass[12pt]{article}
\usepackage{latexsym} \usepackage{epsf}
\usepackage{a4}\usepackage{amsfonts}\usepackage{amsmath}
\usepackage{amsthm}\usepackage{amssymb}\usepackage{cite}

\textwidth 170 mm %173mm
\textheight 230 mm %235mm
\topmargin -25 pt %-30pt
\oddsidemargin -0.45cm
\evensidemargin -0.45cm

% section numbering
\renewcommand{\theequation}{\thesection.\arabic{equation}}
\newcounter{subequation}[equation]
\makeatletter

\expandafter\let\expandafter\reset@font\csname reset@font\endcsname

\def\subeqnarray{\arraycolsep1pt
    \def\@eqnnum\stepcounter##1{\stepcounter{subequation}%
        {\reset@font\rm(\theequation\alph{subequation})}}
\jot5mm     \eqnarray}

\makeatother

%%%%%%%%%%%%%%%%%%%%%%%%%%%%%%%%%%%%%%%%%%%%%%%%%%%%%%%%%%%%%%%%%%%%%%%
% AD HOC MACROS

\def\be{\begin{equation}}
\def\ee{\end{equation}}
\def\bea{\begin{eqnarray}}
\def\eea{\end{eqnarray}}
\def\ba{\begin{array}}
\def\ea{\end{array}}
\def\dd{\partial}

\def\one#1{#1^{\raise5pt\hbox{$\scriptstyle\!\!\!\!1$}}\,{}}
\def\two#1{#1^{\raise5pt\hbox{$\scriptstyle\!\!\!\!2$}}\,{}}

\def\tilde{\widetilde}
\def\II{\hbox{{1}\kern-.25em\hbox{l}}}
%%%%%%%%%%%%%%%%%%%%%%%%%%%%%%%%%%%%%%%%%%%%%%%%%%%%%%%%%%%%%%%%%%%%%%%%%
\makeatletter
\def\binrel@#1{\begingroup
  \setboxz@h{\thinmuskip0mu
    \medmuskip\m@ne mu\thickmuskip\@ne mu
    \setbox\tw@\hbox{$#1\m@th$}\kern-\wd\tw@
    ${}#1{}\m@th$}%
  \edef\@tempa{\endgroup\let\noexpand\binrel@@
    \ifdim\wdz@<\z@ \mathbin
    \else\ifdim\wdz@>\z@ \mathrel
    \else \relax\fi\fi}%
  \@tempa
}
\let\binrel@@\relax
\def\overset#1#2{\binrel@{#2}%
  \binrel@@{\mathop{\kern\z@#2}\limits^{#1}}}
\def\underset#1#2{\binrel@{#2}%
  \binrel@@{\mathop{\kern\z@#2}\limits_{#1}}}
\makeatother
%\newcommand{\one}[1]{\overset{1}{#1}}
%\newcommand{\two}[1]{\overset{2}{#1}}
%%%%%%%%%%%%%%%%%%%%%%%%%%%%%%%%%%%%%%%%%%%%%%%%%%%%%%%%%%%%%%%%%%%%%%%%%%
\newfont{\bbd}{msbm10 scaled\magstep1}
\def\C{\hbox{\bbd C}}

\def\R{\hbox{\bbd R}}

\def\V{\hbox{\bbd V}}

\def\T{\hbox{\bbd T}}

\def\P{\hbox{\bbd P}}

\def\Z{\hbox{\bbd Z}}

%\newfont{\frak}{eufm10 scaled\magstep1}
\newtheorem{prop}{Proposition}

\newtheorem{theorem}[prop]{Theorem}

\def\stackreb#1#2{\ \mathrel{\mathop{#1}\limits_{#2}}}

%%%%%%%%%%%%%%%%%%%%%%%%%%%%%%%%%%%%%%%%%%%%%%%%%%%%%%

\begin{document}

{\begin{center} {\Large \bf{Yang-Baxter equation, parameter permutations,\\
\medskip
and the elliptic beta integral}}%
\\ [8mm]
{\large \sf S. E. Derkachov$^{a}$\footnote{e-mail: derkach@pdmi.ras.ru},
V. P. Spiridonov$^b$\footnote{e-mail: spiridon@theor.jinr.ru } \\ [8mm] }
\end{center}

\begin{itemize}
\item[$^a$]
St. Petersburg Department of Steklov Mathematical Institute
of Russian Academy of Sciences, Fontanka 27, 191023 St. Petersburg, Russia.
\item[$^b$]
Bogoliubov Laboratory of Theoretical Physics, JINR,
 Dubna, Moscow reg. 141980, Russia and Max-Planck-Institut f\"ur Mathematik,
 Vivatsgasse 7, 53111, Bonn, Germany.

\end{itemize}

\vspace{1cm} \noindent {\bf Abstract}

We construct an infinite-dimensional solution of the Yang-Baxter equation (YBE) of rank 1
which is represented as an integral operator with an elliptic hypergeometric
kernel acting in the space of functions of two complex variables.
This R-operator intertwines the product of two standard L-operators
associated with the Sklyanin algebra, an elliptic deformation of $s\ell(2)$-algebra.
It is built from three basic operators
$\mathrm{S}_1, \mathrm{S}_2$, and $\mathrm{S}_3$ generating
the permutation group of four parameters $\mathfrak{S}_4$.
Validity of the key Coxeter relations (including the
star-triangle relation) is based on the elliptic
beta integral evaluation formula and the Bailey lemma
associated with an elliptic Fourier transformation.
The operators $\mathrm{S}_j$ are determined uniquely with the help
of the elliptic modular double.

%%%%%%%%%%%%%%%%%%%%%%%%%%%%%%%%%%%%%%%%%%%%%%%%%%%%%%%%%%%%%%%%%%%%%%%%%%%%%%
{\small \tableofcontents}
\renewcommand{\refname}{References.}
\renewcommand{\thefootnote}{\arabic{footnote}}
\setcounter{footnote}{0}

%\newpage
\section{Introduction}
\setcounter{equation}{0}

The Yang-Baxter equation (YBE)
\begin{equation}\label{YB}
\mathbb{R}_{12} (u-v)\,\mathbb{R}_{13}(u)\, \mathbb{R}_{23}(v)
=\mathbb{R}_{23}(v)\,\mathbb{R}_{13}(u)\,\mathbb{R}_{12}(u-v)
\end{equation}
plays a key role in the theory of completely integrable quantum
systems~\cite{Baxter,FT,KS1,KS2,Jimbo,Faddeev,PA}. Its general solution
is described by the operators $\mathbb{R}_{ik}(u)$ acting in the
tensor product of three (in general different) spaces $\V_1\otimes\V_2\otimes\V_3$.
The indices $i$ and $k$ show that $\mathbb{R}_{ik}(u)$ acts nontrivially in the
tensor product $\V_i\otimes\V_k$ and it is the unity operator in the remaining
part of $\V_1\otimes\V_2\otimes\V_3$.
The operator $\mathbb{R}_{ik}(u)$ depend on the complex spectral parameter $u$
and is called the R-matrix (or R-operator).

For spaces $\V_i$ of finite dimension
there are three increasing levels of complexity of known
YBE solutions described by rational, trigonometric, and
elliptic functions. Investigation of the most complicated elliptic level
was initiated by Baxter \cite{Baxter1} for the case when all
$\V_i$-spaces are two-dimensional. In a more general setting,
when one of the spaces becomes infinite-dimensional, a major role is
played by the Sklyanin algebra \cite{Sklyanin1,Sklyanin2}.
Our main goal consists in the construction of a solution of the Yang-Baxter
equation when all spaces $\V_i$ are infinite-dimensional
and $\mathbb{R}_{ik}(u)$ are described by integral operators.
In this case the hierarchy of solutions of YBE
is attached to the plain hypergeometric, $q$-hypergeometric,
and elliptic hypergeometric functions \cite{spi:essays},
in the increasing order of complexity.
An explicit realization of the R-matrix as an integral operator
in the simplest situation of rank 1 symmetry algebra $s\ell(2)$
has been considered in detail in  \cite{DKK0}.
In the present work we discuss only rank 1
R-operators related to the most complicated elliptic level.

Elliptic hypergeometric integrals were introduced in \cite{spi:umn,S2}.
They define the general form of elliptic hypergeometric functions
which cannot be approached by infinite series \cite{FTur}
because of the convergence problems. The discovery of such functions
and various relations for them formed a breakthrough in the theory of
special functions. These functions generalize all previously constructed
functions of hypergeometric type and they still obey the properties
characteristic to classical special functions \cite{aar}.
In particular, elliptic beta integrals \cite{spi:umn,spi:essays}
form a new class of exactly computable integrals generalizing
the Euler, Selberg and other known beta integrals and their $q$-analogues.
A kind of elliptic Fourier transform was introduced in \cite{spi:bailey}
as an integral generalization of the well known Bailey chain transformation \cite{aar}.
An elliptic extension of Faddeev's modular double \cite{fad:mod}
was introduced in \cite{AA2008}. All these constructions will play a major role
in our consideration below.

We use the general strategy of building integral operator solutions of YBE
whose initial steps were discovered in \cite{SD}. Its formulation
was completed at the rational level
in \cite{DM0,DM}, where an $SL(N,\mathbb{C})$-invariant solution of YBE
related to $A_n$-root system has been constructed.
In \cite{DKK} this method was used also at the elliptic level
employing some formal infinite series.
Here we apply it for constructing solutions of YBE
related to elliptic hypergeometric integrals. First we define some useful formal operators
$\mathrm{S}_{1}\,,\mathrm{S}_{2}$, and $\mathrm{S}_{3}$ performing elementary
permutations of
parameters in the defining RLL-relation. One of these operators
is an intertwining operator of the Sklyanin algebra. Then we build these operators
explicitly as integral operators with elliptic hypergeometric kernels.
Finally we prove the Coxeter relations for these operators,
for which the elliptic beta integral and the related integral Bailey lemma
play a crucial role, and confirm that they indeed generate the group $\mathfrak{S}_4$.
The cubic Coxeter relation represents the star-triangle relation.
The operators $\mathrm{S}_{1}\,,\mathrm{S}_{2}$, and $\mathrm{S}_{3}$
are determined essentially uniquely, if one implements the elliptic modular
doubling principle. As discussed in the concluding section, our results
have applications to an interplay between integrable $2d$ spin systems
and $4d$ supersymmetric gauge field theories.

\section{Sklyanin algebra}
\label{Sklyanin} \setcounter{equation}{0}

In the simplest case of equation \eqref{YB} all $\V_i$-spaces are two-dimensional,
$\V_i =\mathbb{C}^2$. For this case, in solving the eight-vertex model Baxter
has found the following R-matrix ~\cite{Baxter,Baxter1,FT}
\begin{equation}\label{Baxter}
\mathbb{R}_{12}(u) = \sum_{a=0}^3 w_{a} (u)\,
\sigma_a \otimes\sigma_a \ \ ;\ \ w_{a}(u) = \frac{\theta_{a+1}
(u+\eta)}{\theta_{a+1}(\eta)}\,,
\end{equation}
where $\sigma_0=\II$ and $\sigma_\alpha, \alpha=1,2,3,$ are the standard Pauli
matrices. We use the shorthand notation $\theta_j(x)\equiv \theta_j(x|\tau)$
for the Jacobi theta-functions with modular parameter $\tau$.
The definitions and useful formulae for $\theta_j$-functions are listed
in the Appendix. This R-matrix depends on the spectral parameter $u\in\mathbb{C}$
and two additional free variables $\eta\in\mathbb{C}$, $\theta_j(\eta)\neq0, j=1,\ldots,4,$
and $\tau\in\mathbb{C}$, Im$(\tau)>0$. Another  $4\times 4$ matrix solution of YBE
has been found by Felderhof \cite{Fel}. As shown by Krichever \cite{Kri}, the
Baxter and Felderhof R-matrices represent all solutions of YBE for $\V_i =\mathbb{C}^2$.

At the next level of complexity of relation \eqref{YB} one of the spaces,
say $\V_3$, is arbitrary, and $\V_1, \V_2 =\mathbb{C}^2$. In this case the R-matrix
$\mathbb{R}_{13}(u)\equiv\mathrm{L}_{13}(u)$ (and $\mathbb{R}_{23}(u)
\equiv\mathrm{L}_{23}(u)$)
is known as the quantum L-operator or the Lax matrix.
It is a matrix with two rows and two columns acting in $\V_1$
\begin{equation}\makebox[-1em]{}
\mathrm{L}_{13}(u)=\mathrm{L}(u):=
\sum_{a=0}^3 w_{a} (u)\, \sigma_a \otimes
\mathbf{S}^a = \left(
\begin{array}{cc}
w_0(u)\,\mathbf{S}^0+w_3(u)\,\mathbf{S}^3 &
w_1(u)\,\mathbf{S}^1-\textup{i} w_2(u)\,\mathbf{S}^2 \\
w_1(u)\,\mathbf{S}^1+\textup{i} w_2(u)\,\mathbf{S}^2&
w_0(u)\,\mathbf{S}^0-w_3(u)\,\mathbf{S}^3
\end{array} \right),
\label{L_op}\end{equation}
where the matrix element entries $\mathbf{S}^a$ are some
operators acting in $\V_3$. The same $\mathbf{S}^a$-operators
enter $\mathrm{L}_{23}(u)$ which acts as a $2\times 2$ matrix in $\V_2$.
In this case the equation for L-operator is the Yang-Baxter
equation of the form
\begin{equation}\label{Lax}
\mathbb{R}_{12} (u-v)\,\mathrm{L}_{13}(u)\, \mathrm{L}_{23}(v)
=\mathrm{L}_{23}(v)\,\mathrm{L}_{13}(u)\,\mathbb{R}_{12}(u-v)\,,
\end{equation}
where $\mathbb{R}_{12}(u)$ is Baxter's R-matrix~(\ref{Baxter}).
This equation is equivalent to the following set of relations for
four operators $\mathbf{S}^0, \mathbf{S}^1,\mathbf{S}^2,\mathbf{S}^3$
forming the Sklyanin algebra \cite{Sklyanin1,Sklyanin2}:
$$
\mathbf{S}^\alpha\,\mathbf{S}^\beta - \mathbf{S}^\beta\,\mathbf{S}^\alpha =
\textup{i}\cdot\left(\mathbf{S}^0\,\mathbf{S}^\gamma +\mathbf{S}^\gamma\,\mathbf{S}^0\right)\,,
$$
$$
\mathbf{S}^0\,\mathbf{S}^\alpha - \mathbf{S}^\alpha\,\mathbf{S}^0 =
\textup{i}\,\mathbf{J}_{\beta \gamma}\cdot\left(\mathbf{S}^\beta\,\mathbf{S}^\gamma +\mathbf{S}^\gamma\,\mathbf{S}^\beta\right)\,,
$$
where the triplet $(\alpha,\beta,\gamma)$ is an arbitrary cyclic permutation
of $(1,2,3)$ and the structure constants $\mathbf{J}_{\beta \gamma}$ are
parameterized in terms of theta functions as
\begin{equation}\label{Jik}
\mathbf{J}_{12}=\frac{\theta_1^2(\eta)\theta_4^2(\eta)}
{\theta_2^2(\eta)\theta_3^2(\eta)}\ ;\quad
\mathbf{J}_{23}=\frac{\theta_1^2(\eta)\theta_2^2(\eta)}
{\theta_3^2(\eta)\theta_4^2(\eta)}\ ;\quad
\mathbf{J}_{31}= -\frac{\theta_1^2(\eta)\theta_3^2(\eta)}
{\theta_2^2(\eta)\theta_4^2(\eta)}\,.
\end{equation}
One can write $\mathbf{J}_{\alpha\beta}=
\frac{\mathbf{J}_{\beta}-\mathbf{J}_{\alpha}}{\mathbf{J}_{\gamma}}$,
$\gamma\neq \alpha,\beta$,  where
$$
\mathbf{J}_{1}=\frac{\theta_2(2\eta)\theta_2(0)}
{\theta_2^2(\eta)}\ ;\quad
\mathbf{J}_{2}=\frac{\theta_3(2\eta)\theta_3(0)}
{\theta_3^2(\eta)}\ ;\quad
\mathbf{J}_{3}= \frac{\theta_4(2\eta)\theta_4(0)}
{\theta_4^2(\eta)}\,.
$$
There are two Casimir operators commuting with all generators:
$\left[\mathbf{K}_0 ,\mathbf{S}^a\right] =
\left[\mathbf{K}_2 ,\mathbf{S}^a\right] =0$,
$$
\mathbf{K}_0 = \sum_{a=0}^3\,\mathbf{S}^a\,\mathbf{S}^a\ ;\qquad
\mathbf{K}_2 = \sum_{\alpha=1}^3\,\mathbf{J}_\alpha\,\mathbf{S}^\alpha\,\mathbf{S}^\alpha\,.
$$

We shall use the explicit realization of generators as difference
operators found in \cite{Sklyanin2}
\begin{equation}\label{Sklyan}
\left[\mathbf{S}^a\,\Phi\right](z) =\frac{(\textup{i})^{\delta_{a,2}}
\theta_{a+1}(\eta)}{\theta_1(2 z) } \Bigl[\,\theta_{a+1} \left(2
z-2\eta\ell\right)\cdot \Phi(z+\eta) - \theta_{a+1}
\left(-2z-2\eta\ell\right)\cdot \Phi(z-\eta)\, \Bigl]\,,
\end{equation}
where $\Phi(z)$ are some (supposedly meromorphic) functions of $z\in\C$.
In this realization the Casimir operators reduce to the following
scalar expressions
$$
\mathbf{K}_0 = 4\,\theta_1^2\bigl((2\ell+1)\,\eta\bigr)\ ;\quad
\mathbf{K}_2 = 4\,\theta_1\bigl(2(\ell+1)\,\eta\bigr)\,
\theta_1(2\ell\,\eta)\,.
$$
The variable $\ell\in\mathbb{C}$ is called the {\em spin}.
It labels the Sklyanin algebra representations since it fixes
(together with $\eta$ and $\tau$) the Casimir operator values.
Note that $\mathrm{R}$-matrix \eqref{Baxter} is invariant under the
change of variables $u\to -u, \eta\to -\eta$.
However, for operators \eqref{Sklyan} the reflection $\eta\to -\eta$
changes their sign, $\mathbf{S}^a\to -\mathbf{S}^a$. Therefore,
the $\mathrm{L}$-operator changes the sign if one negates simultaneously
the spectral parameter $u$ and $\eta$.

For $\mathbf{S}^a$-operators \eqref{Sklyan} there exists a useful
factorized representation for the L-operator
$$
\mathrm{L}(u_1,u_2) = \frac{1}{\theta_1(2 z)} \left(
\begin{array}{cc}
\bar{\theta}_3\left(z - u_1\right) & -\bar{\theta}_3\left(z+u_1\right) \\
-\bar{\theta}_4\left(z - u_1\right) & \bar{\theta}_4\left(z+u_1\right)
\end{array} \right)
\left(
\begin{array}{cc}
\mathrm{e}^{\eta\dd_z} &0\\
0 & \mathrm{e}^{-\eta\dd_z }
\end{array} \right )
\left(
\begin{array}{cc}
\bar{\theta}_4\left(z+u_2\right) & \bar{\theta}_3\left(z+u_2\right) \\
\bar{\theta}_4\left(z - u_2\right) & \bar{\theta}_3\left(z - u_2\right)
\end{array} \right),
$$
where $\mathrm{e}^{\eta\dd_z}$ is a shift operator, $\mathrm{e}^{\eta\dd_z}f(z)=f(z+\eta)$.
New parameters $u_1 = \frac{u}{2} + \eta(\ell+\frac{1}{2})$ and
$u_2 = \frac{u}{2} - \eta(\ell +\frac{1}{2})$  are simple linear
combinations of the spectral parameter $u$ and the spin $\ell$.
Here we use notation
$\bar\theta_a(x)\equiv \theta_a(x|{\textstyle\frac{\tau}{2}})$ for
theta-functions with the modular parameter ${\textstyle\frac{\tau}{2}}$.

When $2\ell+1$ is a positive integer there exists
$(2\ell+1)$-dimensional space $\Theta^+_{4\ell}$ of even
theta-functions of order $4\ell$ (having $4\ell$ zeros in the
fundamental parallelogram of periods) which is invariant
under the action of generators $\mathbf{S}^a$. For $\ell=1/2$
one can choose the basis of $\Theta^+_2$-functions as $e_1=\bar\theta_4(x)$
and $e_2=\bar\theta_3(x)$. Then the Sklyanin algebra generators reduce
in this basis to sigma-matrices $\mathbf{S}^a=\theta_1(2\eta|\tau)\sigma_a$
and the $\mathrm{L}$-operator becomes proportional to the Baxter R-matrix \eqref{Baxter}.

The $\mathrm{L}$-operator \eqref{L_op} is not unique. For instance, the operator
$\sigma_\beta\,\mathrm{L}$, where $\sigma_\beta$ is
any Pauli matrix, is also a solution of equation
\eqref{Lax}. This follows from the fact that the matrix
$X_\beta:=\sigma_\beta\otimes\sigma_\beta$ obeys the properties
$X_\beta^2=1$ and $X_\beta\mathbb{R}_{12}(u)X_\beta=\mathbb{R}_{12}(u)$.
This freedom leads also to the existence of
nontrivial automorphisms of the Sklyanin algebra~\cite{Sklyanin2}.

The top level of complexity of the R-matrix corresponds to
the situation when all spaces $\V_i$ are infinite-dimensional.
In this case one deals with the most general solutions of the Yang-Baxter equation.

In the next section we explain step-by-step our
strategy for building this solution, which is essentially
the same as in~\cite{DKK} where the important role is
played by an intertwining operator. For $2\ell\in\Z_{\geq 0}$
such an intertwining operator was constructed first by
Zabrodin~\cite{Z1} as a finite sum of the powers of the finite-difference
operator $e^{-\eta \partial_z}$. Its straightforward extension to arbitrary
values of $\ell$ as an infinite series proposed in \cite{Z2}
has only a formal meaning due to the convergency problem.
Nevertheless, the needed Coxeter relations were verified in~\cite{DKK}
by checking the equality of coefficients in two such formal infinite series
with the help of the Frenkel-Turaev summation formula~\cite{FTur}.

Here we put the construction on a firm mathematical ground
by using a different general ansatz for the intertwining operator
which is more useful for practical applications.
The key ingredients needed for the completion of this program
is the elliptic beta integral \cite{spi:umn,spi:essays},
the most general known exact integration formula generalizing the Euler
beta integral, and the elliptic Fourier transformation
of \cite{spi:bailey} which is defined precisely with the help of
needed intertwining operator.

\section{General construction}

We solve YBE \eqref{YB}
when all spaces $\V_i$ are infinite-dimensional in two steps:
\begin{itemize}
\item on the first stage, we solve a defining
$\mathrm{RLL}$-relation using as elementary building blocks some
simple operators $\mathrm{S}_1$, $\mathrm{S}_2$, and $\mathrm{S}_3$.
The key structural entries at this step are Coxeter relations
for $\mathrm{S}_i$ validity of which is guaranteed by the
elliptic beta integral evaluation formula~\cite{spi:umn};

\item on the second stage, we prove that the
operator $\mathbb{R}_{12}(u)$ found from the
defining $\mathrm{RLL}$-relation obeys YBE \eqref{YB}.
\end{itemize}

Consider a realization of YBE different
from the previous ones, namely, when the spaces $\V_1$ and $\V_2$
are arbitrary and the space $\V_3$
is two-dimensional. Then equation \eqref{YB} is reduced to
the defining equation for an infinite-dimensional (unknown) R-matrix
called $\mathrm{RLL}$-relation~\cite{KRS,Sklyanin1}:
\begin{equation}\label{RLL0}
\mathbb{R}_{12}(u-v)\,\mathrm{L}_1(u)\,\mathrm{L}_2(v)=
\mathrm{L}_2(v)\,\mathrm{L}_1(u)\,\mathbb{R}_{12}(u-v)\,.
\end{equation}
Here we use compact notation: the index $k$ in
$\mathrm{L}_k$ indicates that the Sklyanin algebra generators $\mathbf{S}^a_k$
entering this matrix are the operators acting in the space $\V_k$, i.e.
$\mathbf{S}_k^a:\V_k\to\V_k$.
The operators $\mathbf{S}_1^a$ and $\mathbf{S}_2^b$ act
in different spaces and, evidently, commute,
$\left[\mathbf{S}_1^a, \mathbf{S}_2^b\right] = 0$.
Matrices $\mathrm{L}_k$ in equation~(\ref{RLL0}) are
multiplied as usual $2\times 2$ matrices acting in the space $\V_3=\C^2$.

Due to the non-uniqueness of representation of the $\mathrm{L}$-operator,
there are several possible forms of equation \eqref{RLL0}
with different $\mathbb{R}$-operators
labeled by two indices $a$ and $b$ enumerating
possible $\mathrm{L}$-operators
$$
\mathbb{R}^{ab}_{12}(u-v)\,\sigma_a\,\mathrm{L}_1(u)
\,\sigma_b\,\mathrm{L}_2(v)=
\sigma_b\,\mathrm{L}_2(v)\,\sigma_a
\,\mathrm{L}_1(u)\,\mathbb{R}^{ab}_{12}(u-v)\,.
$$
For a technical reason, which will be clear \emph{a posteriori},
we fix $a=b=3$ from the very beginning and denote
$\mathbb{R}_{jk}(u):=\mathbb{R}^{33}_{jk}(u)$.
In this case it is possible to cancel one $\sigma_3$ and
our main defining $\mathrm{RLL}$-relation takes the form
\begin{equation}\label{RLL}
\mathbb{R}_{12}(u-v)\,\mathrm{L}_1(u)
\,\sigma_3\,\mathrm{L}_2(v)=
\mathrm{L}_2(v)\,\sigma_3
\,\mathrm{L}_1(u)\,\mathbb{R}_{12}(u-v)\,.
\end{equation}
We assume that $\V_1$ is the space of
functions of a complex variable $z_1$ and $\V_2$ is the
space of functions of a complex variable $z_2$. Respectively,
the space $\V_1\otimes\V_2$, where $\mathbb{R}_{12}$
is acting, is the space of functions $\Phi(z_1,z_2)$ of two
independent variables $z_1$ and $z_2$.

It is convenient to extract from the R-matrix the permutation operator
$\mathbb{R}_{12}(u): = \mathbb{P}_{12}\,\mathrm{R}_{12}(u)$, where
the permutation operator interchanges arguments,
$\mathbb{P}_{12}\,\Phi(z_1,z_2)=\Phi(z_2,z_1)$.
Then the defining equation for the operator $\mathrm{R}_{12}$ has the following
explicit form
\begin{equation}\label{RLL'}
\mathrm{R}_{12}(u-v)\,\mathrm{L}_1(u_1,u_2)\,\sigma_3\,\mathrm{L}_2(v_1,v_2)=
\mathrm{L}_1(v_1,v_2)\,\sigma_3\,\mathrm{L}_2(u_1,u_2)\,\mathrm{R}_{12}(u-v)\,,
\end{equation}
where the operators $z,\dd_z$ and $\ell$  in the Sklyanin algebra generators
\eqref{Sklyan} entering $\mathrm{L}_1$ are replaced by $z_1, \dd_{z_1}$
and $\ell_1$, whereas in $\mathrm{L}_2$ they are replaced by $z_2, \dd_{z_2}$
and $\ell_2$. We use also the following notation for combinations of the
spectral parameters and spin variables
\begin{eqnarray}\nonumber &&
u_1 =
\frac{u}{2}+ \eta\,\left(\ell_1+\frac{1}{2}\right)\, ,\qquad
u_2 = \frac{u}{2}-\eta\,\left(\ell_1+\frac{1}{2}\right)\, ,\
\\ &&
v_1 = \frac{v}{2} + \eta\,\left(\ell_2+\frac{1}{2}\right)\, ,\qquad
v_2 = \frac{v}{2}-\eta\,\left(\ell_2+\frac{1}{2}\right)\, .
\label{pars}\end{eqnarray}
For a subsequent use it is convenient to assume that
these parameters do not depend on $\eta$ and $\tau$
(i.e., to assume that the spectral parameters $u,v$
and the variables $g_{1,2}:=\eta(2\ell_{1,2}+1)$ are independent
on $\eta$ and $\tau$).

Equation~(\ref{RLL'}) admits a natural interpretation: the
operator $\mathrm{R}_{12}$ interchanges the set of
parameters $(u_1,u_2)$ from the first $\mathrm{L}$-operator with
the set of parameters $(v_1,v_2)$ in the second
$\mathrm{L}$-operator. It is useful to
combine these four parameters  in one set in the natural order
$\mathbf{u}\equiv(u_1,u_2,v_1,v_2)$. Then the operator
$$
\mathrm{R}_{12}(u-v)\equiv \mathrm{R}_{12}(\mathbf{u})\equiv
\mathrm{R}_{12}(u_1,u_2|v_1,v_2)
$$
corresponds to a particular permutation $s$ in the group of
permutations of four parameters $\mathfrak{S}_4$:
$$
s \rightarrow \mathrm{R}_{12}(\mathbf{u})\ ;\
s\mathbf{u}\equiv s(u_1,u_2,v_1,v_2)=(v_1,v_2,u_1,u_2).
$$
Any permutation from the group $\mathfrak{S}_4$ can be
composed from the elementary transpositions $s_{1}$,
$s_{2}$, and $s_{3}$:
$$
s_{1}\mathbf{u} = (u_2,u_1,v_1,v_2)\ , \
s_{2}\mathbf{u}
 = (u_1,v_1,u_2,v_2) \ , \
s_{3}\mathbf{u} = (u_1,u_2,v_2,v_1),
$$
which interchange only two nearest neighboring elements in the
set $(u_1,u_2,v_1,v_2)$. For example, the permutation $s$ has the
following decomposition $s = s_2 s_1s_3s_2$. It is natural to
search for the operators $\mathrm{S}_i(u_1,u_2,v_1,v_2)\equiv
\mathrm{S}_i(\mathbf{u})$ representing these elementary
transpositions in L-operators
$$
(\overset{\mathrm{S}_1}{\overbrace{u_1\ ,\ u_2}},
\overset{\mathrm{S}_{3}}{\overbrace{v_1\ ,\ v_2}})\ ;\ (u_1\
,\overset{\mathrm{S}_2}{\overbrace{u_2,v_1}},\ v_2).
$$
Namely, we demand that $\mathrm{S}_i$ obey the following defining relations
\begin{eqnarray}\label{RLL13} &&
\mathrm{S}_1(\mathbf{u})\,\mathrm{L}_1(u_1,u_2) =
\mathrm{L}_1(u_2,u_1)\,\mathrm{S}_1(\mathbf{u})\ ;\quad
\mathrm{S}_3(\mathbf{u})\,\mathrm{L}_2(v_1,v_2) =
\mathrm{L}_2(v_2,v_1)\,\mathrm{S}_3(\mathbf{u}),
\\ \label{RLL2}  &&
\mathrm{S}_2(\mathbf{u})\,\mathrm{L}_1(u_1,u_2)\,\sigma_3\,\mathrm{L}_2(v_1,v_2)=
\mathrm{L}_1(u_1,v_1)\,\sigma_3\,\mathrm{L}_2(u_2,v_2)\,\mathrm{S}_2(\mathbf{u})\,.
\end{eqnarray}
Since $\mathrm{R}_{12}$-matrix acts in the space $\V_1\otimes\V_2$, operators $\mathrm{S}_i$
should be scalars with respect to $\V_3=\mathbb{C}^2$. Moreover, it is natural
to demand that $\mathrm{S}_1$ commutes with $\mathrm{L}_2$
and $\mathrm{S}_3$ commutes with  $\mathrm{L}_1$:
\begin{equation}
\mathrm{S}_1(\mathbf{u})\mathrm{L}_2(v_1,v_2)
=\mathrm{L}_2(v_1,v_2)\mathrm{S}_1(\mathbf{u}),\qquad
\mathrm{S}_3(\mathbf{u})\mathrm{L}_1(u_1,u_2)=
\mathrm{L}_1(u_1,u_2)\mathrm{S}_3(\mathbf{u}).
\label{RLL4}\end{equation}

Our first step consists in the direct construction of these operators
(see the next section).
Having these operators we can build the R-matrix.

\begin{theorem} Suppose that formal scalar operators $\mathrm{S}_i$
satisfy relations~(\ref{RLL13})--(\ref{RLL4}).
Then the composite operator $\mathrm{R}_{12}(\mathbf{u})$,
\be\label{R}
\mathrm{R}_{12}(\mathbf{u}) =
\mathrm{S}_2(s_1s_3s_2\mathbf{u})\,
\mathrm{S}_1(s_3s_2\mathbf{u})\,\mathrm{S}_3(s_2\mathbf{u})\,
\mathrm{S}_2(\mathbf{u}),
\ee
satisfies equation~(\ref{RLL'}).
\end{theorem}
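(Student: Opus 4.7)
The plan is to verify (\ref{RLL'}) by acting with the four factors of $\mathrm{R}_{12}(\mathbf{u})$ sequentially, from right to left (innermost outward), on the combination $\mathrm{L}_1(u_1,u_2)\,\sigma_3\,\mathrm{L}_2(v_1,v_2)$, using the defining relations (\ref{RLL13})--(\ref{RLL4}) as pure rewriting rules. Each operator $\mathrm{S}_i(\cdot)$ implements the adjacent transposition $s_i$ on the \emph{current} 4-tuple of L-operator parameters, so the whole calculation just reads off the reduced expression $s=s_2 s_1 s_3 s_2$ for the permutation $s\mathbf{u}=(v_1,v_2,u_1,u_2)$ already identified in the text.

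Concretely, four steps are carried out. First, $\mathrm{S}_2(\mathbf{u})$ applied via (\ref{RLL2}) converts $\mathrm{L}_1(u_1,u_2)\sigma_3 \mathrm{L}_2(v_1,v_2)$ into $\mathrm{L}_1(u_1,v_1)\sigma_3\mathrm{L}_2(u_2,v_2)$, so the parameter set becomes $s_2\mathbf{u}$. Second, $\mathrm{S}_3(s_2\mathbf{u})$ commutes through $\mathrm{L}_1$ by (\ref{RLL4}) and swaps the two arguments of $\mathrm{L}_2$ by the second half of (\ref{RLL13}), producing the state labelled by $s_3 s_2\mathbf{u}$. Third, $\mathrm{S}_1(s_3 s_2\mathbf{u})$ commutes through $\mathrm{L}_2$ by (\ref{RLL4}) and swaps the two arguments of $\mathrm{L}_1$ by the first half of (\ref{RLL13}), giving $s_1 s_3 s_2\mathbf{u}$. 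Finally, $\mathrm{S}_2(s_1 s_3 s_2\mathbf{u})$ applied again via (\ref{RLL2}) transposes the middle two parameters to yield exactly $\mathrm{L}_1(v_1,v_2)\sigma_3\mathrm{L}_2(u_1,u_2)$ with the undisturbed product $\mathrm{S}_2(s_1s_3s_2\mathbf{u})\mathrm{S}_1(s_3s_2\mathbf{u})\mathrm{S}_3(s_2\mathbf{u})\mathrm{S}_2(\mathbf{u})=\mathrm{R}_{12}(\mathbf{u})$ on the right, which is precisely the right-hand side of (\ref{RLL'}).

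Since $\mathrm{R}_{12}$ depends on its spectral argument only through the difference $u-v$, and both sides of (\ref{RLL'}) carry that same common label encoded in $\mathbf{u}$ through (\ref{pars}), no separate bookkeeping of $u-v$ is needed. Beyond that there is essentially no obstacle: the argument is a pure algebraic telescoping, and the only genuinely delicate point is to make sure the nested arguments $\mathbf{u},\ s_2\mathbf{u},\ s_3 s_2\mathbf{u},\ s_1 s_3 s_2\mathbf{u}$ appearing in (\ref{R}) match the \emph{current} parameter state at every rewrite; this is exactly why $\mathrm{R}_{12}(\mathbf{u})$ has to be defined with these particular shifted arguments and not, say, all with argument $\mathbf{u}$. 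The substantive mathematical content of the theory---actually constructing operators $\mathrm{S}_i$ that satisfy (\ref{RLL13})--(\ref{RLL4}), and then verifying the Coxeter relations that make $\mathrm{R}_{12}(\mathbf{u})$ independent of the chosen reduced word for $s$---lies outside this theorem and is where the elliptic beta integral and the Bailey lemma will later enter.
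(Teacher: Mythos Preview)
Your proof is correct and follows essentially the same route as the paper's: both proceed by pushing the four factors of $\mathrm{R}_{12}(\mathbf{u})$ through $\mathrm{L}_1(u_1,u_2)\,\sigma_3\,\mathrm{L}_2(v_1,v_2)$ one at a time, using (\ref{RLL2}), then (\ref{RLL4}) together with (\ref{RLL13}) twice, then (\ref{RLL2}) again, and checking that the shifted arguments $\mathbf{u},\,s_2\mathbf{u},\,s_3s_2\mathbf{u},\,s_1s_3s_2\mathbf{u}$ line up with the current L-parameters at each step. The only detail you leave implicit that the paper spells out is the commutation of $\mathrm{S}_1,\mathrm{S}_3$ with $\sigma_3$, which is automatic from the scalarity hypothesis.
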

\begin{proof}
The proof reduces to a direct check, which is quite simple.
Namely, using equation \eqref{RLL2} we have
$$
\mathrm{R}_{12}(\mathbf{u})\,\mathrm{L}_1(u_1,u_2)\,\sigma_3\,\mathrm{L}_2(v_1,v_2)
=\mathrm{S}_2(s_1s_3s_2\mathbf{u})\,
\mathrm{S}_1(s_3s_2\mathbf{u})\, \mathrm{S}_3(s_2\mathbf{u})\,
\mathrm{L}_1(u_1,v_1)\,\sigma_3\,\mathrm{L}_2(u_2,v_2)\mathrm{S}_2(\mathbf{u}).
$$
Using the commutativity of $\mathrm{S}_3(s_2\mathbf{u})$ with $\sigma_3$
and $\mathrm{L}_1(u_1,v_1)$ \eqref{RLL4} and the second equation in
\eqref{RLL13}, we can rewrite this expression as
$$
\mathrm{S}_2(s_1s_3s_2\mathbf{u})\, \mathrm{S}_1(s_3s_2\mathbf{u})\,
\mathrm{L}_1(u_1,v_1)\,\sigma_3\,\mathrm{L}_2(v_2,u_2)\mathrm{S}_3(s_2\mathbf{u})
\,\mathrm{S}_2(\mathbf{u}).
$$
Now we apply the first equation in \eqref{RLL13} and commutativity
of $\mathrm{S}_1(s_3s_2\mathbf{u})$ with $\sigma_3$ and $\mathrm{L}_2(v_2,u_2)$
\eqref{RLL4} and obtain the expression
$$
\mathrm{S}_2(s_1s_3s_2\mathbf{u})\,
\mathrm{L}_1(v_1,u_1)\,\sigma_3\,\mathrm{L}_2(v_2,u_2)
 \mathrm{S}_1(s_3s_2\mathbf{u})\, \mathrm{S}_3(s_2\mathbf{u})
\,\mathrm{S}_2(\mathbf{u}).
$$
Finally, applying equation \eqref{RLL2} with $(u_1,u_2,v_1,v_2)$ replaced by
$(v_1,u_1,v_2,u_2)$ we obtain the right-hand side of equation~(\ref{RLL'}).
\end{proof}

Expression \eqref{R} for the R-matrix corresponds to a special
decomposition of the permutation $s$: $s = s_2 s_1s_3s_2$.
We will see that operators $\mathrm{S}_i$ depend on their parameters
in a special way
\begin{equation}\label{depend}
\mathrm{S}_1(\mathbf{u}) = \mathrm{S}_1(u_1-u_2)\ ;\
\mathrm{S}_2(\mathbf{u}) = \mathrm{S}_2(u_2-v_1)\ ;\
\mathrm{S}_3(\mathbf{u}) = \mathrm{S}_3(v_1-v_2),
\end{equation}
so that the operator $\mathrm{R}_{12}(\mathbf{u})$ depends on
the difference of spectral parameters $u-v$ as it should,
\begin{equation}
\mathrm{R}_{12}(u_1,u_2|v_{1},v_2) =
\mathrm{S}_2(u_1-v_2)\,
\mathrm{S}_1(u_1-v_1)\,\mathrm{S}_3(u_2-v_2)\,
\mathrm{S}_2(u_2-v_1)\,.
\end{equation}
We have thus the following correspondence between permutations $s_i$ and our
operators $\mathrm{S}_i$:
\begin{equation}
s_i \longrightarrow \mathrm{S}_i(\mathbf{u})\ ;\quad s_i s_j \longrightarrow
\mathrm{S}_i(s_j\mathbf{u})\,\mathrm{S}_j(\mathbf{u}).
\end{equation}

In order to prove that we have a representation of the permutation
group $\mathfrak{S}_4$ it remains to prove the defining Coxeter
relations for the generators
\begin{eqnarray}\label{def1}
&& s_i^2 = \II \longrightarrow
\mathrm{S}_i(s_i\mathbf{u})\,\mathrm{S}_i(\mathbf{u})= \II\ ;\quad
s_1s_3 = s_3s_1 \longrightarrow
\mathrm{S}_1(s_3\mathbf{u})\,\mathrm{S}_3(\mathbf{u})=
\mathrm{S}_3(s_1\mathbf{u})\,\mathrm{S}_1(\mathbf{u})\,,
\\  \label{def2}  &&
s_1 s_2 s_1 = s_2 s_1 s_2 \longrightarrow
\mathrm{S}_1(s_2s_1\mathbf{u})\,\mathrm{S}_2(s_1\mathbf{u})\,
\mathrm{S}_1(\mathbf{u})=
\mathrm{S}_2(s_1s_2\mathbf{u})\,\mathrm{S}_1(s_2\mathbf{u})\,
\mathrm{S}_2(\mathbf{u})\,,
\\ \label{def3} &&
s_2 s_3 s_2 = s_3 s_2 s_3 \longrightarrow
\mathrm{S}_2(s_3s_2\mathbf{u})\,\mathrm{S}_3(s_2\mathbf{u})\,
\mathrm{S}_2(\mathbf{u})=
\mathrm{S}_3(s_2s_3\mathbf{u})\,\mathrm{S}_2(s_3\mathbf{u})\,
\mathrm{S}_3(\mathbf{u})\,.
\end{eqnarray}
One can try to work with the equivalent power form of these relations connected
to reflection groups
$$
(s_is_j)^{m_{ij}}=1,\quad
m_{ii}=1,\quad m_{ij}=2,\; |i-j|>1,\quad m_{i,i\pm 1}=3,
$$
but it is much less efficient.
The explicit form of the operators $\mathrm{S}_i$ will be determined
in the next section. The proof of relations \eqref{def1}-\eqref{def3} will be given
in Sect.~\ref{Coxeter}.

Consider now the space $\V_1\otimes \V_2\otimes\V_3\otimes \mathbb{C}^2$,
where $\V_3$ is a new infinite-dimensional space of functions depending
on $z_3\in\mathbb C$. Introduce the third L-matrix $\mathrm{L}_3(w_1,w_2)$,
$w_{1,2}=\frac{w}{2}\pm \eta(\ell_3+\frac{1}{2})$, where $w$ is a new
spectral parameter and $\ell_3$ is a new spin variable in the Sklyanin algebra
generators \eqref{Sklyan} with $z,\dd_z$ replaced by $z_3, \dd_{z_3}$.

It is natural to form the set ${\bf u}=(u_1,u_2,v_1,v_2,w_1,w_2)$
and consider the group of permutations of six parameters $\mathfrak{S}_6$.
In addition to the previous case we have two more elementary permutation
generators $s_4{\bf u}=(u_1,u_2,v_1,w_1,v_2,w_2)$ and
$s_5{\bf u}=(u_1,u_2,v_1,v_2,w_2,w_1)$. We define operators $\mathrm{S}_4$
and $\mathrm{S}_5$ such that the triple $\{\mathrm{S_3},\mathrm{S}_4,
\mathrm{S}_5\}$ has the same properties as the triple $\{\mathrm{S}_1,\mathrm{S}_2,
\mathrm{S}_3\}$ after the replacement of parameters $(u_1,u_2,v_1,v_2)$
by $(v_1,v_2,w_1,w_2)$. More precisely, we demand that
\begin{eqnarray}\nonumber &&
\mathrm{S}_5(\mathbf{u})\,\mathrm{L}_3(w_1,w_2) =
\mathrm{L}_3(w_2,w_1)\,\mathrm{S}_5(\mathbf{u})\ ,\
\\ &&
\mathrm{S}_4(\mathbf{u})\,\mathrm{L}_2(v_1,v_2)\,\sigma_3\,\mathrm{L}_3(w_1,w_2)=
\mathrm{L}_2(v_1,w_1)\,\sigma_3\,\mathrm{L}_3(v_2,w_2)\,\mathrm{S}_4(\mathbf{u})\, ,
\label{RLL5}\end{eqnarray}
and that $\mathrm{S}_5$ commutes with $\mathrm{S}_{1,2,3}$ and
$\mathrm{S}_4$ commutes with $\mathrm{S}_{1,2}$.

Introduce the composite operator similar to $\mathrm{R}_{12}({\bf u})$,
\begin{eqnarray}\label{R23} &&
\mathrm{R}_{23}({\bf u})\equiv\mathrm{R}_{23}(v_1,v_2|w_1,w_2) =
\mathrm{S}_4(s_3s_5s_4\mathbf{u})\,
\mathrm{S}_3(s_5s_4\mathbf{u})\,\mathrm{S}_5(s_4\mathbf{u})\,
\mathrm{S}_4(\mathbf{u}) \qquad
\\ && \makebox[4em]{}
=\mathrm{S}_4(v_1-w_2)\,
\mathrm{S}_3(v_1-w_1)\,\mathrm{S}_5(v_2-w_2)\,
\mathrm{S}_4(v_2-w_1).
\nonumber\end{eqnarray}

To define the R-matrix $\mathrm{R}_{13}$ we consider the action of permutation
operators $\P_{jk}$ on $\mathrm{S}_i({\bf u})$.
Conjugating relations \eqref{RLL13} by $\P_{12}$, one can see that
$\P_{12}\mathrm{S}_3\P_{12}$ should be identified with $\mathrm{S}_1$
having the same argument. Namely,
$$
\P_{12}\mathrm{S}_1(u_1-u_2)=\mathrm{S}_3(u_1-u_2)\P_{12},\qquad
\P_{12}\mathrm{S}_3(v_1-v_2)=\mathrm{S}_1(v_1-v_2)\P_{12}.
$$

Conjugating similarly \eqref{RLL2}, one cannot directly deduce properties
of $\mathrm{S}_2$. As we will see from the explicit form of this operator
derived later, $\P_{12}\mathrm{S}_2({\bf u})=\mathrm{S}_2({\bf u})\P_{12}$.
Relations $\P_{12}\mathrm{S}_{4,5}({\bf u})=\mathrm{S}_{4,5}({\bf u})\P_{12}$
are evident. Analogous considerations yield nontrivial commutation relations
$$
\P_{13}\mathrm{S}_2(u_2-v_1)=\mathrm{S}_4(u_2-v_1)\P_{13},\qquad
\P_{23}\mathrm{S}_5(w_1-w_2)=\mathrm{S}_3(w_1-w_2)\P_{23},
$$
etc. However, the operator $\P_{12}\mathrm{S}_4\P_{12}=\P_{23}\mathrm{S}_2\P_{23}$
cannot be expressed in terms of $\mathrm{S}_i({\bf u})$-operators.
Now we define
\begin{eqnarray} &&
\mathrm{R}_{13}({\bf u})\equiv \mathrm{R}_{13}(u_1,u_2|w_1,w_2)
= \P_{12}\mathrm{R}_{23}(u_1,u_2|w_1,w_2)\P_{12}
\nonumber \\ &&
=\P_{12}\mathrm{S}_4(u_1-w_2)\,
\mathrm{S}_3(u_1-w_1)\,\mathrm{S}_5(u_2-w_2)\,
\mathrm{S}_4(u_2-w_1)\P_{12}
\nonumber \\&&
=\P_{12}\mathrm{S}_4(u_1-w_2)\P_{12}\,
\mathrm{S}_1(u_1-w_1)\,\mathrm{S}_5(u_2-w_2)\,
\P_{12}\mathrm{S}_4(u_2-w_1)\P_{12}.
\label{R23'}\end{eqnarray}
Analogously,
$$
\mathrm{R}_{13}({\bf u})
= \P_{23}\mathrm{R}_{12}(u_1,u_2|w_1,w_2)\P_{23}
=\P_{23}\mathrm{S}_2(u_1-w_2)\P_{23}\,
\mathrm{S}_1(u_1-w_1)\,\mathrm{S}_5(u_2-w_2)\,
\P_{23}\mathrm{S}_2(u_2-w_1)\P_{23}.
$$
We thus see that the operator $\mathrm{R}_{13}({\bf u})$ cannot be
factorized purely in terms of the operators $\mathrm{S}_i$.

\begin{theorem}
Suppose we have a set of well-defined operators $\mathrm{S}_i({\bf u}),
i=1,\ldots, 5,$ satisfying intertwining relations \eqref{RLL13}-\eqref{RLL4},
\eqref{RLL5} and the $\mathfrak{B}_6$-braid group generating relations
\begin{equation}
\mathrm{S}_j\mathrm{S}_k=\mathrm{S}_k\mathrm{S}_j,
\  |j-k|>1, \qquad \mathrm{S}_j\mathrm{S}_{j+1}\mathrm{S}_j
=\mathrm{S}_{j+1}\mathrm{S}_j\mathrm{S}_{j+1}.
\label{braid}\end{equation}
Then the $\mathrm{R}$-matrices
$$
\mathbb{R}_{12} (u-v)=\mathbb{P}_{12}\mathrm{R}_{12}(\mathbf{u}), \quad
\mathbb{R}_{23} (v-w)=\mathbb{P}_{23}\mathrm{R}_{23}(\mathbf{u}), \quad
\mathbb{R}_{13} (u-w)=\mathbb{P}_{13}\mathrm{R}_{13}(\mathbf{u}), \quad
$$
where operators $\mathrm{R}_{ij}(\mathbf{u})$ are fixed in \eqref{R},
\eqref{R23}, and \eqref{R23'}, satisfy the Yang-Baxter equation
\be
\mathbb{R}_{12} (u-v)\,\mathbb{R}_{13}(u-w)\, \mathbb{R}_{23}(v-w)
=\mathbb{R}_{23}(v-w)\,\mathbb{R}_{13}(u-w)\,\mathbb{R}_{12}(u-v).
\label{YBEfull}\ee
\end{theorem}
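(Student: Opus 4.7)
The plan is to reduce \eqref{YBEfull} to an identity in the Coxeter group $\mathfrak{S}_6$ generated by elementary transpositions that correspond to the $\mathrm{S}_k$, exploiting the fact that each $\mathbb{R}_{ij}$ realizes a block transposition on the extended parameter set $\mathbf{u}=(u_1,u_2,v_1,v_2,w_1,w_2)$.

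First I would write each $\mathbb{R}_{ij}$ uniformly as the product of an external permutation operator $\mathbb{P}_{ij}$ and a length-four word in the generators $\mathrm{S}_1,\ldots,\mathrm{S}_5$. Formulas \eqref{R} and \eqref{R23} already provide this for $\mathbb{R}_{12}=\mathbb{P}_{12}\mathrm{R}_{12}$ and $\mathbb{R}_{23}=\mathbb{P}_{23}\mathrm{R}_{23}$; for $\mathbb{R}_{13}=\mathbb{P}_{13}\mathrm{R}_{13}$ one uses \eqref{R23'} together with the conjugation rules of $\mathbb{P}_{12}$ (or $\mathbb{P}_{23}$) on the $\mathrm{S}_k$ tabulated in the text. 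Under the correspondence $\mathrm{S}_i\leftrightarrow s_i$, the four-letter word inside each $\mathbb{R}_{ij}$ is a reduced expression for the appropriate block transposition on $\mathbf{u}$.

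Next I would expand both sides of \eqref{YBEfull} and push all external $\mathbb{P}_{ij}$ to the left using the intertwining rules $\mathbb{P}_{ij}\mathrm{S}_k=\mathrm{S}_{\pi_{ij}(k)}\mathbb{P}_{ij}$ together with the $\mathfrak{S}_3$-braid relation $\mathbb{P}_{12}\mathbb{P}_{23}\mathbb{P}_{12}=\mathbb{P}_{23}\mathbb{P}_{12}\mathbb{P}_{23}$. On each side of \eqref{YBEfull} the accumulated external factor then collapses to one and the same operator — the permutation implementing the full reversal of the three spaces $V_1,V_2,V_3$. What remains is an identity between two length-twelve words in the $\mathrm{S}_k$, and both are reduced expressions in $\mathfrak{S}_6$ for the same element, namely the length-$12$ permutation sending $(u_1,u_2,v_1,v_2,w_1,w_2)$ to $(w_1,w_2,v_1,v_2,u_1,u_2)$.

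The last step is an application of Matsumoto's theorem: any two reduced expressions for the same Coxeter-group element are related by a finite sequence of braid moves of exactly the forms in \eqref{braid}. Each such move is a genuine operator identity by hypothesis, and — crucially — the special argument dependence \eqref{depend} of the $\mathrm{S}_k$ ensures that the argument shifts on the two sides of each braid move match automatically (the arguments on both sides update via the same local piece of $\mathfrak{S}_6$). Assembling these moves yields the equality of the two length-twelve words, and hence \eqref{YBEfull}. The main obstacle will be the bookkeeping in the second step: since certain conjugates such as $\mathbb{P}_{12}\mathrm{S}_4\mathbb{P}_{12}$ are explicitly \emph{not} expressible in the $\mathrm{S}_k$-alphabet alone, the $\mathbb{P}_{ij}$'s cannot be commuted freely through arbitrary $\mathrm{S}_k$, and one must organize the migration so that on each side they reassemble into identical external prefactors before the braid reduction of the interior $\mathrm{S}_k$-word is invoked.
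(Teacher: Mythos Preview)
Your proposal is correct in spirit and shares the paper's core idea---that the Yang-Baxter equation is a word identity in the braid group $\mathfrak{B}_6$---but the order of operations is reversed relative to the paper, and that reversal creates precisely the obstacle you flag at the end.

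The paper proceeds as follows: it first proves the auxiliary relation \eqref{RRR},
\[
\mathrm{R}_{23}(u|v)\,\mathrm{R}_{12}(u|w)\,\mathrm{R}_{23}(v|w)
=\mathrm{R}_{12}(v|w)\,\mathrm{R}_{23}(u|w)\,\mathrm{R}_{12}(u|v),
\]
which involves \emph{only} $\mathrm{R}_{12}$ and $\mathrm{R}_{23}$ and hence lives entirely in the $\mathrm{S}_k$-alphabet. This is established by an explicit chain of braid moves on the two length-$12$ words. Only \emph{after} this is done does the paper multiply both sides by $\mathbb{P}_{12}\mathbb{P}_{13}\mathbb{P}_{23}=\mathbb{P}_{23}\mathbb{P}_{13}\mathbb{P}_{12}$ and use the conjugation rules at the level of the full $\mathrm{R}_{ij}$ operators (e.g.\ $\mathbb{P}_{13}\mathbb{P}_{23}\mathrm{R}_{23}\mathbb{P}_{23}\mathbb{P}_{13}=\mathrm{R}_{12}$) to reach \eqref{YBEfull}. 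Because the $\mathbb{P}$'s are only ever pushed past complete $\mathrm{R}_{ij}$-blocks, the problematic conjugates like $\mathbb{P}_{12}\mathrm{S}_4\mathbb{P}_{12}$ never appear.

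Your plan---push $\mathbb{P}$'s to the left first, then reduce the interior $\mathrm{S}_k$-word---runs straight into that conjugate, since $\mathrm{R}_{13}$ cannot be written as a word in $\mathrm{S}_1,\ldots,\mathrm{S}_5$ alone (as the text states explicitly just before the theorem). Any successful ``organization of the migration'' will effectively force you to regroup the $\mathbb{P}$'s around whole $\mathrm{R}_{ij}$-blocks, at which point you have rediscovered the paper's route via \eqref{RRR}. Your appeal to Matsumoto's theorem in place of the paper's explicit braid-move chain is a legitimate and more conceptual shortcut (both words are indeed reduced, of length $12$, for the permutation $(u_1,u_2,v_1,v_2,w_1,w_2)\mapsto(w_1,w_2,v_1,v_2,u_1,u_2)$), but it does not help with the $\mathbb{P}$-bookkeeping.
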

\begin{proof}
Consider the following permutation of parameters in
the product of three $\mathrm{L}$-operators:
$$
\mathrm{L}_1 (u_1,u_2)\,\sigma_3\, \mathrm{L}_2(v_1,v_2)\,\sigma_3\,
\mathrm{L}_3(w_1,w_2) \to \mathrm{L}_1 (w_1,w_2)\,\sigma_3\,
\mathrm{L}_2(v_1,v_2) \,\sigma_3\,\mathrm{L}_3(u_1,u_2).
$$
It can be realized in two different ways as shown on the figure below

\medskip

\vspace{5mm} \unitlength 0.8mm \linethickness{0.7pt}
\begin{picture}(152.67,83.33)
\put(30.00,50.00){\makebox(0,0)[cc]
{$\mathrm{L}_1(u_{1},u_2)\sigma_3\mathrm{L}_2(v_{1},v_2)\sigma_3
\mathrm{L}_3(w_{1},w_2)$}}
\put(32.00,55.00){\vector(0,1){20.00}}
\put(57.00,65.00){\makebox(0,0)[cc]
{$\mathrm{R}_{12}(u_{1},u_2|v_{1},v_2)$}}
\put(30.00,80.00){\makebox(0,0)[cc]
{$\mathrm{L}_1(v_{1},v_2)\sigma_3\mathrm{L}_2(u_{1},u_2)\sigma_3
\mathrm{L}_3(w_{1},w_2)$}}
\put(70.00,80.00){\vector(1,0){30.00}}
\put(85.00,85.00){\makebox(0,0)[cc]
{$\mathrm{R}_{23}(u_{1},u_2|w_1,w_2)$}}
\put(140.00,80.00){\makebox(0,0)[cc]
{$\mathrm{L}_1(v_{1},v_2)\sigma_3\mathrm{L}_2(w_{1},w_2)\sigma_3
\mathrm{L}_3(u_{1},u_2)$}}
\put(140.00,75.00){\vector(0,-1){20.00}}
\put(117.00,65.00){\makebox(0,0)[cc]
{$\mathrm{R}_{12}(v_1,v_2|w_{1},w_2)$}}
\put(140.00,50.00){\makebox(0,0)[cc]
{$\mathrm{L}_1(w_{1},w_2)\sigma_3\mathrm{L}_2(v_{1},v_2)\sigma_3
\mathrm{L}_3(u_{1},u_2)$}}
\put(32.00,45.00){\vector(0,-1){20.00}}
\put(57.00,35.00){\makebox(0,0)[cc]
{$\mathrm{R}_{23}(v_{1},v_{2}|w_1,w_2)$}}
\put(30.00,20.00){\makebox(0,0)[cc]
{$\mathrm{L}_1(u_{1},u_2)\sigma_3\mathrm{L}_2(w_{1},w_2)\sigma_3
\mathrm{L}_3(v_{1},v_2)$}}
\put(70.00,20.00){\vector(1,0){30.00}}
\put(85.00,15.00){\makebox(0,0)[cc]
{$\mathrm{R}_{12}(u_1,u_{2}|w_1,w_{2})$}}
\put(140.00,20.00){\makebox(0,0)[cc]
{$\mathrm{L}_1(w_{1},w_2)\sigma_3\mathrm{L}_2(u_{1},u_2)\sigma_3
\mathrm{L}_3(v_{1},v_2)$}}
\put(140.00,25.00){\vector(0,1){20.00}}
\put(117.00,35.00){\makebox(0,0)[cc]
{$\mathrm{R}_{23}(u_{1},u_2|v_{1},v_2)$}}
\end{picture}

The condition  of commutativity of this diagram indicates that
\begin{eqnarray} \nonumber &&
\mathrm{R}_{23}(u_{1},u_{2}|v_1,v_2)\,\mathrm{R}_{12}(u_1,u_{2}|w_1,w_{2})\,
\mathrm{R}_{23}(v_{1},v_2|w_{1},w_2)\,
\\ &&
=
\mathrm{R}_{12}(v_{1},v_2|w_{1},w_2)\,\mathrm{R}_{23}(u_{1},u_2|w_1,w_2)\,
\mathrm{R}_{12}(u_1,u_2|v_{1},v_2).
\label{RRR}\end{eqnarray}

Let us prove this equality using the braid group generating relations \eqref{braid}
for operators $\mathrm{S}_j$. We start from the identity
$$
\mathrm{S}_2\mathrm{S}_3\mathrm{S}_1\mathrm{S}_2\cdot
\mathrm{S}_4\mathrm{S}_3\mathrm{S}_5\mathrm{S}_4\cdot
\mathrm{S}_2\mathrm{S}_1\mathrm{S}_3\mathrm{S}_2
=
\mathrm{S}_2\mathrm{S}_3\mathrm{S}_4\mathrm{S}_1\cdot
\mathrm{S}_3\mathrm{S}_2\mathrm{S}_3\cdot\mathrm{S}_1
\mathrm{S}_5\mathrm{S}_4\mathrm{S}_3\mathrm{S}_2.
$$
The left-hand side is equal to the product of R-matrices
in the left-hand side of \eqref{RRR} under the taken convention
$\mathrm{S}_j\mathrm{S}_k:=\mathrm{S}_j(s_k{\bf u})\mathrm{S}_k({\bf u})$.
The right-hand side is obtained by permuting $\mathrm{S}_1\mathrm{S}_2$
with neighboring $\mathrm{S}_4$,  $\mathrm{S}_5\mathrm{S}_4$
with neighboring $\mathrm{S}_2\mathrm{S}_1$, and application
of the cubic relation from  \eqref{braid} to
the emerging product $\mathrm{S}_2\mathrm{S}_3\mathrm{S}_2$.
Now we replace $\mathrm{S}_1\cdot\mathrm{S}_3$ by $\mathrm{S}_3\mathrm{S}_1$,
permute $\mathrm{S}_1\mathrm{S}_5$ with neighboring $\mathrm{S}_3$, apply the
cubic relation to the emerging product $\mathrm{S}_1\mathrm{S}_2\mathrm{S}_1$
and obtain the left hand side of the relation
$$
\mathrm{S}_2\mathrm{S}_3\mathrm{S}_4\mathrm{S}_3\cdot
\mathrm{S}_2\mathrm{S}_1\mathrm{S}_2\cdot\mathrm{S}_5
\mathrm{S}_3\mathrm{S}_4\mathrm{S}_3\mathrm{S}_2
=
\mathrm{S}_4\mathrm{S}_2\mathrm{S}_3\mathrm{S}_2\cdot
\mathrm{S}_4\mathrm{S}_1\mathrm{S}_5\mathrm{S}_4\cdot
\mathrm{S}_2\mathrm{S}_3\mathrm{S}_2\mathrm{S}_4.
$$
The right-hand side is obtained after applying the cubic
relation to two products $\mathrm{S}_3\mathrm{S}_4\mathrm{S}_3$
and permuting three operators $\mathrm{S}_2$ with neighboring $\mathrm{S}_4$'s
and one $\mathrm{S}_2$ with neighboring $\mathrm{S}_5\mathrm{S}_4$. Now we permute
neighboring $\mathrm{S}_4$ and $\mathrm{S}_1$ and apply
cubic relations to the products $\mathrm{S}_2\mathrm{S}_3\mathrm{S}_2$ (twice)
and $\mathrm{S}_4\mathrm{S}_5\mathrm{S}_4$. This yields the left-hand
side of the relation
$$
\mathrm{S}_4\mathrm{S}_3\mathrm{S}_2\mathrm{S}_3\cdot
\mathrm{S}_1\mathrm{S}_5\mathrm{S}_4\mathrm{S}_5\cdot
\mathrm{S}_3\mathrm{S}_2\mathrm{S}_3\mathrm{S}_4
=
\mathrm{S}_4\mathrm{S}_3\mathrm{S}_5\mathrm{S}_4\cdot
\mathrm{S}_2\mathrm{S}_1\mathrm{S}_3\mathrm{S}_2\cdot
\mathrm{S}_4\mathrm{S}_3\mathrm{S}_5\mathrm{S}_4.
$$
The right-hand side expression is obtained after pulling
$\mathrm{S}_5$-operators to the left and right from
$\mathrm{S}_4$, permuting neighboring $\mathrm{S}_3$
and $\mathrm{S}_1$, applying the cubic
relation to the emerging product $\mathrm{S}_3\mathrm{S}_4\mathrm{S}_3$,
and, finally, pulling $\mathrm{S}_4$-operators to the left and right
from $\mathrm{S}_3$. And, evidently, it coincides with the right-hand side
expression of equality \eqref{RRR}.

Let us multiply the left-hand side expression in \eqref{RRR} by the operator
$\P_{12}\P_{13}\P_{23}$ and the right-hand side expression by the equal operator
$\P_{23}\P_{13}\P_{12}$.
Pulling permutation operators $\mathbb{P}_{jk}$ to appropriate R-matrices
using relations
$$
\P_{13}\P_{23}\mathrm{R}_{23}(u_1,u_2|v_1,v_2)\P_{23}\P_{13}=
\P_{13}\mathrm{S}_4(u_1-v_2)\,
\mathrm{S}_5(u_1-v_1)\,\mathrm{S}_3(u_2-v_2)\,
\mathrm{S}_4(u_2-v_1)\P_{13}
$$
$$
=\mathrm{S}_2(u_1-v_2)\,
\mathrm{S}_1(u_1-v_1)\,\mathrm{S}_3(u_2-v_2)\,
\mathrm{S}_2(u_2-v_1)=\mathrm{R}_{12}({\bf u})
$$
and
$$
\P_{13}\P_{12}\mathrm{R}_{12}(v_1,v_2|w_1,w_2)\P_{12}\P_{13}
=\P_{13}\mathrm{S}_2(v_1-w_2)\,
\mathrm{S}_3(v_1-w_1)\,\mathrm{S}_1(v_2-w_2)\,
\mathrm{S}_2(v_2-w_1)\,\P_{13}
$$
$$
=\mathrm{S}_4(v_1-w_2)\,
\mathrm{S}_3(v_1-w_1)\,\mathrm{S}_5(v_2-w_2)\,
\mathrm{S}_4(v_2-w_1)
=\mathrm{R}_{23}({\bf u}),
$$
one comes to the desired equation \eqref{YBEfull}.
\end{proof}

From this consideration we conclude that the Yang-Baxter
relation \eqref{RRR} is nothing else than a word identity in
the group algebra of the braid group $\mathfrak{B}_6$.
Equation \eqref{YBEfull} is more complicated since it
involves the external operators $\P_{jk}$. Note that
the described proof does not require the condition $\mathrm{S}_j^2=\II$
reducing $\mathfrak{B}_6$ to the permutation group $\mathfrak{S}_6$.
The operators $\mathrm{S}_j$ which we construct below do
satisfy relations $\mathrm{S}_j^2=\II$ after the analytical continuation
in parameters and, so, they generate the $\mathfrak{S}_6$-group.

\section{Elementary transpositions and intertwining operators}
\label{Sk} \setcounter{equation}{0}

We shall use the factorized form of the L-operator
which allows one to  simplify considerably all calculations
\begin{equation}
\mathrm{L}(u_1,u_2)= \frac{1}{\theta_1(2 z)} \cdot \mathrm{M}(z - u_1 ; z
+u_1)\cdot \left(
\begin{array}{cc}
\mathrm{e}^{\eta\dd} &0\\
0 & \mathrm{e}^{-\eta\dd }
\end{array} \right )\cdot \mathrm{N}(z - u_2 ; z +u_2),
\label{factL}\end{equation}
where
\begin{equation}\label{MN}
\mathrm{M}(a ; b) = \left(
\begin{array}{cc}
\bar{\theta}_3\left(a\right) & -\bar{\theta}_3\left(b\right) \\
-\bar{\theta}_4\left(a\right) & \bar{\theta}_4\left(b\right)
\end{array} \right )\ \ ;\ \
\mathrm{N}(a ; b) = \left(
\begin{array}{cc}
\bar{\theta}_4\left(b\right) & \bar{\theta}_3\left(b\right) \\
\bar{\theta}_4\left(a\right) & \bar{\theta}_3\left(a\right)
\end{array} \right )\,.
\end{equation}
To prove this factorization of the L-operator one has to multiply
explicitly all three matrices involved in it and use the addition formula
$$
\theta_1(x +y) \theta_1(x - y)+ \theta_4(x + y)\theta_4(x - y)
 =\bar\theta_4(x)\bar\theta_3(y)
$$
and its variations which are listed in the Appendix.
The product of matrices $N$ and $M$ has the form
\begin{equation}\label{NM}
\mathrm{N}(a_1 ; b_1)\cdot \mathrm{M}(a_2 ; b_2)= 2\cdot\left(
\begin{array}{cc}
\theta_1\left(b_1-a_2\right)\theta_1\left(b_1+a_2\right) &
-\theta_1\left(b_1-b_2\right)\theta_1\left(b_1+b_2\right) \\
\theta_1\left(a_1-a_2\right)\theta_1\left(a_1+a_2\right) &
-\theta_1\left(a_1-b_2\right)\theta_1\left(a_1+b_2\right)
\end{array} \right ),
\end{equation}
in particular,
\begin{equation}\label{NM1}
\mathrm{N}(a ; b)\cdot \mathrm{M}(a ; b) =
-2\cdot\theta_1\left(a-b\right)\theta_1\left(b+a\right) \left(
\begin{array}{cc}
1 & 0 \\
0 & 1
\end{array} \right )\ .
\end{equation}
To avoid lengthy formulae we use compact notation
$\mathrm{N}(a\mp b)\equiv\mathrm{N}(a-b ; a+b)$,
$\mathrm{M}(a\mp b)\equiv\mathrm{M}(a-b ; a+b)$
and $\theta_j(a,b)=\theta_j(a)\theta_j(b)$,
$\theta_j(a\mp b) = \theta_j\left(a-b\right)\theta_j\left(a+b\right),$
so that
$$
\mathrm{L}(u_1,u_2)= \frac{1}{\theta_1(2 z)} \cdot \mathrm{M}(z \mp u_1)\cdot \left(
\begin{array}{cc}
\mathrm{e}^{\eta\dd} &0\\
0 & \mathrm{e}^{-\eta\dd }
\end{array} \right )\cdot \mathrm{N}(z \mp u_2).
$$

Consider first the defining relation for operator $\mathrm{S}_2$ (\ref{RLL2}),
$$
\mathrm{S}_2\,\underline{\mathrm{M}(z_1\mp u_1)}\, \left(
\begin{array}{cc}
\mathrm{e}^{\eta\dd_1} &0\\
0 & \mathrm{e}^{-\eta\dd_1}
\end{array} \right )\,
\mathrm{N}(z_1\mp u_2)\,\sigma_3\, \mathrm{M}(z_2\mp v_1)\,
\left(
\begin{array}{cc}
\mathrm{e}^{\eta\dd_2} &0\\
0 & \mathrm{e}^{-\eta\dd_2}
\end{array} \right )\,
\underline{\mathrm{N}(z_2\mp v_2)} =
$$
$$ =
\underline{\mathrm{M}(z_1\mp u_1)}\, \left(
\begin{array}{cc}
\mathrm{e}^{\eta\dd_1} &0\\
0 & \mathrm{e}^{-\eta\dd_1}
\end{array} \right )\,
\mathrm{N}(z_1\mp v_1)\,\sigma_3\, \mathrm{M}(z_2\mp u_2)\,
\left(
\begin{array}{cc}
\mathrm{e}^{\eta\dd_2} &0\\
0 & \mathrm{e}^{-\eta\dd_2}
\end{array} \right )\,
\underline{\mathrm{N}(z_2\mp v_2)}\,\mathrm{S}_2.
$$
We underlined the matrices which can be canceled under the commutativity
condition $[\mathrm{S}_2,z_1]=[\mathrm{S}_2,z_2]=0$. This
observation suggests that $\mathrm{S}_2$ is just a multiplication operator:
$$
\left[\mathrm{S}_2\Phi\right](z_1,z_2) =
\mathrm{S}(z_1,z_2)\cdot \Phi(z_1,z_2)\,.
$$
Consequently, the operator $\mathrm{S}_2$ commutes with the
matrices $\mathrm{M}(z_1\mp u_1)$ and
$\mathrm{N}(z_2\mp v_2)$, so that they both cancel
from the equation and we obtain a much simpler defining relation for the
function $\mathrm{S}(z_1,z_2)$:
$$
\mathrm{S}(z_1,z_2)\,\left(
\begin{array}{cc}
\mathrm{e}^{\eta\dd_1} &0\\
0 & \mathrm{e}^{-\eta\dd_1}
\end{array} \right )\cdot
\mathrm{N}(z_1\mp u_2)\,\sigma_3\,\mathrm{M}(z_2\mp v_1)\,
\left(
\begin{array}{cc}
\mathrm{e}^{\eta\dd_2} &0\\
0 & \mathrm{e}^{-\eta\dd_2}
\end{array} \right ) =
$$
$$ =
\left(
\begin{array}{cc}
\mathrm{e}^{\eta\dd_1} &0\\
0 & \mathrm{e}^{-\eta\dd_1}
\end{array} \right )\cdot
\mathrm{N}(z_1\mp v_1)\,\sigma_3\,\mathrm{M}(z_2\mp u_2)\,
\left(
\begin{array}{cc}
\mathrm{e}^{\eta\dd_2} &0\\
0 & \mathrm{e}^{-\eta\dd_2}
\end{array} \right )\,\mathrm{S}(z_1,z_2),
$$
or in the equivalent form
$$
\left(
\begin{array}{cc}
\mathrm{S}(z_1-\eta,z_2) &0\\
0 & \mathrm{S}(z_1+\eta,z_2)
\end{array} \right )\,
\mathrm{N}(z_1\mp u_2)\,\sigma_3\,\mathrm{M}(z_2\mp v_1)=
$$
$$
=
\mathrm{N}(z_1\mp v_1)\,\sigma_3\, \mathrm{M}(z_2\mp u_2)\,
\left(
\begin{array}{cc}
\mathrm{S}(z_1,z_2+\eta) &0\\
0 & \mathrm{S}(z_1,z_2-\eta)
\end{array} \right ).
$$
Using a theta functions identity given in the Appendix one can see that
\begin{equation}\label{NsM}
\mathrm{N}(a_1 ; b_1)\cdot \sigma_3\cdot\mathrm{M}(a_2 ; b_2)= 2\cdot\left(
\begin{array}{cc}
\theta_4\left(b_1\mp a_2\right) &
-\theta_4\left(b_1\mp b_2\right) \\
\theta_4\left(a_1\mp a_2\right) &
-\theta_4\left(a_1\mp b_2\right)
\end{array} \right )\,.
\end{equation}

The derived matrix equation can be simplified further on,
since a number of theta functions depending on the combination of parameters
$u_2 + v_1$ drops out from it. As a result, we come to a system of four linear finite
difference equations of the first order
$$
\theta_4(z_1+z_2 +u_2-v_1)\, \mathrm{S}(z_1-\eta,z_2) =
\theta_4(z_1+z_2 +v_1-u_2)\, \mathrm{S}(z_1,z_2+\eta)\ ,
$$
$$
\theta_4(z_1+z_2 -u_2 +v_1)\, \mathrm{S} (z_1 + \eta ,z_2) =
\theta_4(z_1+z_2 -v_1 +u_2)\, \mathrm{S}(z_1,z_2-\eta)\ ,
$$
$$
\theta_4(z_1-z_2 + u_2 - v_1)\, \mathrm{S} (z_1 - \eta,z_2) =
\theta_4(z_1-z_2 +v_1 -u_2)\, \mathrm{S} (z_1,z_2-\eta)\ ,
$$
$$
\theta_4(z_1-z_2 -u_2 +v_1)\, \mathrm{S} (z_1 + \eta,z_2) =
\theta_4(z_1-z_2 -v_1 +u_2)\, \mathrm{S} (z_1,z_2+\eta)\ .
$$
Their structure suggests  to look for the solution in the factorized form
$$
\mathrm{S}(z_1, z_2) = \Phi_+ (z_1+z_2) \cdot \Phi_- (z_1-z_2).
 $$
Then, in each equality one of the $\Phi_{\pm}$-factors drops out and we obtain
equations
\begin{equation}\label{pm}
\theta_4(z+u_2-v_1)\,\Phi_{\pm} (z-\eta) =
\theta_4(z-u_2+v_1)\,\Phi_{\pm} (z+\eta),
\end{equation}
or, in the equivalent form,
$$
\Phi_{\pm}(z+2\eta) =
\mathrm{e}^{2\pi \textup{i} (u_2-v_1)}\,\frac{\theta_1(z+u_2-v_1+\eta+\textstyle{\frac{\tau}{2}})}
{\theta_1(z-u_2+v_1+\eta+\textstyle{\frac{\tau}{2}})} \cdot
\Phi_{\pm}(z).
$$
In this section we suppose that Im$(\eta)>0$. Then
a particular solution of this equation is described by a ratio
of elliptic gamma functions $\Gamma(z) = \Gamma(z|\tau,2\eta)$
(see formula \eqref{egamma_a} in the Appendix),
\begin{equation}\label{ab}
\Phi(z+2\eta) = \mathrm{e}^{\pi \textup{i} (a-b)}\,
\frac{\theta_1(z+a)}{\theta_1(z+b)} \cdot \Phi(z)\ ;
\qquad \Phi(z) = \frac{\Gamma(z+a|\tau, 2\eta)}{\Gamma(z+b|\tau,2\eta)}, \;
\text{Im}(\eta), \text{Im}(\tau)>0.
\end{equation}
The equation for $\Phi(z)$ does not assume restrictions on $\eta$.
Its solution and corresponding intertwining operators valid for Im$(\eta)<0$
use the function $\Gamma(z|\tau,-2\eta)$ and for Im$(\eta)=0$
one needs the modified elliptic gamma function \cite{S2,spi:essays}. These solutions
will be described in a special section below.

Using \eqref{ab} we can write the general solution $\mathrm{S}(z_1,z_2)$ in
the form
\begin{equation}\label{S2}
\mathrm{S}(z_1,z_2) =
\frac{\Gamma(z_1+z_2 + u_2  - v_1 + \eta+\textstyle{\frac{\tau}{2}})}{\Gamma(z_1+z_2 - u_2+
v_1 + \eta+\textstyle{\frac{\tau}{2}})} \frac{\Gamma(z_1-z_2 + u_2 - v_1
+\eta+\textstyle{\frac{\tau}{2}})}{\Gamma(z_1-z_2 - u_2 + v_1  +
\eta+\textstyle{\frac{\tau}{2}}) }\varphi_2(z_1,z_2),
\end{equation}
where $\varphi_2$ is an arbitrary function satisfying the periodicity
conditions
$$
\varphi_2(z_1+2\eta,z_2)=\varphi_2(z_1,z_2+2\eta)=\varphi_2(z_1+\eta,z_2+\eta)
=\varphi_2(z_1,z_2).
$$
Using the reflection formula for the elliptic gamma function \eqref{refl}
and the notation
$$
\Gamma\left(\pm x \pm y +a\right)\equiv
\Gamma\left(-x - y +a\right)\,\Gamma\left(- x + y +a\right)
\,\Gamma\left(x - y +a\right)\,\Gamma\left(x + y +a\right),
$$
it is possible to rewrite formula \eqref{S2} in a much more compact form
$$
\mathrm{S}(z_1,z_2) =
\Gamma\left(\pm z_1\pm z_2 + u_2  - v_1 + \eta+\textstyle{\frac{\tau}{2}|\tau,2\eta}\right)
\varphi_2(z_1,z_2).
$$

The functional freedom $\varphi_2(z_1,z_2)$ strongly influencing the final results
will be fixed in the section dedicated to the elliptic modular double.
We remark also that our
choice of the L-operator in the form $\sigma_3\mathrm{L}$ is done
for technical reasons in order to have permutational symmetry $z_1\leftrightarrow z_2$
in \eqref{S2}, i.e. $\P_{12}\mathrm{S}_2({\bf u})=\mathrm{S}_2({\bf u})\P_{12}$,
 absent for other choices.

Let us consider now defining equations~(\ref{RLL13}) for
operators $\mathrm{S}_1$ and $\mathrm{S}_3$. Permutation of
the parameters $ u_1 = \frac{u}{2} + \eta\,(\ell_1+\frac{1}{2})$ and $u_2 =
\frac{u}{2} -\eta\,(\ell_1+\frac{1}{2}) $ is equivalent to the change of the
spin $\ell_1 \to -1-\ell_1$ and similarly the permutation of
parameters $ v_1 = \frac{v}{2} + \eta\,(\ell_2+\frac{1}{2})$ and $v_2 =
\frac{u}{2} -\eta\,(\ell_2+\frac{1}{2})$ is equivalent to the change
$\ell_2 \to -1-\ell_2$. In the L-matrix~(\ref{Lax}) only
generators $\mathbf{S}^a$ depend on the spin, therefore defining
equations~(\ref{RLL13}) can be rewritten in terms of the
$\mathbf{S}^a$-generators alone:
\begin{equation}
\mathrm{S}_1\cdot\mathbf{S}^a(\ell_1) =
\mathbf{S}^a(-1-\ell_1)\cdot \mathrm{S}_1\,, \qquad
\mathrm{S}_3\cdot \mathbf{S}^a(\ell_2) =
\mathbf{S}^a(-1-\ell_2)\cdot \mathrm{S}_3\, ,
\label{inter1}\end{equation}
where we explicitly indicate the spin $\ell$ dependence of
$\mathbf{S}^a$-operators. The meaning of these relations is the following:
the operator $\mathrm{S}_1$ intertwines representations with the spins
$\ell_1$ and $-1-\ell_1$ realized in the space of functions of the variable $z_1$
and the operator
$\mathrm{S}_3$ intertwines representations with the spins $\ell_2$
and $-1-\ell_2$ realized in the space of functions of variable $z_2$.
Note that in terms of the variable $g=\eta(2\ell+1)$ this corresponds
to the simple sign change $g\to -g$.

Evidently the operators $\mathrm{S}_1$ and $\mathrm{S}_3$ are equivalent to each other
differing only by the spaces where they are acting.
Let us construct the general intertwining operator $\mathrm{W}$
defined in two equivalent ways: either
as a solution of the matrix equation for the $\mathrm{L}$-operator
$$
\mathrm{W} \cdot \mathrm{L}(u_1,u_2) =
\mathrm{L}(u_2,u_1)\cdot \mathrm{W}\,,
$$
or, alternatively, as a solution of the system of equations involving generators
of the Sklyanin algebra
\begin{equation}\label{inter}
\mathrm{W} \cdot\mathbf{S}^a(\ell) = \mathbf{S}^a(-1-\ell)\cdot \mathrm{W}\,.
\end{equation}
For $2\ell\in\Z_{\geq 0}$ such an intertwining operator $\mathrm{W}$ was
constructed in~\cite{Z1} as a finite sum of the powers of
the finite-difference operator $e^{-\eta \partial_z}$ (see below).
A formal extension of this $\mathrm{W}$ to infinite series and arbitrary
values of $\ell$ proposed in \cite{Z2} does not represent a well defined operator.

In this paper we use a different approach inspired by the elliptic
hypergeometric integrals \cite{spi:umn,S2}.
Namely, we construct the intertwining operator $\mathrm{W}$ using a quite
general ansatz for it as an integral operator
\begin{eqnarray}\label{Wint} &&
\left[\mathrm{W}\,\Phi\right](z) = \int_\alpha^\beta
\, \Delta(z,x)\, \Phi(x) \ dx
\end{eqnarray}
for some integration interval $[\alpha,\beta]\in\R$.

We are going to solve thus the equation
$$
\int_\alpha^\beta\Delta(z,x)[\mathbf{S}^a(\ell)\Phi(x)]dx
=\mathbf{S}^a(-1- \ell)\int_\alpha^\beta\Delta(z,x)\Phi(x)dx,
$$
where $\mathbf{S}^a$-operators on the left-hand side act on the functions
of variable $x$, whereas on the right-hand side --- on the functions of variable $z$.
More explicitly, we have
\begin{eqnarray*} &&
\int_\alpha^\beta\frac{\Delta(z,x)}{\theta_1(2x)}
\Big(\theta_{a}(2x-2\eta\ell)\Phi(x+\eta)
-\theta_{a}(-2x-2\eta\ell)\Phi(x-\eta) \Big)dx
\\ &&
=\int_{\alpha+\eta}^{\beta+\eta}\frac{\theta_{a}(2x-2\eta(\ell+1))}{\theta_1(2x-2\eta)}
\Delta(z,x-\eta)\Phi(x)dx
-\int_{\alpha-\eta}^{\beta-\eta}\frac{\theta_{a}(-2x-2\eta(\ell+1))}{\theta_1(2x+2\eta)}
\Delta(z,x+\eta)\Phi(x)dx
\\ &&
=\int_\alpha^\beta\Big(\frac{\theta_{a}(2z+2\eta(\ell+1))}{\theta_1(2z)}
\Delta(z+\eta,x)-\frac{\theta_{a}(-2z+2\eta(\ell+1))}{\theta_1(2z)}
\Delta(z-\eta,x)  \Big)\Phi(x)dx,
\end{eqnarray*}
where $a=1,2,3,4$.
To get consistent equations for the kernel $\Delta(z,x)$ we impose the
constraint that the integrations $\int_{\alpha\pm\eta}^{\beta\pm\eta}dx$
give the same result as the integration $\int_\alpha^\beta dx$. For Im$(\eta)>0$
or Im$(\eta)<0$ this is so if the  integrals
over intervals $[\alpha,\alpha+\eta]$ and $[\beta,\beta+\eta]$
as well as over $[\alpha,\alpha-\eta]$ and $[\beta,\beta-\eta]$ coincide
and if the contour integrals over the parallelograms
$[\alpha,\beta,\beta+\eta,\alpha+\eta]$ and
$[\alpha,\beta,\beta-\eta,\alpha-\eta]$ are equal to zero.
The former constraint is satisfied if the integrands are periodic
with the period $\beta-\alpha$ and the latter condition is fulfilled
if the integrands are analytical in the respective parallelograms and
have no simple poles there (or the sum of their residues is equal to zero).
The case Im$(\eta)=0$ will be considered separately.

Supposing that these demands are satisfied, which will be analyzed a posteriori,
we obtain the equation
\begin{eqnarray}\nonumber &&
\int_{\alpha}^{\beta}\Big(\frac{\theta_{a}(2x-s)}{\theta_1(2x-2\eta)}
\Delta(z,x-\eta)-\frac{\theta_{a}(-2x-s)}{\theta_1(2x+2\eta)}
\Delta(z,x+\eta)
\\ &&\makebox[2em]{}
-\frac{\theta_{a}(2z+s)}{\theta_1(2z)}
\Delta(z+\eta,x)+\frac{\theta_{a}(-2z+s)}{\theta_1(2z)}
\Delta(z-\eta,x)  \Big)\Phi(x)dx=0,
\label{auxil}\end{eqnarray}
where $s=2\eta(\ell+1)$.
Since this integral should vanish for arbitrary admissible
function $\Phi(x)$, its integrand should vanish on its own. Therefore
the following system of four finite-difference equations should be true
\begin{eqnarray}\nonumber &&
\frac{\theta_{a}(2x-s)}{\theta_1(2x-2\eta)}
\Delta(z,x-\eta)-\frac{\theta_{a}(-2x-s)}{\theta_1(2x+2\eta)}
\Delta(z,x+\eta)
\\ && \makebox[2em]{}
=\frac{\theta_{a}(2z+s)}{\theta_1(2z)}
\Delta(z+\eta,x)-\frac{\theta_{a}(-2z+s)}{\theta_1(2z)}
\Delta(z-\eta,x).
\label{W-sys}\end{eqnarray}

Let us multiply the equation with $a=3$ by $\theta_4(2z+s)$
and the equation with $a=4$ by $\theta_3(2z+s)$ and subtract them from each other.
Using theta-functions identity \eqref{11} from the Appendix we obtain
the equality
\begin{eqnarray}\nonumber &&
\frac{\theta_1(2z+2x,2z-2x+2s|2\tau)}{\theta_1(2x-2\eta|\tau)}
\Delta(z,x-\eta)-\frac{\theta_1(2z-2x,2z+2x+2s|2\tau)}{\theta_1(2x+2\eta|\tau)}
\Delta(z,x+\eta)
\\&& \makebox[6em]{}
=-\frac{\theta_1(2s,4z|2\tau)}{\theta_1(2z|\tau)}\Delta(z-\eta,x).
\label{eqx1}\end{eqnarray}

Similarly, we multiply now the equation with $a=1$ by $\theta_2(2z+s)$
and the equation with $a=2$ by $\theta_1(2z+s)$ and subtract them from each other.
Applying theta-functions identity \eqref{14} from the Appendix we obtain
\begin{eqnarray}\nonumber &&
\frac{\theta_4(2z+2x|2\tau)\theta_1(2z-2x+2s|2\tau)}{\theta_1(2x-2\eta|\tau)}
\Delta(z,x-\eta)-\frac{\theta_4(2z-2x|2\tau)\theta_1(2z+2x+2s|2\tau)}
{\theta_1(2x+2\eta|\tau)}\Delta(z,x+\eta)
\\&& \makebox[6em]{}
=-\frac{\theta_4(2s|2\tau)\theta_1(4z|2\tau)}{\theta_1(2z|\tau)}\Delta(z-\eta,x).
\label{eqx2}\end{eqnarray}
Exclude the term $\Delta(z-\eta,x)$ from the obtained equations. Namely, divide
equation \eqref{eqx1} by $\theta_1(2s|2\tau)$, equation \eqref{eqx2}
by $\theta_4(2s|2\tau)$ and subtract them. This yields the equation
$$
\Big(\theta_1(2z+2x|2\tau)\theta_4(2s|2\tau)-\theta_4(2z+2x|2\tau)\theta_1(2s|2\tau)\Big)
\frac{\theta_1(2z-2x+2s|2\tau)}{\theta_1(2x-2\eta|\tau)}\Delta(z,x-\eta)
$$
$$
=\Big(\theta_1(2z-2x|2\tau)\theta_4(2s|2\tau)-\theta_4(2z-2x|2\tau)\theta_1(2s|2\tau)\Big)
\frac{\theta_1(2z+2x+2s|2\tau)}{\theta_1(2x+2\eta|\tau)}\Delta(z,x+\eta).
$$
Applying theta-function identities \eqref{12} and \eqref{thetadup}
from the Appendix, we come to the following compact linear first order
finite difference equation
\begin{equation}
\frac{\Delta(z,x+\eta)}{\Delta(z,x-\eta)}=\frac{\theta_1(2x+2\eta,z+x-s,z-x+s)}
{\theta_1(2x-2\eta,z-x-s,z+x+s)}.
\label{eqxfin}\end{equation}

Exclude now from system \eqref{W-sys} the terms $\Delta(z,x\pm\eta)$.
Repeating similar steps as before, we exclude first $\Delta(z,x-\eta)$,
then $\Delta(z,x+\eta)$, and come to analogous equations with $\theta_1(2x\pm2\eta)$
replaced by $\theta_1(2z)$ and slightly changed arguments in other theta
functions. The final result is
\begin{equation}
\frac{\Delta(z+\eta,x)}{\Delta(z-\eta,x)}=\frac{\theta_1(x-z+s,x+z-s)}
{\theta_1(x+z+s,x-z-s)}.
\label{eqzfin}\end{equation}

Excluding the $\Delta(z,x-\eta)$-term from equations \eqref{eqx1} and \eqref{eqx2}
we obtain the third needed equation
\begin{equation}
\frac{\Delta(z,x+\eta)}{\Delta(z-\eta,x)}=\frac{\theta_1(2x+2\eta,z+x-s)}
{\theta_1(2x,z+x+s)}.
\label{eqxz}\end{equation}
The same equation emerges (in a different way) if one excludes $\Delta(z+\eta,x)$-term
from the pair of equations leading to relation \eqref{eqzfin}.

After these considerations it is not difficult to find the general solution
of equations \eqref{eqxfin}, \eqref{eqzfin}, and \eqref{eqxz} valid for Im$(\eta)>0$:
\begin{equation}\label{Delta}
\Delta(z,x) = \mathrm{e}^{\frac{\pi \textup{i}}{\eta} (x^2-z^2)}
\,\frac{\Gamma(\pm z \pm x +\eta-s|\tau,2\eta)}
{\Gamma(\pm 2x|\tau,2\eta)}\varphi(z,x),
\end{equation}
where
$\varphi(z+2\eta,x)=\varphi(z,x+2\eta)=\varphi(z+\eta,x+\eta)=\varphi(z,x)$
is an arbitrary periodic function. A way to fix this functional freedom
by imposing an additional symmetry will be considered in the section
on elliptic modular double.

The key ingredient in expression \eqref{Delta} described by the
ratio of elliptic gamma functions is periodic in $x$ with the
period 1. Similarly, $\theta_a(2x)$-functions entering
the intertwining relation \eqref{auxil} have this period.
Therefore we have to demand that the rest of the
integrands in \eqref{auxil} be periodic with the same period 1.
This condition forces the length of the integration interval to be equal
to 1, $\beta-\alpha=1$. This periodicity allows us to fix the point $\alpha$
arbitrarily, and we take $\alpha=0$ and $\beta=1$.
After the shifts $x\to x\pm\eta$ the factor $e^{\pi\textup{i}x^2/\eta}$
gets multiplied by the function $e^{\pm2\pi\textup{i}x+\pi\textup{i}\eta}$
which has period 1. Therefore we have to demand that the products
$e^{\pi\textup{i}x^2/\eta}\varphi(z,x\pm\eta)\Phi(x)$ are periodic
functions of $x$ with period 1.

Now we note that the ratio of elliptic gamma functions in $\Delta(z,x)$
is invariant under the transformations $z\to z+1$ and $z\to -z$. If we require that
similar properties are obeyed by the function $\varphi(z,x)$, then
the functions $\Psi(z)=e^{\pi\textup{i}z^2/\eta}[W\Phi](z)$
become periodic $\Psi(z+1)=\Psi(z)$ and even $\Psi(-z)=\Psi(z)$.
Therefore it is natural to demand that the original functions
$\Phi(x)$ belong to the same class of functions, i.e. that
$e^{\pi\textup{i}x^2/\eta}\Phi(x)$ are invariant under the
transformations $x\to x+1$ and $x\to -x$.
This assumes that $\varphi(z,x+1)=\varphi(z,-x)=\varphi(z,x)$,
which resolves all the periodicity restrictions and forces
$\varphi(z,x)$ to be an even elliptic function of $z$ and $x$
with periods $2\eta$ and $1$.
This fixes the space of functions $\Phi(x)$ where our operator $W$
can work as an intertwining operator of the Sklyanin algebra generators.
Note that its structure does not contradict with the property that for $2\ell\in \Z_{\geq 0}$
the Sklyanin algebra has finite-dimensional representations in the space of
even theta functions of order $4\ell$, since $e^{\pi\textup{i}x^2/\eta}$
is an even theta function of order zero.

It remains to consider singularities of the integrand in \eqref{auxil}.
The reflection equation for elliptic gamma function \eqref{refl} shows
that the product $\theta_1(2x|\tau)\Gamma(2x|\tau,2\eta)\Gamma(-2x|\tau,2\eta)$
has not zeros, i.e. the poles at $x=\eta$ or $x=-\eta$ in the integrands
of relation \eqref{auxil} are spurious.
The divisor structure of the elliptic gamma function shows that if
$$
e^{2\pi\textup{i}(\pm x\pm z+\eta-s+\tau j+ 2\eta k)}\neq 1,
\quad j,k=0,1,2,\ldots,
$$
for any choice of signs when $x$ varies in the rectangle
$[-\text{Im}(\eta),1-\text{Im}(\eta),1+\text{Im}(\eta),\text{Im}(\eta)]$,
then no poles enter the needed domain. In the multiplicative notation
$X=e^{2\pi\textup{i}x}, $ $Z=e^{2\pi\textup{i}z},$ $p=e^{2\pi\textup{i}\tau}$,
$q=e^{4\pi\textup{i}\eta}$, one has the constraint $|XZ^{\pm1}tp^jq^k|\neq 1$,
$t=e^{2\pi\textup{i}(\eta-s)}=e^{-2\pi\textup{i}\eta(2\ell+1)}
=e^{2\pi\textup{i}(u_2-u_1)}$, when $|q|^{1/2}\leq |X|\leq |q|^{-1/2}$.
Evidently, for such values of $X$ the annuli
$|Xq^j|, j=0,1,\ldots,$ cover the whole disk of radius $|q|^{-1/2}$.
Therefore we escape poles, if
\begin{equation}
|Z^{\pm1}t|<|q|^{1/2} \quad \text{or}
\quad \text{Im}(u_1-u_2\pm z)<\text{Im}(\eta).
\label{Zconstr}\end{equation}
If $z$ is a real number, i.e. $|Z|=1$, then we come to the constraint
$|t|<|q|^{1/2}$ or $\text{Im}(\eta(\ell+1))<0$. For real $\ell$ this
means that $\ell<-1$, since Im$(\eta)>0$. The finite-dimensional realizations
of the $\mathrm{R}$-matrix emerging for half-integer spins,
$2\ell\in\Z_{\geq0}$, require thus a special treatment.

Demanding that our basic space functions $\Phi(x)$ and the periodic factors
$\varphi(z,x)$ have no simple poles in the domain of interest, we satisfy
completely the conditions guaranteeing validity of equations \eqref{W-sys}.
We stress that the function obtained after action of our operator
$\int_0^1\Delta(z,x)\Phi(x)dx$ satisfies the demands imposed on $\Phi(x)$.
Indeed, since Im$(x)=0$
the function of interest is analytical when $z$ varies in the rectangle
$[-\text{Im}(\eta),1-\text{Im}(\eta),1+\text{Im}(\eta),\text{Im}(\eta)]$.

Presence of the exponential $e^{\pi\textup{i}x^2/\eta}$ in the
definition of base space functions is annoying and we wish to pass to a more
natural setting. In order to do that we conjugate all our operators by this
exponential and define new realization of the Sklyanin algebra generators
\begin{eqnarray}\nonumber && \makebox[-2em]{}
\mathbf{S}^a_{mod}= e^{\pi\textup{i}z^2/\eta}\mathbf{S}^a
e^{-\pi\textup{i}z^2/\eta} = e^{-\pi\textup{i} \eta}
\frac{(\textup{i})^{\delta_{a,2}}
\theta_{a+1}(\eta)}{\theta_1(2 z) } \Bigl[\,\theta_{a+1} \left(2
z-g+\eta\right)\cdot e^{-2\pi\textup{i}z}\cdot \mathrm{e}^{\eta \partial_z}
\\ && \makebox[10em]{}
- \theta_{a+1}
\left(-2z-g+\eta\right)\cdot e^{2\pi\textup{i}z}\cdot  \mathrm{e}^{-\eta \partial_z}\, \Bigl],
\label{modSklgen}\end{eqnarray}
where we denoted $g=\eta (2\ell+1)$.
Analogously, we define
$$
\mathrm{W}_{mod}=e^{\pi\textup{i}z^2/\eta}\mathrm{W}e^{-\pi\textup{i}x^2/\eta},
$$
acting as
$$
[\mathrm{W}_{mod}\Psi](z)=\int_0^1\frac{\Gamma(\pm z \pm x +\eta-s|\tau,2\eta)}
{\Gamma(\pm 2x|\tau,2\eta)}\varphi(z,x)\cdot \Psi(x)dx.
$$
Then the intertwining relation
$\mathrm{W}_{mod} \cdot\mathbf{S}^a_{mod}(\ell)
= \mathbf{S}^a_{mod}(-1-\ell)\cdot \mathrm{W}_{mod}$
is true provided our operators act in the space of even periodic functions
$\Psi(x)=\Psi(-x)=\Psi(x+1)$ which do not have simple poles in the domain
$-\text{Im}(\eta)\leq \text{Im}(x)\leq\text{Im}(\eta)$. Let us summarize obtained results.

\begin{theorem}
Let $\mathrm{Im}(\eta)>0$ (or $|q|<1$). Denote as $V$ the space of functions
of two complex variables $\Psi(z_1,z_2)$ which are even and periodic in each variable
with the period 1 and which do not have simple poles in the domains
$-\text{Im}(\eta)\leq \text{Im}(z_1), \text{Im}(z_2)\leq\text{Im}(\eta)$. Define three operators
\begin{equation}
[\mathrm{S}_2(u_2-v_1)\Psi](z_1,z_2)=
\Gamma\left(\pm z_1\pm z_2 + u_2  - v_1 + \eta+\textstyle{\frac{\tau}{2}}\right)
\varphi_2(z_1,z_2)\cdot\Psi(z_1,z_2),
\label{S2fin}\end{equation}
where Im$(u_2-v_1+\tau/2-\eta)>0$ (or $|\sqrt{pq}e^{2\pi\textup{i}(u_2  - v_1)}|<|q|$),
\begin{equation}
[\mathrm{S}_1(u_1-u_2)\Psi](z_1,z_2)=\frac{\kappa }{\Gamma(2u_2-2u_1)}
\int_0^1\frac{\Gamma(\pm z_1 \pm x +u_2-u_1)}
{\Gamma(\pm 2x)}\varphi_1(z_1,x)\cdot \Psi(x,z_2)dx,
\label{S1fin}\end{equation}
where Im$(u_2-u_1\pm z_1-\eta)>0$
(or $|e^{2\pi\textup{i}(u_2  - u_1\pm z_1)}|<|q|^{1/2}$),
$$
\kappa = \frac{(q;q)_\infty\,(p;p)_\infty}{2}, \qquad
(t;q)_\infty:=\prod_{j=0}^\infty(1-tq^j),
$$
and
\begin{equation}
[ \mathrm{S}_3(v_1-v_2)\Psi](z_1,z_2)=\frac{\kappa }{\Gamma(2v_2-2v_1)}
\int_0^1\frac{\Gamma(\pm z_2 \pm x +v_2-v_1)}
{\Gamma(\pm 2x)}\varphi_3(z_2,x)\cdot \Psi(z_1,x)dx,
\label{S3fin}\end{equation}
where Im$(v_2-v_1\pm z_2-\eta)>0$
(or $|e^{2\pi\textup{i}(v_2  - v_1\pm z_2)}|<|q|^{1/2}$).
Here $\varphi_k(z,x)$ are arbitrary even elliptic functions of $z$ and $x$
with periods $1$ and $2\eta$ satisfying additional constraints
$$
\varphi_k(z+\eta,x+\eta)=\varphi_k(z,x), \quad k=1,2,3,
$$
and not having simple poles in the domains
$-\text{Im}(\eta)\leq \text{Im}(z), \text{Im}(x)\leq\text{Im}(\eta)$.

Then the operators $\mathrm{S}_k(\bf{u})$ map the space $V$ onto itself and satisfy
the defining intertwining relations \eqref{RLL13}, \eqref{RLL2}, and \eqref{RLL4},
provided in the corresponding $\mathrm{L}$-operator \eqref{L_op} one uses the Sklyanin
algebra generators in the form \eqref{modSklgen}.
\end{theorem}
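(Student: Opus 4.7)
The plan is to reverse the derivation that precedes the theorem and then take care of the analytic side-conditions. The proof splits naturally into three parts: (a) verification of the intertwining relations \eqref{RLL13} and \eqref{RLL2} for the specific kernels; (b) verification of the commutativity relations \eqref{RLL4}; and (c) verification that the three operators actually map $V$ to itself, i.e., that the integrals converge, the contour shifts used in (a) are legitimate, and evenness/periodicity/analyticity of the images are preserved.

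For (a), the relation \eqref{RLL2} for $\mathrm{S}_2$ was reduced in the preceding discussion to four first-order difference equations in $z_1\pm z_2$, which in turn factor into a single equation of the form \eqref{pm}. The ratio of elliptic gamma functions in \eqref{S2fin} is exactly the product of two copies of the solution $\Phi(z+a)/\Phi(z+b)$ of \eqref{ab}, written for the two factors $\Phi_\pm(z_1\pm z_2)$, so \eqref{S2fin} solves \eqref{RLL2} up to the free periodic factor $\varphi_2$. For $\mathrm{S}_1$ and $\mathrm{S}_3$, the relations \eqref{RLL13} are equivalent to the generator intertwining equations \eqref{inter1}; the integral ansatz \eqref{Wint} reduces, via the linear elimination performed on the system \eqref{W-sys}, to the three first-order difference equations \eqref{eqxfin}, \eqref{eqzfin}, \eqref{eqxz}, whose general solution is \eqref{Delta}. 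The kernels in \eqref{S1fin} and \eqref{S3fin} differ from \eqref{Delta} only by removal of the Gaussian $e^{\pi\mathrm{i}(x^2-z^2)/\eta}$, which corresponds precisely to the conjugation $\mathbf{S}^a\mapsto \mathbf{S}^a_{mod}$ producing \eqref{modSklgen}, and by a scalar prefactor $\kappa/\Gamma(2u_2-2u_1)$ which commutes through the intertwining relation.

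For (b), the commutation $[\mathrm{S}_1,\mathrm{L}_2]=0$ is immediate by variable separation: $\mathrm{S}_1$ is an integral transform in $z_1$ only, with $z_2$ a passive spectator, while $\mathrm{L}_2$ involves only $z_2$ and $\partial_{z_2}$. The symmetric statement $[\mathrm{S}_3,\mathrm{L}_1]=0$ is identical, and $\mathrm{S}_2$ is a multiplication operator whose commutations were already used implicitly in (a).

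Part (c) is where the main work lies. Evenness and $1$-periodicity of the image in the external variables follow at once from the corresponding symmetries of $\Gamma(\pm z\pm x+\cdot)$ and of $\varphi_k(z,x)$. The contour shifts used in the derivation of \eqref{W-sys} --- namely $\int_{\alpha\pm\eta}^{\beta\pm\eta}=\int_{\alpha}^{\beta}$ --- are justified by the $1$-periodicity in $x$ of the integrand (which forces $\beta-\alpha=1$ and lets one take $\alpha=0$) together with its analyticity in the strip $|\mathrm{Im}(x)|\le \mathrm{Im}(\eta)$; this analyticity is the content of the pole analysis summarized around \eqref{Zconstr}, and under the stated conditions $\mathrm{Im}(u_2-u_1\pm z_1-\eta)>0$ all poles of $\Gamma(\pm z_1\pm x+u_2-u_1)$ are pushed off the admissible rectangle. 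The apparent poles at $x=\pm\eta$ coming from the $\theta_1(2x)^{-1}$ factor in the Sklyanin generators are spurious by the reflection formula \eqref{refl}, which makes $\theta_1(2x)\Gamma(\pm 2x|\tau,2\eta)$ nowhere zero. Since the image of an even, $1$-periodic, strip-analytic function is again even, $1$-periodic, and analytic in the same strip in the external variable, each $\mathrm{S}_k$ maps $V$ into $V$, which completes the proof.
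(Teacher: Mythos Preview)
Your proposal is correct and follows essentially the same route as the paper: the theorem is a summary of the derivation carried out in Section~\ref{Sk}, and you have accurately reversed that derivation---reducing \eqref{RLL2} to the difference equations \eqref{pm}, reducing \eqref{RLL13} via the integral ansatz to the system \eqref{eqxfin}--\eqref{eqxz}, handling the conjugation by $e^{\pi\textup{i}z^2/\eta}$ that passes from \eqref{Sklyan} to \eqref{modSklgen}, and invoking the pole analysis around \eqref{Zconstr} together with the reflection formula \eqref{refl} to justify the contour shifts and the preservation of~$V$. The commutativity \eqref{RLL4} by variable separation is exactly how the paper treats it (implicitly), so nothing further is needed.
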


Here we do not use the notation $\mathrm{S}_{k, mod}$ for brevity
assuming that there will be no confusion in the following
which particular form of the intertwining operators is used.
The functions $\varphi_k$ may depend on the parameters $u_1,u_2,v_1,v_2$
in arbitrary way. If they would not depend on the coordinates $z$ and $x$,
then we can drop $\varphi_k$ completely, since they become irrelevant for
solutions of YBE (without consideration of the unitarity condition).
Then the operators $\mathrm{S}_{1,3}$ do not depend on the spectral parameter
since they involve only the differences $u_1-u_2$ or $v_1-v_2$.
Similarly, $\mathrm{S}_2$ will depend only on the difference $u_2-v_1$.
The above choice of the normalization constant $\kappa $, as well as of the elliptic gamma
function prefactors  in $\mathrm{S}_{1,3}$ are dictated by the elliptic beta integral
\cite{spi:umn}, as described explicitly in the next section for a
special choice $\varphi_k(z,x)=1$.

Denote $y_{1,2}=e^{2\pi\textup{i} z_{1,2}}$. Fourier series expansions
for our basic functions $\Psi(z_1,z_2)$ show that our space is equivalent
to the space of meromorphic functions of $y_1, y_2\in\C^*$ satisfying
the constraints $f(y_1^{-1},y_2)=f(y_1,y_2^{-1})=f(y_1,y_2)$.
Therefore we can pass in the definition of $\mathrm{S}_{1,3}$-operators
from real integrals over $[0,1]$ to contour integrals
over the unit circle $\mathbb{T}$ of positive orientation
\begin{equation}
[\mathrm{S}_1(u_1-u_2)f](y_1,y_2)=\kappa
\int_{\mathbb{T}}\frac{\Gamma(ty_1^{\pm1}y^{\pm1};p,q)}
{\Gamma(t^2,y^{\pm2};p,q)}\varphi_1(y_1,y)\cdot f(y,y_2)\frac{dy}{2\pi\textup{i}y},
\label{S1_mult}\end{equation}
where $t=e^{2\pi\textup{i}(u_2-u_1)}$, $|ty_1^{\pm1}|<|q|$, $\varphi_1(qy_1,y)
=\varphi_1(y_1,qy)=\varphi_1(q^{1/2}y_1,q^{1/2}y)=\varphi_1(y_1,y)$,
and $\Gamma(t;p,q)$
is the elliptic gamma function in multiplicative notation \eqref{egamma_m}.
Evidently,  sequential actions of the $\mathrm{S}_k$-operators create
multiple contour integral operators. Deforming the integration
contours one can relax the constraints on parameters
and define resulting functions
by analytical continuation in parameters.

\section{Coxeter relations and the elliptic beta integral}
\label{Coxeter} \setcounter{equation}{0}

In this section we prove that the derived operators
$\mathrm{S}_i$ obey  relations~(\ref{def1}),~(\ref{def2}), and~(\ref{def3})
generating the permutation group $\mathfrak{S}_4$ for a special choice of
the periodic factors
$$
\varphi_k(z,x)=1, \qquad k=1,2,3.
$$
Under these conditions operators \eqref{S2fin}-\eqref{S3fin} depend on the differences of parameters.
Therefore Coxeter relations for them can be represented in a simpler form:
 \be \label{121}\mathrm{S}_k(a)\,\mathrm{S}_k(-a)= \II\,
;\qquad \mathrm{S}_1(a)\,\mathrm{S}_2(a+b)\,\mathrm{S}_1(b)=
\mathrm{S}_2(b)\,\mathrm{S}_1(a+b)\,\mathrm{S}_2(a)\,, \ee
\be\label{232} \mathrm{S}_1(a)\,\mathrm{S}_3(b)=
\mathrm{S}_3(b)\,\mathrm{S}_1(a)\, ;\qquad
\mathrm{S}_2(a)\,\mathrm{S}_3(a+b)\,\mathrm{S}_2(b)=
\mathrm{S}_3(b)\,\mathrm{S}_2(a+b)\,\mathrm{S}_3(a).
\ee
In the theory of quantum integrable systems the cubic relations
are known as the star-triangle relations \cite{DM}. There are
two evident equalities. The operator $\mathrm{S}_3(a)$ differs
from the operator $\mathrm{S}_1(a)$ only by the change of variable
$z_1\to z_2$ and therefore these operators commute,
$\mathrm{S}_1(a)\,\mathrm{S}_3(b) = \mathrm{S}_3(b)\,\mathrm{S}_1(a).$
If one would stick to the generating relations in the form $(s_is_{i\pm 1})^3=1$,
then the cubic relation in \eqref{121} is replaced by
$$
\mathrm{S}_1(-b)\,\mathrm{S}_2(-a-b)\,\mathrm{S}_1(-a)\,
\mathrm{S}_2(b)\,\mathrm{S}_1(a+b)\,\mathrm{S}_2(a)\,=\II
$$
and a similar replacement holds for \eqref{232}.

The equality $\mathrm{S}_2(a)\,\mathrm{S}_2(-a) = \II$ is easily verified
with the help of reflection formula for elliptic gamma function
(we set $\varphi_2=1$), since the operator $\mathrm{S}_2$ reduces to
the multiplication by a given function. If we would
have an arbitrary periodic function $\varphi_2(z_1,z_2;a)$
in the definition of $\mathrm{S}_2(a)$ (in fact, $\varphi_2$ may depend
on $u_2$ and $v_1$ separately, not only on their difference),
then this condition does not fix $\varphi_2$, there are many nontrivial
elliptic functions of $z_1, z_2$ obeying the constraint
$\varphi_2(z_1,z_2;a)\varphi_2(z_1,z_2;-a)=1$. Only if $\varphi_2(z_1,z_2;a)$
does not depend on $z_1,z_2$, this condition fixes $S_2$
up to a constant multiplier $\varphi(a)$ satisfying the
constraint $\varphi_2(a)\varphi_2(-a)=1$.

Let us  show that the remaining nontrivial
Coxeter relations follow from the elliptic beta integral
evaluation formula \cite{spi:umn}. Denote
$$
\left[\mathrm{S}_2(a)\,\Psi\right](z_1,z_2) =
\mathrm{D}_a(z_1,z_2)\cdot \Psi(z_1,z_2) \ \ ;\ \
\mathrm{D}_a(z_1,z_2) =
\Gamma(\pm z_1\pm z_2 + a + \eta+\textstyle{\frac{\tau}{2}})\,,
$$
and
\begin{eqnarray*} &&
\left[\mathrm{S}_1(b)\,\Psi\right](z_1,z_2) = \int_0^1 \mathrm{d} z\,
\mathrm{W}_b(z_1,z)\, \Psi(z,z_2) \ \ ;\ \
\mathrm{W}_b(z_1, z) = \kappa \frac{\Gamma(\pm z\pm z_1 -b)}{\Gamma(-2b,\pm 2z)}\,,
\end{eqnarray*}
where $\Gamma(a,b)=\Gamma(a|2\eta,\tau)\Gamma(b|2\eta,\tau)$.
Similarly, $\mathrm{S}_3(b)$ has the same form as $\mathrm{S}_1(b)$, but it
acts in the space of functions of $z_2$.

The first Coxeter relation
\begin{equation}\label{key2}
\mathrm{S}_1(a)\,\mathrm{S}_2(a+b)\,\mathrm{S}_1(b)=
\mathrm{S}_2(b)\,\mathrm{S}_1(a+b)\,\mathrm{S}_2(a)
\end{equation}
is equivalent to the following equation for the kernels
\begin{equation}\label{int1}
\int_0^1 \mathrm{d} z\,
\mathrm{W}_a (z_1,z)\, \mathrm{D}_{a+b}(z,z_2)\,\mathrm{W}_b (z,x)
=\mathrm{D}_b(z_1,z_2)\,\mathrm{W}_{a+b}(z_1,x)\, \mathrm{D}_a(x,z_2)\,.
\end{equation}
The second Coxeter relation
\begin{equation}\label{key22}
\mathrm{S}_3(a)\,\mathrm{S}_2(a+b)\,\mathrm{S}_3(b)=
\mathrm{S}_2(b)\,\mathrm{S}_3(a+b)\,\mathrm{S}_2(a)
\end{equation}
is equivalent to a similar equation for the kernels
\begin{equation}\label{int2}
\int_0^1 \mathrm{d} z\,
\mathrm{W}_a (z_2,z)\, \mathrm{D}_{a+b}(z_1,z)\,\mathrm{W}_b (z,x)
=\mathrm{D}_b(z_1,z_2)\,\mathrm{W}_{a+b}(z_2,x)\, \mathrm{D}_a(z_1,x)\,.
\end{equation}
This equation can be obtained from the first one after permutation $z_1\leftrightarrow z_2$
provided the function $\mathrm{D}_a(z_1,z_2)$ is symmetric,
$\mathrm{D}_a(z_1,z_2) = \mathrm{D}_a(z_2,z_1)$, which is evident in our case.
 Therefore we have to prove the first Coxeter relation only.

Let us compare the elliptic beta integral evaluation formula~\cite{spi:umn}
\begin{equation}
\kappa  \int_0^1 \frac{ \prod_{k=1}^6 \Gamma(g_k\pm z)}{\Gamma(\pm 2z)} \,\mathrm{d} z
 = \prod_{1\leq j<k\leq6} \Gamma(g_j+g_k)\,,
\label{ell_beta}\end{equation}
where
\begin{equation}
g_1+\cdots+g_6 = 2\eta+\tau \, (\text{mod}\  \Z), \qquad \mathrm{Im}(g_k),
\mathrm{Im}(\eta),\mathrm{Im}(\tau)>0,
\label{balance}\end{equation}
with relation \eqref{int2} which has the following explicit form
$$
\kappa \int_0^1 \mathrm{d} z\, \frac{\Gamma(\pm z\pm z_1-a)}{\Gamma(-2a,\pm 2z)}
\cdot
\Gamma(\pm z\pm z_2+a+b+\eta+{\textstyle\frac{\tau}{2}})
\cdot
\frac{\Gamma(\pm x\pm z-b)}{\Gamma(-2b,\pm 2x)}\, =
$$
$$
= \Gamma(\pm z_1\pm z_2+b+\eta+{\textstyle\frac{\tau}{2}})\cdot
\frac{\Gamma(\pm x\pm z_1-a-b)}{\Gamma(-2a-2b,\pm 2x)}\,
\cdot\Gamma(\pm x\pm z_2+a+\eta+{\textstyle\frac{\tau}{2}})\,.
$$
After evident simplifications we arrive at the elliptic beta integral \eqref{ell_beta}
for the following choice of parameters
\begin{eqnarray*} &&
g_1= z_1-a\ , \ g_2 = -z_1-a \ ,\ g_3 = z_2+a+b+\eta+{\textstyle\frac{\tau}{2}}\ ,
\\  &&
g_4 = -z_2+a+b+\eta+{\textstyle\frac{\tau}{2}}\ , \ g_5= x-b\ , \ g_6 = -x-b\,.
\end{eqnarray*}
The constraints on values of $z_{1}, z_{2}, x, a$, and $b$ used in the construction
of intertwining operators guarantee that $|e^{2\pi\textup{i}g_{1,2,5,6}}|< |q|^{1/2}$
and $|e^{2\pi\textup{i}g_{3,4}}|< |q|$.
Since $\prod_{k=1}^6 e^{2\pi\textup{i}g_k}=pq$, we conclude that we should
have $|p|<|q|^3$.
However, by analytical continuation one can see that identity \eqref{int1} is
valid for a wider region of parameters  $\mathrm{Im}(g_k)>0$ which is symmetric in
$2\eta$ and $\tau$ (or $p$ and $q$). Thus, the cubic Coxeter relations hold true
as a consequence of the exact integration formula \eqref{ell_beta}.

The same result is obtained for the original Sklyanin
algebra generators realization \eqref{Sklyan} leading to the
exponential factors $e^{\pi\textup{i}(z^2-x^2)/\eta}$ in the
definition of $\mathrm{S}_{1,3}$-operators. Namely, all such exponentials
cancel from the integral identity.

It is not clear whether our normalization $\varphi_k=1$
(or, more generally, the demand that $\varphi_k$ do not depend on $z_1, z_2$)
is crucial for the obtained result. Cubic Coxeter relation is valid
for all $\varphi_k$-functions satisfying the constraint
$$
\varphi_1(z_1,z;a)\varphi_2(z,z_2;a+b)\varphi_1(z,x;b)
=\varphi_2(z_1,z_2;b)\varphi_1(z_1,x;a+b)\varphi_2(x,z_2;a).
$$
It is necessary to understand how strongly this relation restricts
functions $\varphi_k(z,x)$. Perhaps there are nontrivial solutions,
so that it is desirable to fix the functional freedom
in somewhat different way.

Let us take the limit $b\to -a$ in relation \eqref{key2}. It is evident that
$\mathrm{S}_2(0)=\II$, i.e. the left-hand side expression reduces to
the product $\mathrm{S}_1(a)\,\mathrm{S}_1(-a)$ and the right-hand side --
to $\mathrm{S}_2(-a)\mathrm{S}_1(0)\mathrm{S}_2(a)$. As it will be shown below
in the section on finite-dimensional reductions, one has $\mathrm{S}_1(0)=\II$
(this follows after careful resolution of the ambiguity arising from
vanishing multiplier in front of the integral and diverging
value of the integral itself).  Since $\mathrm{S}_2(-a)\mathrm{S}_2(a)=\II$,
we formally come to the remaining quadratic Coxeter relations
$$
\mathrm{S}_1(a)\,\mathrm{S}_1(-a) = \II \ ;\quad
\mathrm{S}_3(a)\,\mathrm{S}_3(-a) = \II\,.
$$
The problem is that for the moment we have defined operators $\mathrm{S}_{1,3}(a)$
only in the restricted domain of values of $a$ and $z_1$, namely,
$\text{Im}(a\pm z_1)<-\text{Im}(\eta)$.
Therefore we should give proper definition to $\mathrm{S}_{1,3}(-a)$-operators
for the inversion relations to be true.

Suppose that the integral operator $\mathrm{S}_{1}(a)$ in the form \eqref{S1_mult}
acts on a holomorphic function of $y\in\C^*$ so that all the singularities
of the resulting function are determined by the kernel of $\mathrm{S}_{1}(a)$.
From the divisor structure of elliptic gamma functions one can see that
the latter kernel has poles at
\begin{equation}
in: \; y=ty_1^{\pm1}p^jq^k, \qquad
out: \; y=t^{-1}y_1^{\pm1}p^{-j}q^{-k},\;
\label{poles}\end{equation}
where $j, k\in \Z_{\geq 0}$, with the first sequence converging to
zero $y=0$ and the second
one going to infinity. For $|tx^{\pm1}|<|q|^{1/2}$ the contour $\mathbb T$
separates these two sets of poles. Let us replace now the integration
contour $\mathbb T$ in \eqref{S1_mult} by an arbitrary contour $C$
separating poles $in$ from $out$. By Cauchy theorem no singularities
emerge after changing values of variables $t$ and $x$ as soon as the
kernel poles do not cross the integration contour. Evidently, this procedure
extends the definition of the operator $\mathrm{S}_{1}$ to arbitrary
values of $t$ and $x$ guaranteeing existence of the contour $C$.
The latter condition is violated if the poles from $in$ and $out$ sets
pinch the integration contour, which can happen if
\begin{equation}
ty_1^{\pm1}p^{j_1}q^{k_1}= (ty_1^{\pm1})^{-1}p^{-j_2}q^{-k_2}
\quad \text{or} \quad
ty_1^{\pm1}p^{j_1}q^{k_1}= (ty_1^{\mp 1})^{-1}p^{-j_2}q^{-k_2}.
\label{t1}\end{equation}
The $\mathrm{S}_{1}$-operator kernel contains also the multiplier
$1/\Gamma(t^2;p,q)$ vanishing for $t^2\to p^{-j}q^{-k}$ and diverging
for $t^2\to p^{j+1}q^{k+1}$ with $j, k\in \Z_{\geq 0}$.
As a result, we come to the constraints
\begin{equation}
t^2\neq p^{-j}q^{-k}, \; p^{j+1}q^{k+1}, \quad j, k\in \Z_{\geq 0},
\label{t2}\end{equation}
and $y_1^2\neq t^2p^jq^k,\; t^{-2}p^{-j}q^{-k}$ for arbitrary fixed $t$
and $j, k\in \Z_{\geq 0}$. Thus we have defined the action of operator
$[\mathrm{S}_{1}(a)\Phi](z)$ for arbitrary generic values of $z$
and  $a\neq \eta j+ \tau k/2$ or $a\neq 1/2 +\eta j+ \tau k/2$,
$j,k\in\Z,\, (j,k)\neq (1,0), (0,1)$.

For $t^2=q^{-n}p^{-m}$ with integer $n,m\geq 0$ and generic values of $q$
and $p$, which we always assume, the second set of
equalities in \eqref{t1} is satisfied if $j_1+j_2=m$ and $k_1+k_2=n$, i.e. there
are  $(n+1)\times (m+1)$ pairs of poles pinching the contour $C$.
As will be shown in Sect. 8 for these exceptional values of $t$, with the
exclusion of the points $t=\pm 1$ (i.e., $a=0,1/2$),
the operator $\mathrm{S}_{1}(a)$ has nontrivial zero modes. Analogously,
the operator $\mathrm{S}_{1}(-a)$ has zero modes for $t^2=q^{n}p^{m}$,
with integer $n,m\geq0$, $(n,m)\neq (0,0)$. As a result, the relation
$\mathrm{S}_{1}(a)\mathrm{S}_{1}(-a)=\II$ cannot be true for
$t^2=q^{n}p^{m}$, $n,m\in\Z$, $(n,m)\neq 0$.

The following rigorous inversion statement was established in
\cite{spi-war:inversions}.
\begin{theorem}\label{inversionthm}
Let $p,q,t\in\C$ such that $\max\{|p|,|q|\}<|t|^2<1$.
For fixed $w\in\mathbb{T}$ let $C_w$ denote a contour
inside the annulus $\mathbb{A}=\{z\in\C;~
|t|-\epsilon<|z|<|t|^{-1}+\epsilon\}$
for infinitesimally small but positive $\epsilon$, such that
$C_w$ has the points $wt^{-1},(wt)^{-1}$ in its interior and excludes their
reciprocals. Let $f(z)=f(z^{-1})$ be a function holomorphic on $\mathbb{A}$.
Then for $|t|<|x|<|t|^{-1}$ there holds
\begin{equation}\label{inversion}
\frac{(p;p)_\infty^2(q;q)_\infty^2}{(4\pi\textup{i})^2}
\int_{\mathbb{T}}\biggl(\int_{C_w}\frac{\Gamma(tw^{\pm1}x^{\pm1},
t^{-1}w^{\pm1}z^{\pm1};p,q)} {\Gamma(t^{\pm 2},z^{\pm 2},w^{\pm 2};p,q)}
f(z)\frac{dz}{z}\biggr) \frac{dw}{w}=f(x).
\end{equation}
\end{theorem}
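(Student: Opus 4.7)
My plan is to identify the left-hand side with the composition $(\mathrm{S}_1(a)\mathrm{S}_1(-a)f)(x)$, where $\mathrm{S}_1$ is the intertwining operator from Section~\ref{Sk} in its multiplicative form \eqref{S1_mult} with $t=e^{2\pi\textup{i}(u_2-u_1)}$. The theorem is thus the quadratic Coxeter relation $\mathrm{S}_1(a)\mathrm{S}_1(-a)=\II$ whose rigorous justification was postponed in Section~\ref{Coxeter}. Under the hypothesis $\max\{|p|,|q|\}<|t|^2<1$, all $p,q$-shifted pole sequences of the $z$-integrand lie outside the annulus $\mathbb{A}$, so that the only poles inside $\mathbb{A}$ are the four base poles $z=t^{\pm1}w^{\pm1}$. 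The prescribed contour $C_w$ encloses the outer pair $z=wt^{-1},(wt)^{-1}$ and excludes its reciprocal inner pair; this is the natural analytic continuation from the regime $|t|>1$, in which the unit circle $\mathbb{T}$ would do the separation, to the regime $|t|<1$ considered here.

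The main strategy is to derive the inversion from the cubic (star-triangle) Coxeter relation \eqref{key2}, which was already established in Section~\ref{Coxeter} as a direct consequence of the elliptic beta integral \eqref{ell_beta}. Setting $b\to -a$ formally in \eqref{key2} and using $\mathrm{S}_2(0)=\II$ (immediate from the reflection formula for the elliptic gamma, since the four factors of $\mathrm{D}_0(z_1,z_2)$ pair into reciprocals when $a=0$) together with $\mathrm{S}_2(a)\mathrm{S}_2(-a)=\II$ (same reason), the cubic relation collapses to $\mathrm{S}_1(a)\mathrm{S}_1(-a)=\mathrm{S}_1(0)$. The theorem is therefore equivalent to giving a precise meaning to $\mathrm{S}_1(0)$ and showing that, with the prescribed contour, $\mathrm{S}_1(0)=\II$. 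Concretely, I would first apply Fubini after deforming $C_w$ for each $w$ to a $w$-independent reference contour $C\subset\mathbb{A}$, tracking the residues picked up when the base poles $z=wt^{\pm1}$ cross $C$; then I would evaluate the bulk $w$-integral over $\mathbb{T}$ using \eqref{ell_beta}, and show that the residue contributions collapse, via the even symmetry $f(z)=f(z^{-1})$ and one further application of the elliptic beta integral, to $f(x)$.

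The step I expect to be the principal obstacle is the residue bookkeeping in the limiting regime $b\to -a$, where the prefactor $1/\Gamma(-2b;p,q)$ in the definition of $\mathrm{S}_1(-a)$ vanishes linearly while pairs of poles of the integrand pinch $\mathbb{T}$ simultaneously. The contour $C_w$ is engineered exactly to pick up one pole from each pinched pair, and the cancellation of the simple zero of the prefactor against the diverging residues must produce a finite distributional kernel of delta-function type concentrated at $z=x^{\pm1}$. An alternative, less delicate, verification route is to specialise $f$ to a dense family of test functions of the form $\prod_{k}\Gamma(g_k z^{\pm1};p,q)$ with the $g_k$ balanced so that both the $z$- and $w$-integrals reduce to direct applications of \eqref{ell_beta}: for such $f$ the identity \eqref{inversion} becomes a two-fold consequence of the elliptic beta integral, from which it extends to all holomorphic even $f$ on $\mathbb{A}$ by density together with analytic continuation in the parameters $t$ and $x$ throughout the admissible domain $\max\{|p|,|q|\}<|t|^2<1$, $|t|<|x|<|t|^{-1}$.
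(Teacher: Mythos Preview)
The paper does not prove this theorem at all: it is quoted verbatim as a result established in the external reference \cite{spi-war:inversions} (Spiridonov--Warnaar), and is then \emph{used} to give a rigorous meaning to the quadratic Coxeter relation $\mathrm{S}_1(a)\mathrm{S}_1(-a)=\II$. So there is no ``paper's own proof'' to compare against; your proposal is an attempt to supply a proof where the paper simply cites one.

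On the substance of your plan: the route via the limit $b\to -a$ of the cubic Coxeter relation \eqref{key2} is exactly the formal heuristic the paper itself sketches in Section~\ref{Coxeter}, and the paper is explicit that this is \emph{not} a proof. The cubic relation \eqref{int1} is established only in the domain where all six parameters $g_k$ satisfy $\mathrm{Im}(g_k)>0$, and the operator $\mathrm{S}_1(-a)$ is not even defined in the original restricted sense (see the passage beginning ``The problem is that for the moment we have defined operators $\mathrm{S}_{1,3}(a)$ only in the restricted domain\dots''). The paper's logic runs in the opposite direction to yours: it invokes Theorem~\ref{inversionthm} from \cite{spi-war:inversions} precisely in order to make the analytically continued $\mathrm{S}_1(-a)$ well-defined and to justify the quadratic relation. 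Your argument would therefore need an independent analytic continuation of the cubic identity into the pinching regime, together with a uniform control of the limit, and you have not supplied this; the ``residue bookkeeping'' you flag as the principal obstacle is the entire content of the theorem, not a detail.

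Your alternative route---specialising $f$ to products $\prod_k\Gamma(g_k z^{\pm1};p,q)$ and reducing both integrals to instances of \eqref{ell_beta}---is closer in spirit to an actual proof, and indeed this is how such inversion statements are typically approached. But the sketch is too thin: you have not specified a topology in which these test functions are dense among even holomorphic functions on $\mathbb{A}$, nor argued that both sides of \eqref{inversion} are continuous in that topology, nor checked that the balancing constraint on the $g_k$ leaves enough freedom for density. Without these pieces the argument is incomplete.
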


Rewriting relation \eqref{inversion} as
$$
\kappa \int_{\mathbb{T}}\frac{\Gamma(tw^{\pm1}x^{\pm1};p,q)}
{\Gamma(t^{2},w^{\pm 2};p,q)} \frac{dw}{2\pi\textup{i}w}
\biggl(\kappa \int_{C_w}\frac{\Gamma(t^{-1}w^{\pm1}z^{\pm1};p,q)}
{\Gamma(t^{-2},z^{\pm 2};p,q)} f(z)\frac{dz}{2\pi\textup{i}z}\biggr)=f(x),
$$
one evidently comes to the equality $[\mathrm{S}_{1}(a)\mathrm{S}_{1}(-a)f](x)=f(x)$
for analytically continued $\mathrm{S}_{1}(a)$-operators. Here
$\mathrm{S}_{1}(a)$ is extended to the domain of parameters $|tx^{\pm1}|<1$
and $\mathrm{S}_{1}(-a)$ is a continuation of $\mathrm{S}_{1}(a)$-operator
to the domain $|tw^{\pm1}|<1/\max\{|p|^{1/2},|q|^{1/2}\}+\epsilon$.

In \eqref{inversion} it is assumed, of course, that  the integrand poles at
$z=t^{-1}w^{\pm1}p^{j}q^{k}$ for $j,k\in \Z_{\geq0}$
sit inside $C_w$ and their reciprocals -- outside of $C_w$.
The lower bound $\max\{|p|,|q|\}<|t^2|$ was imposed in order to guarantee that
only poles $t^{-1}w^{\pm1}$ and $tw^{\mp1}$ are crossed over when
one deforms the contour $C_w$ to $\mathbb T$. Under the weaker
condition $|q|,|p|<|t|$ a number of poles may enter the annulus $\mathbb{A}$,
from both sides, but they still do not cross the unit circle.
Therefore the same arguments as in \cite{spi-war:inversions} apply
and we obtain the needed inversion relation $\mathrm{S}_1(a)\mathrm{S}_1(-a)=\II$
in a wider region of parameter $a$. In particular, for $|x|=1$
and $\max\{|p|, |q|\}<|t|<|q|^{1/2}$ we can satisfy even the original
restrictions for $\mathrm{S}_1(a)$-operator parameters, $|tx^{\pm1}|<|q|^{1/2}$.
Thus, we have proved that the analytically continued operators $\mathrm{S}_{1,2,3}$
do satisfy Coxeter relations of the permutation group $\mathfrak{S}_4$.

If one substitutes into relation $\mathrm{S}_1(a)\mathrm{S}_1(-a)=\II$
the expression \eqref{S1fin} without taking care of existing restrictions
and assumes that $z_1$ and $z_2$ are real, then the straightforward
consideration yields
\begin{equation}
\int_0^1
\mathrm{W}_a (z_1,z)\, \mathrm{W}_{-a}(z,z_2)\, \mathrm{d} z
=\frac{1}{2}\,\left[\,\delta(z_1-z_2)+\delta(z_1+z_2)\,\right]\,,
\end{equation}
where $\delta(z)$ is the Dirac delta-function.
Using explicit expressions for $W_a$-functions,
one obtains the following formal integral identity
\begin{equation}\label{forminv}
\kappa \int_0^1
\frac{\Gamma(\pm z_1\pm z - a,\pm z_2\pm z +a)}{\Gamma(\pm 2z)}\mathrm{d} z
=\frac{\Gamma(\pm 2a,\pm 2z_2)}
{2\kappa }\,\left[\,\delta(z_1-z_2)+\delta(z_1+z_2)\,\right]\,,
\end{equation}
which was partially considered in~\cite{AA2008,BS,spi:conm}.

On the right-hand side of equality \eqref{forminv} one has the product
$$
\Gamma(\pm 2a)=\Gamma(t^{\pm2};p,q)=\frac{1}{(t^2;p)_\infty(pt^{-2};p)_\infty
(t^{-2};q)_\infty(qt^{2};q)_\infty},
$$
which diverges for $t^2=p^{j}$ or $t^2=q^{j}$, $j\in\Z$. Therefore, for generic
values of $z_1, z_2$ one would expect problems with the inversion
relation at least for these values of the parameter $t$. However,
as mentioned above, the inversion relation remains true for $t=\pm1$
and it breaks down for  $t^2=p^jq^k,\; j,k\in\Z,\; (j,k)\neq (0,0)$.
Therefore relation \eqref{forminv} cannot be true for these values of
the $t$-parameter. It is necessary to perform a careful investigation
of the inversion relation for wider regions of parameters than it was
discussed above and understand properly when formula \eqref{forminv}
is valid.

A more detailed discussion of the intertwining operator
properties is given in the following sections.

\section{Connection with the elliptic Fourier transform}
\setcounter{equation}{0}

The Bailey chains technique is a well known tool for generating
infinite sequences of identities for plain and $q$-hypergeometric series
\cite{aar}.
It is very useful for proving the Rogers-Ramanujan identities
needed for solving $2d$ statistical mechanics models \cite{Baxter}.
An integral analogue of the Bailey chains was discovered in \cite{spi:bailey}
directly at the elliptic level
using a new universal integral transform for functions depending on
one parameter $t$. As shown in \cite{spi-war:inversions} the inverse of this
transform is substantially equivalent to the reflection of the parameter
$t\to t^{-1}$ which resembles the Fourier transform. Let us describe the key
ingredients of this elliptic Fourier transformation technique.

For a given function $\alpha(z,t)$ analytic near the unit circle $z\in\mathbb{T}$
define the integral transformation
\begin{equation}
\beta(w,t)=M(t)_{wz}\alpha(z,t):=\frac{(p;p)_\infty(q;q)_\infty}{4\pi\textup{i}}\int_\mathbb{T}
\frac{\Gamma(tw^{\pm1}z^{\pm1};p,q)}
{\Gamma(t^2,z^{\pm2};p,q)}\alpha(z,t)\frac{dz}{z},
\label{EFT}\end{equation}
where $|tw|, |t/w|<1$.
In \cite{spi:bailey} the functions related in this way were said to form
an integral elliptic Bailey pair with respect to the parameter $t$
(this particular renormalized form of the definition was presented
in \cite{spi-war:inversions}).

An integral analogue of the Bailey lemma provides a method to
generate an infinite sequence of Bailey pairs from a given germ pair.
Namely, suppose that $\alpha(z,t)$ and $\beta(w,t)$ form an integral elliptic Bailey pair
with respect to the parameter $t$. Then for $|s|,|t|<1, |\sqrt{pq}y^{\pm1}|<|st|$
the new functions
\begin{eqnarray} &&
\alpha'(w,st)=D(s;y,w)\alpha(w,t),\quad D(s;y,w)
=\Gamma(\sqrt{pq}s^{-1}y^{\pm1}w^{\pm1};p,q),
\label{D-function}  \\  &&
\beta'(w,st)=D(t^{-1};y,w) M(s)_{wx}D(st;y,x)\beta(x,t),
\label{BL}\end{eqnarray}
where $w\in \T$, form an integral elliptic Bailey pair with
respect to the parameter $st$.

The proof of this statement is easy. Indeed, the demand
$$
\beta'(w,st)=M(st)_{wz}\alpha'(z,st)
$$
is equivalent to the equality
$$
D(t^{-1};y,w) M(s)_{wx}D(st;y,x)M(t)_{xz}\alpha(z,t)=M(st)_{wz}D(s;y,z)\alpha(z,t).
$$
Since $D(t^{-1};y,w)D(t;y,w)=1$ due to the reflection equation
for the elliptic gamma function, we can rewrite it as an operator identity
\begin{equation}
M(s)_{wx}D(st;y,x)M(t)_{xz}=D(t;y,w)M(st)_{wz}D(s;y,z).
\label{STR}\end{equation}
Substitute explicit expressions for $M$ and $D$-operators
and change the order of integrations in the left-hand side expression. Then
one can check that relation \eqref{STR} is equivalent to the elliptic beta integral
evaluation formula with correct restrictions on the parameters,
which proves the statement.

Relation \eqref{STR} is nothing else that the operator
form of the star-triangle relation \cite{DM}.
Equality \eqref{STR} was explicitly presented in \cite{spi:essays} in the matrix
form as relation (6.5) (which is connected with some finite dimensional
reduction of the intertwining operators).

Comparing the operator $M(t)$ with the intertwining operator of the previous
section in multiplicative form \eqref{S1_mult}, we see that
\begin{equation}
M(t)_{wy}f(y)=[\mathrm{S}_1(u_1-u_2)f](w)
\label{inter'}\end{equation}
for $t=e^{2\pi\textup{i}(u_2-u_1)}$ and $\varphi_1=1$.
Also evidently
$$
\mathrm{S}_2(a)= D(e^{-2\pi\textup{i} a};e^{2\pi\textup{i}z_1},e^{2\pi\textup{i}z_2}),
$$
provided in \eqref{D-function} one has $+\sqrt{pq}$
in the elliptic gamma function arguments, since $e^{2\pi\textup{i}(\eta+\tau/2)}=+\sqrt{pq}$.
Another square root sign choice in \eqref{D-function} yields
the $\mathrm{S}_2$-operator corresponding to the Sklyanin algebra
generators differing from \eqref{Sklyan} by the addition of $1/2$
to the arguments of theta functions depending on the spin $\ell$
(see below).

The operators entering the integral Bailey
lemma $D$ and $M$ define thus elementary transposition operators
satisfying inversion relations $\mathrm{S}_2(a)\mathrm{S}_2(-a)=\II$,
$\mathrm{S}_1(a)\mathrm{S}_1(-a)=\II$
and the basic relation \eqref{STR} is equivalent to the Coxeter relation
\begin{equation}
\mathrm{S}_1(a)\mathrm{S}_2(a+b)\mathrm{S}_1(b)
=\mathrm{S}_2(b)\mathrm{S}_1(a+b)\mathrm{S}_2(a).
\label{Cox=STR}\end{equation}
Take this operator identity (or \eqref{STR}),
use the additive notation for $\mathrm{S}_k$-operators
\eqref{S2fin}--\eqref{S3fin}  (with $\varphi_k=1$) and act by it
onto the Dirac delta-function $(\delta(x-z)+\delta(x+z))/2$. Then one obtains
the equality \eqref{int1} which can be written as
\begin{eqnarray}\nonumber
&& \int_{0}^{1} \rho(u)\mathrm{D}_{\xi-a}(x,u)\mathrm{D}_{a+b}(y,u)
\mathrm{D}_{\xi-b}(w,u)du
\\ && \makebox[2em]{}
=\chi(a, b)\mathrm{D}_b(x,y)\mathrm{D}_{\xi-a-b}(x,w)\mathrm{D}_{a}(y,w),
\label{astr}\end{eqnarray}
where
\begin{equation}
\mathrm{D}_{a}(x,u)=\mathrm{S}_2(a)= D(e^{-2\pi \textup{i}a}; e^{ 2\pi\textup{i}  x},
e^{2\pi \textup{i}  u})=\Gamma(e^{ 2\pi\textup{i}(a-\xi\pm x \pm u)};p,q)
\label{weight}\end{equation}
and
\begin{eqnarray*}
&& e^{-4\pi\textup{i}\xi}:=pq,\quad
\rho(u)=\frac{ (p;p)_\infty(q;q)_\infty}{2\Gamma(e^{\pm 4\pi\textup{i}  u};p,q)},
\quad \chi(a,b)= \Gamma(e^{-4\pi\textup{i}a},e^{-4\pi\textup{i}b},
e^{4\pi\textup{i}(a+b-\xi)};p,q).
 \end{eqnarray*}
Equality \eqref{astr} represents a functional form of the star-triangle relation
considered in \cite{BS}, with
$\mathrm{D}_a(x,u)$ being the Boltzmann weight for edges connecting spins $x$ and $u$
sitting in the neighboring vertices of a lattice, $\rho(u)$ is related
to the self-energy for spins, $\xi$ is the crossing parameter. In this
picture the integration means a computation of the partition function
for an elementary star-shaped cell with contributions coming from all possible
values of the continuous ``spin" sitting in the central vertex.
So, we may conclude that the Bailey lemma
established in \cite{spi:bailey} is equivalent to the star-triangle relation
for particular elliptic hypergeometric Boltzmann weights.

Defining $\mathrm{S}_3(a)$ as an elliptic Fourier transformation operator for
the variable $z_2$ we obtain another generator of the group $\mathfrak{S}_4$.
In this way, the algebraic relations for the Bailey lemma ingredients appear to be
equivalent to the Coxeter relations for permutation group generators.
This fact was not understood in the original paper \cite{spi:bailey},
however the connection of the integral Bailey transformation to the
star-triangle relation was briefly remarked in \cite{spi:conm}.

\section{Uniqueness of solutions and the elliptic modular double}
\setcounter{equation}{0}

The concept of elliptic modular double introduced in \cite{AA2008}
allows us to fix the functional freedom in the definition of the
$\mathfrak{S}_4$-permutation group generators.

The general infinite-dimensional space solution of the Yang-Baxter e
quation we obtain
$$
\mathbb{R}_{12}(\mathbf{u}) =\mathbb{P}_{12}
\mathrm{S}_2(u_1-v_2)\,
\mathrm{S}_1(u_1-v_1)\,\mathrm{S}_3(u_2-v_2)\,
\mathrm{S}_2(u_2-v_1)
$$
is symmetric in the parameters $p$ and $q$, since we assume that
$u_{1,2}$ and $v_{1,2}$ are independent variables (or, equivalently,
$u$ and $g=\eta(2\ell+1)$ should be considered as independent of $\eta$ and $\tau$).
Using the multiplicative notation and the modified forms
of the Sklyanin algebra generators and intertwining operators,
this R-operator has the following explicit action on the functions
of two variables
\begin{eqnarray}\label{Rexplicit} &&
[\mathbb{R}_{12}(\mathbf{u})f](z_1,z_2)
=\frac{(p;p)_\infty^2(q;q)_\infty^2}{(4\pi \textup{i})^2}
\Gamma(\sqrt{pq}z_1^{\pm1}z_2^{\pm 1}e^{2\pi\textup{i}(u_1-v_2)};p,q)
\\ && \makebox[2em]{} \times
\int_{\mathbb{T}^2}
\frac{
\Gamma(e^{2\pi\textup{i}(v_1-u_1)}z_2^{\pm1}x^{\pm1},
e^{2\pi\textup{i}(v_2-u_2)}z_1^{\pm1}y^{\pm1},
\sqrt{pq}e^{2\pi\textup{i}(u_2-v_1)}x^{\pm1}y^{\pm1};p,q)}
{\Gamma(e^{4\pi\textup{i}(v_1-u_1)},e^{4\pi\textup{i}(v_2-u_2)},
x^{\pm2},y^{\pm2};p,q)}
f(x,y)\frac{dx}{x}\frac{dy}{y}.
\nonumber\end{eqnarray}

Because of the symmetry in $p$ and $q$, we have not one but two
$\mathrm{RLL}$-intertwining relations. The second one is obtained from \eqref{RLL}
simply by permuting $2\eta$ and $\tau$ (or $p$ and $q$):
\begin{equation}\label{RLL_doub}
\mathbb{R}_{12}(u-v)\,\mathrm{L}_1^{doub}(u)
\,\sigma_3\,\mathrm{L}_2^{doub}(v)=
\mathrm{L}_2^{doub}(v)\,\sigma_3
\,\mathrm{L}_1^{doub}(u)\,\mathbb{R}_{12}(u-v)\,.
\end{equation}
Remind that after our similarity transformation the initial
$\mathrm{L}$-operator is given by the expression \eqref{L_op}
where $w_a(u)=\theta_{a+1}(u+\eta|\tau)/\theta_{a+1}(\eta|\tau)$ and Sklyanin generators
$\mathbf{S}^a$ have the form \eqref{modSklgen}. The operator $\mathrm{L}^{doub}(u)$ is
obtained from $\mathrm{L}(u)$ simply by permuting $2\eta$ and $\tau$.
This means that $\mathrm{L}^{doub}(u)$ also has the form \eqref{L_op},
but now $w_a(u)=\theta_{a+1}(u+\tau/2|2\eta)/\theta_{a+1}(\tau/2|2\eta)$
and the operators $\mathbf{S}^a_{mod}$ \eqref{modSklgen}  should be replaced by
\begin{eqnarray}\nonumber &&
\mathbf{\tilde S}^a_{mod} = e^{-\pi\textup{i}\frac{\tau}{2}}
\frac{(\textup{i})^{\delta_{a,2}}\theta_{a+1}(\tau/2|2\eta)}{\theta_1(2 z|2\eta) }
 \Bigl[\,\theta_{a+1} \left(2z-g+\frac{\tau}{2}\big|2\eta\right)\cdot e^{-2\pi\textup{i}z}\cdot
 \mathrm{e}^{\frac{1}{2}\tau \partial_z}
\\ && \makebox[10em]{}
 - \theta_{a+1}\left(-2z-g+\frac{\tau}{2}\big|2\eta\right)\cdot e^{2\pi\textup{i}z}\cdot
\mathrm{e}^{-\frac{1}{2}\tau \partial_z}\, \Bigl]\,,
\label{mod_doub2}\end{eqnarray}
where $g$-parameter is the same arbitrary parameter as in \eqref{modSklgen}.

Such a direct product of two Sklyanin algebras was introduced
in \cite{AA2008} under the name {\em an elliptic modular double}.
It represents an elliptic analogue of Faddeev's
modular double \cite{fad:mod} described as a direct product of two
$q$-analogues of $sl(2)$-algebra, $U_q(sl(2))\otimes U_{\tilde q}(sl(2))$,
with $q=e^{4\pi \textup{i}\eta}$ and $\tilde q=e^{-\pi \textup{i}/\eta}$.

The modified operators $\mathrm{S}_{1,3}(a)$ are are invariant under
permutation of $p$ and $q$ and, therefore, they satisfy in addition
to \eqref{inter1} the $\mathbf{\tilde S}^a$-operator
intertwining relations as well:
\begin{equation}
\mathrm{S}_1\cdot\mathbf{\tilde S}^a(\ell_1) =
\mathbf{\tilde S}^a(-1-\ell_1)\cdot \mathrm{S}_1\,, \qquad
\mathrm{S}_3\cdot \mathbf{\tilde S}^a(\ell_2) =
\mathbf{\tilde S}^a(-1-\ell_2)\cdot \mathrm{S}_3\, ,
\label{inter2}\end{equation}
This bonus symmetry in $p$ and $q$ in $\mathrm{S}_k(a)$
and R-operator originates from a particular choice of the arbitrary elliptic
functions $\varphi_k(z_1,z_2)$ emerging in solutions of corresponding finite-difference
equations. We can invert the logic and demand from the very beginning
existence of the elliptic modular double \eqref{RLL_doub}.
Then we can repeat the same considerations as before
and get the same solutions of the finite-difference equations,
but now the phase factors are restricted by additional periodicity requirements
\begin{equation}
\varphi_k(z_1+\tau,z_2)=\varphi_k(z_1,z_2+\tau)=\varphi_k(z_1+\tau/2,z_2+\tau/2)=\varphi(z_1,z_2).
\label{period3}\end{equation}
Then we use the well-known Jacobi theorem stating that any function with three
incommensurate periods
$$
\varphi(z+\omega_1)=\varphi(z+\omega_2) =\varphi(z+\omega_3)=\varphi(z),
\quad \sum_{k=1}^3n_k\omega_k\neq 0,\quad n_k\in\Z,
$$
must be a constant, $\varphi=const$.
Since our $\varphi_k$-functions were fixed already
to be elliptic functions with periods $1$ and $2\eta$ we conclude
that the constraints \eqref{period3} with generic values of $\tau$ and $\eta$
enforce $\varphi_k(z_1,z_2)=const$.
Thus $\varphi_k$ may depend only on the parameters $u_{1,2}, v_{1,2}$, and we have
chosen the normalization $\varphi_k=1$ consistent with the unitarity constraint.

Thus the elliptic modular double (i.e., two $\mathrm{RLL}$-relations
for a given R-operator) fixes the intertwining operators $\mathrm{S}_k$
uniquely (up to the multiplication by a constant). Note that this double algebra
generators $\mathbf{S}^a$ and $\tilde{\mathbf{S}}^a$
do not commute with each other, however they satisfy some simple
anticommutation relations \cite{AA2008}.
As described also in \cite{AA2008}, there exists a second modular double
obtained by employing a modular transformation of theta functions
and the modified elliptic gamma function \cite{S2}, which will be considered below.

\section{Reduction to a finite-dimensional case}
\setcounter{equation}{0}

An intertwining operator for the Sklyanin algebra generators with positive
integer values of $2\ell+1$ was constructed in \cite{Z1} as a finite-difference
operator of a finite order. Its formal extension to infinite order
was used in \cite{DKK} for building solutions of YBE along the
same lines as described above using the realization of
$\mathrm{S}_{1,3}(a)$. In particular, the operator $\mathrm{S}_1(a)$
had the following form (see the beginning of Sect. 5.1 in \cite{DKK}):
\begin{eqnarray} &&
\mathrm{S}_1(a) = e^{2\pi\textup{i}a\eta x+\pi\textup{i}\eta a^2}
\frac{\Gamma(2\eta x)}{\Gamma(2\eta(x+a))}
\sum_{k=0}^\infty\frac{[2k-x-a]}{[-x-a]}\prod_{j=0}^{k-1}
\frac{[j-x-a][j-a]}{[j+1-x][j+1]} e^{(a-2k)\partial_x},
\label{S1_discr}\end{eqnarray}
where $[x]=\theta_1(2\eta x|\tau)$. However, this operator is not well defined
unless the infinite series terminates. In particular, the formal action of
this operator on a meromorphic function of $x$ yields in general a diverging series.
Therefore we assume, as in \cite{Z1}, that $a=2\ell+1$ is a positive integer.

Let us denote
$$
q=e^{4\pi \textup{i}\eta},\quad p=e^{2\pi\textup{i} \tau},
\quad w=e^{-2\pi \textup{i} \eta x},
\quad t=e^{-2\pi \textup{i} \eta a},
$$
and assume that $|q|<1$.
Then, using the theta function $\theta(x;p)=(x;p)_\infty(px^{-1};p)_\infty$,
we can write
\begin{eqnarray}  \nonumber &&
\mathrm{S}_1(a)f(w)=q^{\frac{ax}{2}+\frac{a^2}{4}}
\frac{\Gamma(w^{-2};p,q)}{\Gamma(t^{-2}w^{-2};p,q)}
\\ && \makebox[4em]{}\times
\sum_{k=0}^\infty t^{-2k}\frac{\theta(t^2w^2q^{2k};p)}
{\theta(t^2w^2;p)}\prod_{j=0}^{k-1}
\frac{\theta(t^2w^2q^j,t^2q^j;p)}{\theta(w^2q^{j+1},q^{j+1};p)}
f(tq^kw),
\label{S1_short}\end{eqnarray}
where $\theta(t_1,\ldots,t_k;p)=\theta(t_1;p)\ldots\theta(t_k;p)$ and
$\Gamma(x;p,q)$ is the elliptic gamma function \eqref{egamma_m}.
Because of the series termination, operator \eqref{S1_short}
is well defined for $|q|<1$ and $|q|>1$, and even $|q|=1$, provided $q$ is
not a root of unity.

Introduce now an integral transformation operator acting in the space
of $z\to z^{-1}$ invariant functions, $f(z^{-1})=f(z)$,
$$
[B(t)f](w)=\frac{\kappa}{\Gamma(t^2;p,q)}
\int_{C}\frac{dz}{2\pi\textup{i}z} g(z,w,t)\frac{\Gamma(tw^{\pm1}z^{\pm1};p,q)}
{\Gamma(z^{\pm2};p,q)}f(z),
$$
where $g(z,w,t)$ is some fixed function to be determined from the comparison
with operator $\mathrm{S}_1(a)$ \eqref{S1_discr} and
$\kappa$ was fixed earlier.
The contour of integration $C$ is chosen in such a way that it separates
geometric progressions of the integrand poles converging to zero $z=0$ from
their $z\to 1/z$ reciprocals. In particular, if $|tw|,|t/w|<1$ and the product $g(z,w,t)f(z)$
is an analytical function near the unit circle with positive orientation $\mathbb{T}$,
then one can choose $C=\mathbb{T}$.

Now we take $|twq^k|>1$, $k=0,\ldots,N$, $|twq^{N+1}|<1,$ $|twp|<1,$ $|t/w|<1$, and
the contour of integration $C$ as a deformation of $\mathbb{T}$ containing
in its interior the poles at $z=twq^k,\ k=0,\ldots,N$, lying outside $\mathbb{T}$
and excluding the poles at $z=t^{-1}w^{-1}q^{-k}$, $k=0,\ldots,N$, which
enter $\mathbb{T}$. Now we pull $C$ to $\mathbb{T}$ and pick up the residues
at $z^{\pm1}=twq^k$, $k=0,\ldots,N$. This yields the formula
\begin{eqnarray*} &&
[B(t)f](w)=\frac{\kappa}{\Gamma(t^2;p,q)}\sum_{k=0}^N
\stackreb{\lim}{z\to twq^k}(z-twq^k)\frac{\Gamma(tw^{\pm1}z^{\pm1};p,q)}
{z\Gamma(z^{\pm2};p,q)}
\\ && \makebox[4em]{}  \times
\left(g(twq^k,w,t)+g(t^{-1}w^{-1}q^{-k},w,t)\right)f(twq^k)
\\ &&
+\frac{\kappa}{\Gamma(t^2;p,q)}
\int_{\mathbb{T}}\frac{dz}{2\pi\textup{i}z} g(z,w,t)\frac{\Gamma(tw^{\pm1}z^{\pm1};p,q)}
{\Gamma(z^{\pm2};p,q)}f(z).
\end{eqnarray*}

Using the relation
$$
\stackreb{\lim}{z\to 1}(1-z)\Gamma(z;p,q)=\frac{1}{(p;p)_\infty(q;q)_\infty},
$$
we find
$$
\stackreb{\lim}{z\to twq^k}\left(1-\frac{twq^k}{z}\right)\Gamma(twz^{-1};p,q)
=\frac{1}{(p;p)_\infty(q;q)_\infty}\frac{1}{\theta(q^{-k},\ldots,q^{-1};p)}.
$$
As a result, we obtain
\begin{eqnarray*} &&
[B(t)f](w)=\frac{\Gamma(w^{-2};p,q)}{\Gamma(t^{-2}w^{-2};p,q)}\sum_{k=0}^N
\frac{g(twq^k,w,t)+g(t^{-1}w^{-1}q^{-k},w,t)}{2}
\\ && \makebox[4em]{}  \times
t^{-4k}w^{-2k}q^{-k^2}\frac{\theta(t^2w^2q^{2k};p)}{\theta(t^2w^2;p)}
\prod_{j=0}^{k-1}\frac{\theta(t^2q^j,t^2w^2q^j;p)}
{\theta(q^{j+1},w^2q^{j+1};p)}f(twq^k)
\\ &&
+\frac{\kappa}{\Gamma(t^2;p,q)}
\int_{\mathbb{T}}\frac{dz}{2\pi\textup{i}z} g(z,w,t)\frac{\Gamma(tw^{\pm1}z^{\pm1};p,q)}
{\Gamma(z^{\pm2};p,q)}f(z).
\end{eqnarray*}

Now we take the limit $t^2\to q^{-2\ell-1}$, which means $a\to 2\ell+1\in\Z_{>0}$.
Since $1/\Gamma(t^2;p,q)=1/\Gamma(q^{-2\ell-1};p,q)=0$, and the
integral $\int_{\mathbb{T}}$ is finite, we see that the second term
disappears with the final result
\begin{eqnarray} \nonumber &&
[B(t)f](w)\Big|_{t^2=q^{-2\ell-1}}=\frac{\Gamma(w^{-2};p,q)}{\Gamma(q^{2\ell+1}w^{-2};p,q)}
\sum_{k=0}^{2\ell+1}
\frac{ g(wq^{k-\frac{2\ell+1}{2} },w,q^{-\frac{2\ell+1}{2} })+
g(w^{-1}q^{-k+\frac{2\ell+1}{2} },w,q^{-\frac{2\ell+1}{2} }) }{2}
\\ && \makebox[2em]{}  \times
q^{2k(2\ell+1)}w^{-2k}q^{-k^2}\frac{\theta(w^2q^{2k-2\ell-1};p)}
{\theta(q^{-2\ell-1}w^2;p)}
\prod_{j=0}^{k-1}\frac{\theta(q^{j-2\ell-1},w^2q^{j-2\ell-1};p)}
{\theta(q^{j+1},w^2q^{j+1};p)}f(wq^{k-\frac{2\ell+1}{2} }).
\label{Bdiscr}\end{eqnarray}
This series terminates automatically at $k=2\ell+1=N$.

We have derived this result under the following constraints for $w$-variable
$|twq^{N+1}|<1$ and $|t/w|<1$, or $|q^{-\ell-1/2}|< |w|<|q^{\ell- N-1/2}|$.
However, the derived result is independent on these restrictions since in the limit
$t^2\to q^{-2\ell-1}$ there are $2\ell+2$ pairs of poles which pinch the integration
contour and one inevitably has to pass to the residues sum of the above form
independently on $w$-values.

Equating this expression with the terminating series operator $\mathrm{S}_1(a)$,
we come to the relation
$$
\frac{1}{2}\left(g(z,w,t)+g(z^{-1},w,t)\right)t^{-4k}w^{-2k}q^{-k^2}
=q^{\frac{ax}{2}+\frac{a^2}{2}}t^{-2k},\quad z=twq^k.
$$
It has unique solution invariant under the transformation $z\to 1/z$
$$
g(z,w,t)=q^{\frac{ax}{2}+\frac{a^2}{2}}\exp \left[\frac{(\log z)^2-(\log tw)^2}{\log q}\right],
$$
as wanted. However, it is not an analytical function of $z$. But for validity of
our consideration we needed only that the product $g(z,w,t)f(z)$ be analytical.
Let us denote $z=e^{2\pi\textup{i}\eta u}$. Then we can write
$$
g(z=e^{2\pi\textup{i}\eta u},w=e^{-2\pi\textup{i}\eta x},
t=e^{-2\pi\textup{i}\eta a})=e^{\pi \textup{i}\eta(u^2-x^2)}.
$$
Thus we have to demand that $\phi(z):=e^{\pi \textup{i}\eta u^2}f(z)$
is a meromorphic function of $z$. Let us demand additionally that
$\phi(z^{-1})=\phi(z)$. Then the operator $B_{mod}(t)$ defined after
the similarity transformation
$$
B_{mod}(t)=e^{\pi \textup{i}\eta x^2}B(t)e^{-\pi \textup{i}\eta u^2}
$$
maps the space of meromorphic $A_1$-invariant functions $\phi(z)$ onto itself.
Let us replace in $B_{mod}(t)$ the variables $u\to u/\eta, x\to x/\eta$,
i.e. we pass to the parameterization $z=e^{2\pi\textup{i} u},w=e^{-2\pi\textup{i} x}.$
Then explicitly we have
$$
[B_{mod}(t)\psi](w)=\frac{(p;p)_\infty(q;q)_\infty}{2\Gamma(t^2;p,q)}
\int_{0}^1 \frac{\Gamma(te^{2\pi\textup{i} (\pm x\pm u)};p,q)}
{\Gamma(e^{\pm 4\pi\textup{i} u};p,q)}\psi(e^{2\pi\textup{i} u}) du.
$$
Evidently, for $t=e^{-2\pi\textup{i}(u_1-u_2)}$ this operator coincides with
the intertwining operator $\mathrm{S}_1(u_1-u_2)$ \eqref{S1fin} (for $\varphi_1=1$)
in such a way that
$$
[\mathrm{S}_1(b)\Psi](x)=[B_{mod}(e^{-2\pi\textup{i} b})\psi](e^{2\pi\textup{i}x}),
$$
where $\Psi(x)=\psi(e^{2\pi\textup{i}x})$ (i.e., the difference is only in the additive
or multiplicative notation). We conclude that the discrete intertwining operator
of \cite{Z1} is a special limiting case of our intertwining operator $\mathrm{S}_1(b)$.

Consider the identity
$$
\mathrm{S}_{3}(u_1-u_2) \mathrm{R}_{12}(u_1,u_2| v_1,v_2) =
\mathrm{R}_{12}(u_2,u_1| v_1,v_2)\mathrm{S}_{1}(u_1-u_2),
$$
which is easy to check using the R-matrix factorization.
Multiplying it by the permutation operator $\P_{12}$ and using
relation $\P_{12}\mathrm{S}_{3}=\mathrm{S}_{1} \P_{12}$,  we obtain the identity
$$
\mathrm{S}_{1}(u_1-u_2) \R_{12}  =\R_{12}' \mathrm{S}_{1}(u_1-u_2),
$$
where $\R_{12}=\P_{12}\mathrm{R}_{12}(u_1,u_2| v_1,v_2)$
and $\R_{12}'=\P_{12}\mathrm{R}_{12}(u_2,u_1| v_1,v_2)$.
It shows that the kernel space of $\mathrm{S}_1$-operator
is mapped onto itself by our R-matrix $\R_{12}$, i.e. zero modes of  $\mathrm{S}_1$
form an invariant space for the action of operator  \eqref{Rexplicit}.

In the same way, intertwining relations \eqref{inter1} and \eqref{inter2}
show that the kernel space of $\mathrm{S}_1$-operator forms an
invariant space for the elliptic modular double, i.e. it is mapped onto
itself by the Sklyanin algebra generators ${\mathbf S}^a_{mod}$ \eqref{modSklgen}
and  $\mathbf{\tilde S}^a_{mod}$ \eqref{mod_doub2}. The standard Sklyanin algebra
admits finite-dimensional representations in the space of theta functions of
modulus $\tau $ and, naturally, its modular partner has similar representations
in the space of theta functions of modulus $2\eta$. Therefore, there
should exist finite-dimensional representations of the Sklyanin algebra
in the space of products of theta functions of moduli $\tau$ and $2\eta$.
As shown in \cite{AA2008}, the elliptic modular double has a non-trivial
automorphism permuting these theta-function submodules.

In general, the latter factorization of Sklyanin-algebra modules is related
to the fact that sums of residues of poles of elliptic hypergeometric integrals
factorizes to the product of two elliptic hypergeometric series with permuted
modular parameters $p$ and $q$, which leads to the concept of two-index
biorthogonality relation \cite{S2}.
Let us show that the corresponding residue calculus
demonstrates existence of nontrivial finite-dimensional
kernel space for $\mathrm{S}_1(a)$-operator
for $a=\eta(2\ell_q+1)+\tau(\ell_p+1/2)$ and
$a=1/2+\eta(2\ell_q+1)+\tau(\ell_p+1/2)$, where $2\ell_q, 2\ell_p
\in \Z_{\geq 0}$ (i.e., the reduction considered above is only a special
case of the much more general finite-dimensional reduction).

For this, let us repeat our consideration with a different set of
poles taken into account. Namely, let us take the limit $t^2\to q^{-N}
p^{-M}$ with $N,M\in\Z_{\geq0}$. Now a number of poles
leave the unit circle and a number of them enter it. As indicated
during the discussion of analytical continuation of the $\mathrm{S}_1$-operator,
there will be precisely $(N+1)(M+1)$ pairs of poles pinching the
integration contour $C$ in $B_{mod}$ (we choose $g(z,w,t)=1$).
Let us pull the integration contour through one half of the poles
approaching it, say, through $z=tp^jq^k,\, j=0,\ldots, M,\, k=0,\ldots,N$,
and sum the corresponding residues. Denote also $N=2\ell_q+1$
and $M=2\ell_p+1$ with half integer $\ell_p,\ell_q\geq -1/2$ in order
to match notation with the spins of Sklyanin algebras entering the
elliptic modular double. Now, using relations
\begin{equation}
\stackreb{\lim}{z\to twq^kp^j}\left(1-\frac{twq^kp^j}{z}\right)\Gamma\Big(
\frac{tw}{z};p,q\Big)
=\frac{(-1)^{jk+j+k}q^{(j+1)k(k+1)/2}
p^{(k+1)j(j+1)/2}}{(p;p)_\infty(q;q)_\infty\theta(q,\ldots,q^k;p)\theta(p,\ldots,p^j;q)}
\label{residues}\end{equation}
and
$$
\theta(p^kz;p)=(-z)^{-k}p^{-\frac{k(k-1)}{2}}\theta(z;p),\quad k\in\Z,
$$
we find
\begin{eqnarray}\nonumber &&
[B_{mod}(t)f](w)
=\frac{\Gamma(w^{-2};p,q)}{\Gamma(t^{-2}w^{-2};p,q)}
\sum_{k=0}^{2\ell_q+1}\frac{\theta((tw)^2q^{2k};p)}
{\theta((tw)^2;p)}
\prod_{b=0}^{k-1}\frac{\theta(t^2q^b,(tw)^2q^b;p)}
{\theta(q^{b+1},w^2q^{b+1};p)}
\\ && \makebox[2em]{}  \times
\sum_{j=0}^{2\ell_p+1}
\frac{\theta((tw)^2p^{2j};q)}
{\theta((tw)^2;q)}
\prod_{a=0}^{j-1}\frac{\theta(t^2p^a,(tw)^2p^a;q)}
{\theta(p^{a+1},w^2p^{a+1};q)}
\frac{f(tq^kp^jw)}{t^{4(jk+j+k)}w^{2(j+k)}p^{2jk+j^2}q^{2jk+k^2}},
\label{Bact1}\end{eqnarray}
where we should substitute the actual value of
$t=\pm q^{-\ell_q-1/2}p^{-\ell_p-1/2}$.

Note that for $\ell_p=\ell_q=-1/2$, when $t=\pm1$ (or $a=0,1/2$),
this series contains only one term. Thus, for $t=1$ (or $a=0$)
the intertwining operator becomes the unity operator,
$B_{mod}(1)=\II$ (or $\mathrm{S}_1(0)=\II$). For $t=-1$ (or $a=1/2$)
the intertwining operator becomes the parity operator,
$$
B_{mod}(-1)=P, \qquad Pf(w)=f(-w).
$$
In the additive notation, we can write $\mathrm{S}_1(1/2)=e^{\frac{1}{2}\partial_x}$,
which is the half-period shift for the variable $w=e^{-2\pi\textup{i} x}$.
All our functions are analytical in $w$ (after removal of the exponential
factors from $\mathrm{S}_{1,3}$-operators) and $t=e^{-2\pi\textup{i} a}$.
Therefore $\mathrm{S}_1(a+1)=\mathrm{S}_1(a)$ and
 $\mathrm{S}_1(-1/2)=\mathrm{S}_1(1/2)$, so that
$$
\mathrm{S}_1^2(1/2):= \mathrm{S}_1(-1/2)\mathrm{S}_1(1/2)=P^2=e^{\partial_x},
$$
which is equivalent to the unity operator, since it is the operator of
shifting by the period 1.

If we remove the constraint $t=\pm q^{-\ell_q-1/2}p^{-\ell_p-1/2}$
in \eqref{Bact1} (which is not legitimate in our procedure since
we would need to restore an integral part in $B_{mod}$)
and set formally $\ell_q,\ell_p\to\infty$, we would obtain
the double infinite series operator which sharply
differs from the univariate infinite series operator used in \cite{DKK}.
This fact shows the principle difference of the rigorously defined
integral operator $B_{mod}(t)$ of \cite{spi:bailey}
from the formal infinite elliptic hypergeometric series realization
of the $\mathrm{S}_1$-operator of \cite{DKK}.

From \eqref{Bact1} we see that for special quantized values of $t$
the $B_{mod}(t)$-operator has an almost factorized form, the only non-factorizable
pieces being the multiplier $(pqt^2)^{-2jk}$ and the action on the
function itself $f(tq^kp^jw)$. Suppose now that we work
in the space of functions of the form
\begin{equation}
f(z)=\theta^+_{4\ell_q}(z;p)\theta^+_{4\ell_p}(z;q),
\label{ansatz}\end{equation}
where $\theta^+_{4\ell}(z;q)$ is an arbitrary $A_1$-symmetric
theta-function of order $4\ell\geq 0$ with the modular parameter $q$,
i.e. a holomorphic function of $z\in\C^*$ satisfying the properties
\begin{equation}
\theta^+_{4\ell}(z^{-1};q)=\theta^+_{4\ell}(z;q),\qquad
\theta^+_{4\ell}(qz;q)=\frac{1}{(qz^2)^{2\ell}}\theta^+_{4\ell}(z;q).
\label{theta4l}\end{equation}
As mentioned already, such a consideration is inspired by our
starting intertwining relation \eqref{inter1} and its partner \eqref{inter2}
with $\mathrm{S}_1\sim B(t)$,
which show that the space of zero modes of the $B$-operator
forms an invariant space for two Sklyanin algebra generators.
It is known that the standard Sklyanin algebra generators leave invariant the
space formed by theta functions $\theta^+_{4\ell}(z;q)$ \cite{Sklyanin2}.
But our operator  $B_{mod}(t)$ is symmetric in $p$ and $q$.
Therefore it is natural to consider the above ansatz for $f(z)$ to have
an explicit realization of the automorphism for the elliptic modular
double permuting two Sklyanin algebras \cite{AA2008}.

Substitute now expression \eqref{ansatz} into formula \eqref{Bact1}, use
second relation in \eqref{theta4l}, explicitly substitute
the value of $t$, and remove where possible extra powers
of $p$ or $q$ from theta-function arguments. Then
for integer $\ell_p$ and $\ell_q$ we obtain
\begin{eqnarray}\nonumber &&
[B_{mod}f](w)
=\frac{(-1)^{(2\ell_q+1)(2\ell_p+1)}
p^{\ell_p(1-2\ell_p\ell_q+2\ell_p-2\ell_q)}
q^{\ell_q(1-2\ell_p\ell_q+2\ell_q-2\ell_p)} }
{w^{2(2\ell_p+2\ell_q+1)}\prod_{b=0}^{2\ell_q}\theta(w^{-2}q^b;p)\prod_{a=0}^{2\ell_p}\theta(w^{-2}p^a;q)}
\\  \nonumber &&\makebox[2em]{} \times
\left(\sum_{k=0}^{2\ell_q+1}q^k\frac{\theta(w^2q^{2k-2\ell_q-1};p)}
{\theta(w^2q^{-2\ell_q-1};p)}
\prod_{b=0}^{k-1}\frac{\theta(q^{b-2\ell_q-1},w^2q^{b-2\ell_q-1};p)}
{\theta(q^{b+1},w^2q^{b+1};p)}\theta^+_{4\ell_q}(p^{-\frac{1}{2}}q^{k-\ell_q-\frac{1}{2}}w;p)\right)
\\ && \makebox[2em]{}  \times
\left(\sum_{j=0}^{2\ell_p+1} p^j
\frac{\theta(w^2p^{2j-2\ell_p-1};q)}{\theta(w^2p^{-2\ell_p-1};q)}
\prod_{a=0}^{j-1}\frac{\theta(p^{a-2\ell_p-1},w^2p^{a-2\ell_p-1};q)}
{\theta(p^{a+1},w^2p^{a+1};q)}
\theta^+_{4\ell_p}(q^{-\frac{1}{2}}p^{j-\ell_p-\frac{1}{2}}w;q)\right).
\label{Bact2}\end{eqnarray}
Similar expressions are found when one of the parameters $\ell_p$
or $\ell_q$ (or both) is a half-integer.
We see a complete factorization of the action of our operator to the
proper subspaces $\theta^+_{4\ell_q}(z;p)$ and $\theta^+_{4\ell_p}(z;q)$,
so that each of its factors is independent on the other one.
This means that after finding zero modes for one of the factors
other zero modes are obtained simply by the interchange $p \leftrightarrow q$
and $\ell_p \leftrightarrow \ell_q$.

Note that it is not legitimate to choose $\ell_p=-1/2$ (or $\ell_q=-1/2$)
in formula \eqref{Bact2}, since the second relation in \eqref{theta4l}
is valid only for $\ell\geq 0$. For $2\ell_p+1=0$ (or $2\ell_q+1=0$),
when the second sum is absent, we obtain the previously considered
operator \eqref{Bdiscr}.
We have found thus a space of zero modes of the elliptic Fourier transformation
operator $B_{mod}(t)$ of dimension $d_{zm}=(2\ell_p+1)(2\ell_q+1)$
for $\ell_p, \ell_q\geq 0$. For $2\ell_p+1=0$ one has $d_{zm}=2\ell_q+1$
and, vice versa, for $2\ell_q+1=0$ one has $d_{zm}=2\ell_p+1$.
For $2\ell_p+1=2\ell_q+1=0$ (i.e., $t=\pm1$) there are no zero modes since
$B_{mod}(1)=1$ and $B_{mod}(-1)=P$, the parity operator.
This space forms a nontrivial finite-dimensional invariant subspace of
the $\mathrm{R}$-operator $\R_{12}$ which we plan to investigate in detail
in the future.

It would be interesting to characterize the full space of zero modes of the
integral operator $B_{mod}(t)$. We conjecture that for holomorphic functions
$f(z),\, z\in\C^*$, satisfying the property $f(z^{-1})=f(z)$,
and generic values of bases $p$ and $q$ such zero modes
exist only for $t^2= q^{N}p^{M}$ with $N,M\in\Z,\, (N,M)\neq (0,0)$.
The $A_1$-symmetric products of theta functions of moduli $p$ and $q$
described above are conjectured to form the full finite-dimensional subspace
of these zero modes. Respectively, we conjecture that the latter space
describes all finite-dimensional modules for our R-operator  $\R_{12}$
bases on holomorphic functions.
In this respect it would be interesting to understand how expression
\eqref{Rexplicit} is reduced to Baxter's R-matrix \eqref{Baxter},
Sklyanin's $L$-operator \cite{Sklyanin1}, and
how it is related to Felderhof's solution of YBE \cite{Fel}.

Examples of the meromorphic zero modes for $B_{mod}(t)$ with generic
continuous values of the parameter $t$ are found from the elliptic beta integral
evaluation \eqref{ell_beta}. Indeed, let us rewrite this formula in the form
\begin{eqnarray*} &&
[B_{mod}(t)f](w)=g(w), \qquad
f(z)=\prod_{j=1}^4 \Gamma(t_jz^{\pm1};p,q), \quad
\\ && \makebox[2em]{}
g(w)=\prod_{j=1}^4 \Gamma(tw^{\pm1}t_j;p,q)\prod_{1\leq j<k\leq4}\Gamma(t_jt_k;p,q),
\end{eqnarray*}
where $t^2\prod_{j=1}^4t_j=pq$ and the integration contour $C$
in $B_{mod}(t)$ is chosen in an appropriate way.
Take now, for instance, $t_3t_4=pq$ and assume that parameters
$t_1$ and $t_2$
do not depend on $w$ and $t_1t_2p^jq^k\neq 1,\, j,k \in\Z_{\geq 0}$
(or $t^2\neq p^{-j}q^{-k}$ since we have the constraint $t^2t_1t_2=1$).
Then $\Gamma(t_3t_4;p,q)=0$ and no singularities emerge from
other elliptic gamma functions in $g(w)$. Therefore
one obtains $g(w)=0$, i.e. $f(z)= \Gamma(t_1z^{\pm1},t_2z^{\pm1};p,q)$
is a meromorphic zero mode of the integral operator $B_{mod}(t)$.
If the values of $t_1$ and $t_2$ depend on $w$ then instead of
vanishing the function $g(w)$ may diverge, which is illustrated by the
inversion relation $B_{mod}(t)B_{mod}(t^{-1})=\II$.

\section{Solutions for $\mathrm{Im}(\eta)<0$ and $\mathrm{Im}(\eta)=0$}
\setcounter{equation}{0}

Suppose now that Im$(\eta)<0$. Then a particular solution of the starting equation
for $\Phi(z)$ in \eqref{ab} has the form
$$
\Phi(z) =  \frac{\Gamma(z+b-2\eta|\tau, -2\eta)}{\Gamma(z+a-2\eta|\tau, -2\eta)}.
$$
Note the flip of gamma functions and a shift of $z$ by $2\eta$ with respect
to the Im$(\eta)>0$ case. Using this fact it is not difficult to find
solutions of equations \eqref{pm}, \eqref{eqxfin},
\eqref{eqzfin}, and \eqref{eqxz} for Im$(\eta)<0$.
For instance, instead of \eqref{Delta} one finds
$$
\Delta(z,x) = \mathrm{e}^{- \frac{\pi \textup{i}}{\eta} (x^2-z^2)}
\,\frac{\Gamma(\pm z \pm x -\eta+s|\tau,-2\eta)}
{\Gamma(\pm 2x|\tau,-2\eta)}\varphi(z,x),
$$
where $\varphi(z,x)$ has the same properties as before. Now $\mathrm{S}_k$-operators
should act on functions of the form
$\mathrm{e}^{\frac{\pi \textup{i}}{\eta} x^2}\Psi(x)$
where $\Psi(x+1)=\Psi(-x)=\Psi(x)$. Passing to the space of functions $\Psi(x)$
we come to the following result.

\begin{theorem}
Let $\mathrm{Im}(\eta)<0$ (or $|q|>1$) and $V$ be the space of even and periodic
functions of two complex variables $\Psi(z_1,z_2)$ with the period 1
which do not have simple poles in the domains
$\text{Im}(\eta)\leq \text{Im}(z_1), \text{Im}(z_2)\leq -\text{Im}(\eta)$.
Define three operators
\begin{equation}
[\mathrm{S}_2(u_2-v_1)\Psi](z_1,z_2)=
\Gamma\left(\pm z_1\pm z_2 - u_2  + v_1 - \eta+\textstyle{\frac{\tau}{2}}|\tau,-2\eta\right)
\varphi_2(z_1,z_2)\cdot\Psi(z_1,z_2),
\label{S2fin-}\end{equation}
where $|\sqrt{p/q}e^{2\pi\textup{i}(v_1-u_2)}|<|q|^{-1}$,
\begin{equation}
[\mathrm{S}_1(u_1-u_2)\Psi](z_1,z_2)=\frac{\kappa '}{\Gamma(2u_1-2u_2|\tau,-2\eta)}
\int_0^1\frac{\Gamma(\pm z_1 \pm x +u_1-u_2|\tau,-2\eta)}
{\Gamma(\pm 2x|\tau,-2\eta)}\varphi_1(z_1,x)\cdot \Psi(x,z_2)dx,
\label{S1fin-}\end{equation}
where $|e^{2\pi\textup{i}(u_1  - u_2\pm z_1)}|<|q|^{-1/2}$ and
$\kappa '=(p;p)_\infty(q^{-1};q^{-1})_\infty/2$,
\begin{equation}
[ \mathrm{S}_3(v_1-v_2)\Psi](z_1,z_2)=\frac{\kappa '}{\Gamma(2v_1-2v_2|\tau,-2\eta)}
\int_0^1\frac{\Gamma(\pm z_2 \pm x +v_1-v_2|\tau,-2\eta)}
{\Gamma(\pm 2x|\tau,-2\eta)}\varphi_3(z_2,x)\cdot \Psi(z_1,x)dx,
\label{S3fin-}\end{equation}
where $|e^{2\pi\textup{i}(v_1  - v_2\pm z_2)}|<|q|^{-1/2}$. Here $\varphi_k(z,x)$-functions
have the same properties as in the case Im$(\eta)>0$.

Then the operators $\mathrm{S}_k(\bf{u})$ map the space $V$ onto itself and satisfy
the defining intertwining relations \eqref{RLL13}, \eqref{RLL2}, and \eqref{RLL4},
provided in the corresponding $\mathrm{L}$-operator \eqref{L_op} one uses the Sklyanin
algebra generators realization
$$
\mathbf{S}^a_{mod}= e^{\pi\textup{i}\eta}\frac{(\textup{i})^{\delta_{a,2}}
\theta_{a+1}(\eta)}{\theta_1(2 z) } \Bigl[\,\theta_{a+1} \left(2
z-2\eta\ell\right)\cdot e^{2\pi\textup{i}z}\cdot \mathrm{e}^{\eta \partial} - \theta_{j+1}
\left(-2z-2\eta\ell\right)\cdot e^{-2\pi\textup{i}z}\cdot  \mathrm{e}^{-\eta \partial}\, \Bigl].
$$
They satisfy also the Coxeter relations (\ref{def1}),~(\ref{def2}), and~(\ref{def3})
as a consequence of the elliptic beta
integral evaluation formula with $q$ replaced by $q^{-1}$.
\end{theorem}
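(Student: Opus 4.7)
The plan is to mirror the analysis performed in the $\mathrm{Im}(\eta)>0$ regime throughout Sections~\ref{Sk} and~\ref{Coxeter}, replacing the elliptic gamma function $\Gamma(z|\tau,2\eta)$ everywhere by $\Gamma(z|\tau,-2\eta)$ and flipping certain inequalities. The opening observation is that the finite-difference equations \eqref{pm}, \eqref{eqxfin}, \eqref{eqzfin}, and \eqref{eqxz} that characterize the kernels of $\mathrm{S}_2$ and $\mathrm{S}_{1,3}$ were derived from purely algebraic theta-function identities and are therefore insensitive to the sign of $\mathrm{Im}(\eta)$. What does change is the convergent solution of the scalar functional equation \eqref{ab}: when $|q|>1$ the infinite product defining $\Gamma(z|\tau,2\eta)$ diverges, and the relevant particular solution becomes
\[
\Phi(z)=\frac{\Gamma(z+b-2\eta|\tau,-2\eta)}{\Gamma(z+a-2\eta|\tau,-2\eta)},
\]
as recorded at the start of this section. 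Substituting this into the factorized solution for $\mathrm{S}(z_1,z_2)$ and into the analogous expression for $\Delta(z,x)$ produces the explicit kernels \eqref{S2fin-}--\eqref{S3fin-}, including the sign-flip in the exponent of $\mathrm{e}^{\mp\pi\mathrm{i} x^2/\eta}$.

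Next I would re-examine the function space, integration contour, and periodicity conditions. The inequality $\mathrm{Im}(\eta)<0$ inverts the orientation of the analyticity rectangles, which accounts precisely for the condition $\mathrm{Im}(\eta)\le\mathrm{Im}(z_{1,2})\le-\mathrm{Im}(\eta)$ in the definition of $V$. The similarity transformation that eliminates the annoying exponential from the base functions is now carried out with $e^{-\pi\mathrm{i} z^2/\eta}$ in place of $e^{\pi\mathrm{i} z^2/\eta}$, and this is precisely what produces the modified Sklyanin generators indicated in the theorem with the swap $e^{2\pi\mathrm{i} z}\leftrightarrow e^{-2\pi\mathrm{i} z}$ compared to \eqref{modSklgen}. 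Once the kernels and generators are matched in this way, the intertwining relations \eqref{RLL13}, \eqref{RLL2}, and \eqref{RLL4} follow line by line from the factorized L-operator manipulations carried out in Section~\ref{Sk}, since those reductions used nothing beyond algebraic theta-function identities.

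For the Coxeter relations, the cubic star-triangle identity again reduces to the kernel equation \eqref{int1}, which is an instance of the elliptic beta integral. The evaluation formula \eqref{ell_beta} is symmetric under $p\leftrightarrow q$, but more importantly it admits an analytic continuation to the $|q|>1$ regime upon deforming the integration interval $[0,1]$ to a contour separating the two converging pole sequences in the flipped regime, with the measure renormalization $\kappa\to\kappa'$; this is the precise sense in which ``replacing $q$ by $q^{-1}$'' works. The quadratic inversion relations $\mathrm{S}_k(a)\mathrm{S}_k(-a)=\II$ follow from the flipped analogue of Theorem~\ref{inversionthm}, obtained by the symmetry $\eta\to-\eta$ in that inversion identity.

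The main obstacle will be the careful bookkeeping of pole locations and pinching conditions in the flipped regime: with $|q|>1$, the sequences of poles of $\Gamma(tz^{\pm1}w^{\pm1};p,q)$ that previously accumulated at $z=0$ now escape to infinity and vice versa, so one must verify afresh that the unit circle (or unit interval) separates the correct pole sets for parameters in the stated domain, and re-derive the analogues of \eqref{poles}--\eqref{t2} controlling the analytic continuation. Once this combinatorial check is completed, every remaining step of the $\mathrm{Im}(\eta)>0$ argument transfers without further change.
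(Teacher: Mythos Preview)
Your proposal is correct and follows essentially the same route as the paper: the paper's proof is equally terse, noting that the defining finite-difference equations are unchanged, that the convergent solution to \eqref{ab} for $\mathrm{Im}(\eta)<0$ is the flipped ratio $\Gamma(z+b-2\eta|\tau,-2\eta)/\Gamma(z+a-2\eta|\tau,-2\eta)$, and that conjugation by $e^{-\pi\mathrm{i}z^2/\eta}$ yields the stated modified Sklyanin generators. One small sharpening: the Coxeter relations do not require analytic continuation of \eqref{ell_beta} to $|q|>1$; rather, since the kernels are now built from $\Gamma(\cdot|\tau,-2\eta)=\Gamma(\cdot;p,q^{-1})$ with $|q^{-1}|<1$, the star-triangle identity \eqref{int1} is a \emph{direct} instance of the elliptic beta integral with bases $(p,q^{-1})$, which is what the paper means by ``$q$ replaced by $q^{-1}$.''
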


For the choice $\varphi_k=1$ the $\mathrm{R}$-matrix has the following explicit form
\begin{eqnarray}\label{Rexplicit-} &&
[\mathbb{R}_{12}(\mathbf{u})f](z_1,z_2)
=\frac{(p;p)_\infty^2(q^{-1};q^{-1})_\infty^2}{(4\pi \textup{i})^2}
\Gamma(\sqrt{p/q}z_1^{\pm1}z_2^{\pm 1}e^{2\pi\textup{i}(v_2-u_1)};p,q^{-1})
\\ && \makebox[2em]{} \times
\int_{\mathbb{T}^2}
\frac{
\Gamma(e^{2\pi\textup{i}(u_1-v_1)}z_2^{\pm1}x^{\pm1},
e^{2\pi\textup{i}(u_2-v_2)}z_1^{\pm1}y^{\pm1},
\sqrt{p/q}e^{2\pi\textup{i}(v_1-u_2)}x^{\pm1}y^{\pm1};p,q^{-1})}
{\Gamma(e^{4\pi\textup{i}(u_1-v_1)},e^{4\pi\textup{i}(u_2-v_2)},
x^{\pm2},y^{\pm2};p,q^{-1})} f(x,y)\frac{dx}{x}\frac{dy}{y}.
\nonumber\end{eqnarray}
Evidently, this expression is symmetric in $p$ and $q^{-1}$, i.e. there
exists the second $\mathrm{RLL}$-intertwining relation obtained from \eqref{RLL}
simply by permuting $-2\eta$ and $\tau$ (or $p$ and $q^{-1}$). The demand
of existence of this modular double forces the functions $\varphi_k$
to be constants independent on $z_1$ and $z_2$.
Note that the operator \eqref{Rexplicit-} is formally obtained from
\eqref{Rexplicit} simply by the changes $u-v \to v-u$ and $\eta \to -\eta$.
However, in difference from the Baxter $\mathrm{R}$-matrix \eqref{Baxter},
this is not a symmetry transformation since both expressions are
defined only for a particularly fixed sign of Im$(\eta)$.

Consideration of the regime Im$(\eta)=0$, or $|q|=1$, is substantially more complicated.
One has to use the modified elliptic gamma function \cite{S2}.
Before passing to corresponding considerations, we would like to consider
the situation when $|q|<1$ and the Sklyanin algebra generators have the form
\begin{equation}\label{autoSklyan}
\mathbf{S}^a =\frac{(\textup{i})^{\delta_{a,2}}
\theta_{a+1}(\eta)}{\theta_1(2 z) } \Bigl[\,\theta_{a+1} \left(2
z-2\eta\ell+\frac{1}{2}\right)\cdot e^{\eta\partial_z}- \theta_{a+1}
\left(-2z-2\eta\ell+\frac{1}{2}\right)\cdot e^{-\eta\partial_z}\,\Bigr] ,
\end{equation}
which differs from \eqref{Sklyan} by the addition of $1/2$ to arguments
of theta functions depending on the spin $\ell$.
These operators represent a particular automorphism of the algebra \cite{Sklyanin2}
with the Casimir operators changed to
$$
\mathbf{K}_0 = 4\,\theta_2^2\bigl((2\ell+1)\,\eta\bigr)\ ;\qquad
\mathbf{K}_2 = 4\,\theta_2\bigl(2(\ell+1)\,\eta\bigr)\,\theta_2(2\ell\,\eta)\,.
$$

One can check that in this case the factorization \eqref{factL}
has the same form with the replacements $u_1\to u_1-1/4$ and $u_2\to u_2+1/4$.
Similar shifts  $v_1\to v_1-1/4$ and $v_2\to v_2+1/4$ take place in
the second L-operator entering intertwining relation \eqref{RLL2}.
Substituting these shifts in the appropriate places of derivation of
the $\mathrm{S}_2$-operator, this time we come to the following equations
\begin{eqnarray}\nonumber &&
\theta_3(z_1+z_2 +u_2-v_1)\, \mathrm{S}(z_1-\eta,z_2) =
\theta_3(z_1+z_2 +v_1-u_2)\, \mathrm{S}(z_1,z_2+\eta)\ ,
\\ \nonumber &&
\theta_3(z_1+z_2 -u_2 +v_1)\, \mathrm{S}(z_1 + \eta ,z_2) =
\theta_3(z_1+z_2 -v_1 +u_2)\, \mathrm{S}(z_1,z_2-\eta)\ ,
\\ \nonumber &&
\theta_3(z_1-z_2 + u_2 - v_1)\, \mathrm{S}(z_1 - \eta,z_2) =
\theta_3(z_1-z_2 +v_1 -u_2)\, \mathrm{S}(z_1,z_2-\eta)\ ,
\\  &&
\theta_3(z_1-z_2 -u_2 +v_1)\, \mathrm{S}(z_1 + \eta,z_2) =
\theta_3(z_1-z_2 -v_1 +u_2)\, \mathrm{S}(z_1,z_2+\eta)\
\label{modS2eqs}\end{eqnarray}
with the general solution for $\mathrm{S}_2$-operator
\begin{equation}
\mathrm{S}_2(a) =
\Gamma\left(\pm z_1\pm z_2 + a+ \textstyle{\frac{1}{2}}
+\eta+\textstyle{\frac{\tau}{2}|\tau,2\eta}\right)\varphi_2(z_1,z_2),
\quad a=u_2  - v_1.
\label{S2mod}\end{equation}
For $\varphi_2=1$, one still has $\mathrm{S}_2(-a)\mathrm{S}_2(a)=\II$,
as needed. Similar picture holds for $|q|>1$ regime as well.

As to the operators $\mathrm{S}_1$ and $\mathrm{S}_3$, they do not
change their form at all. Indeed, the intertwining relations \eqref{inter}
lead to equations \eqref{W-sys} with the replacements $s\to s-1/2$
in the first row theta functions and $s\to s+1/2$ in the second row.
As a result of the latter inhomogeneity, it happens that equation \eqref{eqx1}
does not change apart of the overall sign for all terms,
equation \eqref{eqx2} does not change at all. As a result, the
final equations \eqref{eqxfin}, \eqref{eqzfin}, and \eqref{eqxz}
do not change at all. Therefore, the shape of the $\mathrm{S}_1$-operator
does not change. Validity of the cubic Coxeter relation is
guaranteed again by the elliptic beta integral with the replacement
of corresponding parameters $g_3\to g_3+1/2$ and $g_4\to g_4+1/2$,
which does not spoil the balancing condition \eqref{balance}
defined modulo $\Z$. As a result the R-operator \eqref{Rexplicit}
is slightly changed --- it is necessary to replace in
the kernel of this operator $+\sqrt{pq}$ by $-\sqrt{pq}$.
The modular double exists as well with the partner Sklyanin
algebra generators being obtained from \eqref{mod_doub2}
after the replacement $g \to g-1/2$.

Now it is straightforward to build a solution of equations \eqref{modS2eqs}
which is well defined for Im$(\eta)=0$.
First we set
$2\eta=\omega_1/\omega_2, \tau=\omega_3/\omega_2$ and renormalize
all other variables in Sklyanin's L-operator
$$
z_{1}\to \frac{z_{1}}{\omega_2}, \quad z_{2}\to \frac{z_{2}}{\omega_2},
\quad u\to  \frac{u}{\omega_2}, \quad v\to  \frac{v}{\omega_2}.
$$
Then equation \eqref{ab} takes the form
$$
\Phi(z+\omega_1) = \mathrm{e}^{\pi \textup{i} \frac{a-b}{\omega_2}}\,
\frac{\theta_1(\frac{z+a}{\omega_2})}{\theta_1(\frac{z+b}{\omega_2})} \cdot \Phi(z)\; ,
$$
which has a particular solution of the form
$$
\Phi(z) =  \frac{G(z+a;{\bf \omega})}{G(z+b;{\bf \omega})},
$$
where $G(z;{\bf \omega})$ is the modified elliptic gamma function \eqref{MEGF1}
well defined for $|q|\leq 1$ and satisfying the same key equation
as $\Gamma(e^{2\pi\textup{i}u/\omega_2};p,q)$ \eqref{G_eq}. Using these facts,
we can immediately write out the final general expression for $\mathrm{S}_2$-operator
following from equations \eqref{modS2eqs} and valid for
Im$(\tau)>0,$ Im$(\eta/\tau)<0$ (which admits Im$(\eta)=0$):
$$
\mathrm{S}_2(a)=G(\pm z_1\pm z_2 + a + \textstyle{\frac{1}{2}
\sum_{k=1}^3\omega_k}\, ;{\bf \omega})
\varphi_2(z_1,z_2), \quad a= u_2-v_1,
$$
where $\varphi_2(z_1+\omega_1,z_2)=\varphi_2(z_1,z_2+\omega_1)=
\varphi_2(z_1+\omega_1/2,z_2+\omega_1/2)$. Because of the inversion formula
for $G(z;{\bf \omega})$-function,
for $\varphi_2=1$ one has $\mathrm{S}_2(-a)\mathrm{S}_2(a)=\II$.

A solution of equations  \eqref{eqxfin}, \eqref{eqzfin}, and \eqref{eqxz}
for the $\Delta$-kernel valid for Im$(\eta)=0$ has the form
\begin{equation}\label{Delta3}
\Delta(z,x) = \mathrm{e}^{\frac{2\pi\textup{i}}{\omega_1\omega_2} (x^2-z^2)}
\,\frac{G(\pm x\pm z -u_1+u_2;\mathbf{\omega})}
{G(\pm 2x;\mathbf{\omega})}\varphi_1(z,x),
\end{equation}
where $\varphi_1$-function has the same periodicity properties as $\varphi_2$.

Substitute now the second form of $G(x;\mathbf{\omega})$-function \eqref{MEGF2}
into these expressions. Then we can write
$$
\mathrm{S}_2(a)=
e^{-\frac{4\pi\textup{i} a}{\omega_1\omega_2\omega_3}(z_1^2+z_2^2)}
\cdot \Gamma\big(-\textstyle{\frac{1}{\omega_3}}(\pm z_1\pm z_2 + a +
\textstyle{\frac{1}{2}\sum_{k=1}^3\omega_k } )\, \big|
-\frac{\omega_2}{\omega_3}, -\frac{\omega_1}{\omega_3}\big)
\cdot\varphi_2'(z_1,z_2),
$$
where
$$
\varphi_2'(z_1,z_2)=\mathrm{e}^{-\frac{4\pi\textup{i}}{3}
B_{3,3}(a+\frac{1}{2}\sum_{k=1}^3\omega_k;{\bf \omega})}\cdot \varphi_2(z_1,z_2).
$$
Analogously,
\begin{eqnarray*}
\Delta(z,x) = \mathrm{e}^{\frac{2\pi \textup{i}}{\omega_1\omega_2} (x^2-z^2)}
\mathrm{e}^{\frac{4\pi \textup{i}}{\omega_1\omega_2\omega_3}
[x^2(b-\frac{1}{2}\sum_{k=1}^3\omega_k)+z^2(b+\frac{1}{2}\sum_{k=1}^3\omega_k)]}
\\
\times \,\frac{\Gamma\big(-\textstyle{\frac{1}{\omega_3}}(\pm x\pm z -b)\big|
-\frac{\omega_2}{\omega_3}, -\frac{\omega_1}{\omega_3}\big)}
{\Gamma\big(\pm \frac{2x}{\omega_3}\big| -\frac{\omega_2}{\omega_3},
-\frac{\omega_1}{\omega_3}\big)}\cdot\varphi_1'(z,x)
\end{eqnarray*}
where $b=u_1-u_2$ and
$$
\varphi_1'(z,x)=\mathrm{e}^{\frac{2\pi \textup{i}}{3}
(B_{3,3}(0;{\bf \omega})- 2B_{3,3}(-b;{\bf \omega}))}\cdot \varphi_1(z,x).
$$

After derivation of the $\Delta$-kernel we have to fix the integration
interval $[\alpha,\beta]$ and the space of functions for which the
$\mathrm{S}_1$-operator really satisfies the intertwining relations.
First, as it was done earlier, we pass to the modified Sklyanin algebra
generators \eqref{modSklgen} with additional shift by $1/2$ in the
arguments of $\ell$-dependent theta functions and conjugate
similarly $\mathrm{S}_k$-operators. This does not change the operator
$\mathrm{S}_2$, but removes the exponential
$\mathrm{e}^{\frac{2\pi \textup{i}}{\omega_1\omega_2}(x^2-z^2)}$ from $\Delta(z,x)$.
Then we note that the ratio of elliptic gamma functions in $\Delta$ is a periodic
function of $z$ and $x$ with the period $\omega_3$. Therefore
we set $\alpha=0$ and $\beta=\omega_3$ and demand that the modified
operator $\mathrm{S}_1$ acts on functions $\Phi(x)$ such that
$\Psi(x):=\mathrm{e}^{\frac{4\pi \textup{i}}{\omega_1\omega_2\omega_3}
x^2(b-\frac{1}{2}\sum_{k=1}^3\omega_k)}\Phi(x)$ is an even
$\omega_3$-periodic function of $x$, $\Psi(-x)=\Psi(x+\omega_3)=\Psi(x)$.
Finally, equations for $\Delta$-kernel are true provided $\Delta(z,x)$ has no
poles in the parallelogram $x\in [-\omega_1, \omega_3-\omega_1, \omega_3+\omega_1,
\omega_1]$ which, by complete analogy with the previous cases, is
guaranteed for $|e^{2\pi\textup{i}(b\pm z)/\omega_3}|<|e^{\pi\textup{i}\omega_1/\omega_3}|$.

So, we have found operators $\mathrm{S}_1$, $\mathrm{S}_2$, and $\mathrm{S}_3$
(it differs from $\mathrm{S}_1$ only by the space where it acts). Returning back
to the original notation, i.e. renormalizing back $x\to x\omega_2, u\to u\omega_2$, etc,
we come to the following theorem.

\begin{theorem}
Let Im$(\tau)>0$ (i.e., $|p|<1$) and $\mathrm{Im}(\eta/\tau)<0$
(for $\mathrm{Im}(\eta)=0$, i.e. $|q|=1$, this assumes $\mathrm{Re}(\eta)>0)$.
Denote $\varphi_k'(z,x), \, k=1,2,3,$
arbitrary even elliptic functions of $z$ and $x$ with periods $\tau$ and
$2\eta$ satisfying additional constraints
$$
\varphi_k'(z+\eta,x+\eta)=\varphi_k'(z,x), \quad k=1,2,3,
$$
and not having simple poles in the domains
$\text{Im}(\eta/\tau)\leq \text{Im}(z/\tau), \text{Im}(x/\tau)\leq -\text{Im}(\eta/\tau)$.

Define the operators
\begin{equation}
\mathrm{S}_2(u_2-v_1)=
e^{\frac{2\pi\textup{i}(v_1-u_2)}{\eta\tau}(z_1^2+z_2^2)}
\cdot \Gamma\big(\frac{\pm z_1\pm z_2 + v_1-u_2}{\tau} -
\textstyle{\frac{1}{2}}-\frac{\eta}{\tau}- \textstyle{\frac{1}{2\tau}}\, |
-\frac{1}{\tau}, -\frac{2\eta}{\tau}\big)
\cdot\varphi_2'(z_1,z_2),
\label{S2finm}\end{equation}
where $|\sqrt{\tilde p\tilde r}e^{2\pi\textup{i}(v_1-u_2)/\tau}|<|\tilde r|$
with $\tilde p=e^{-2\pi\textup{i}/\tau}$ and
$\tilde r= e^{-4\pi\textup{i}\eta/\tau}$,
and
\begin{eqnarray}\nonumber
[\mathrm{S}_1(a)\Phi ](z_1,z_2) =\tilde\kappa \int_0^{\tau}\mathrm{e}^{\frac{2\pi \textup{i}}{\eta\tau}
[x^2(a-\frac{1}{2}-\eta-\frac{\tau}{2})+z_1^2(a+\frac{1}{2}+\eta+\frac{\tau}{2})]}
\frac{ \Gamma\big(\frac{1}{\tau} (\pm x\pm z_1 +a)|
-\frac{1}{\tau}, -\frac{2\eta}{\tau}\big) }
{ \Gamma\big(\frac{2a}{\tau},\pm \frac{2x}{\tau}| -\frac{1}{\tau}, -\frac{2\eta}{\tau}\big) }
\cdot\varphi_1'(z_1,x)\Phi(x,z_2)\frac{dx}{\tau},
\label{S1finm}\end{eqnarray}
where $a=u_1-u_2$, $|e^{2\pi\textup{i}(a\pm z_1)/\tau}|<|\tilde r|^{1/2}$ and
$\tilde\kappa = (\tilde p;\tilde q)_\infty\,(\tilde r;\tilde r)_\infty/2,$
\begin{equation}
[\mathrm{S}_3(b)\Phi ](z_1,z_2) =\tilde\kappa \int_0^{\tau}\mathrm{e}^{\frac{2\pi \textup{i}}{\eta\tau}
[x^2(b-\frac{1}{2}-\eta-\frac{\tau}{2})+z_2^2(b+\frac{1}{2}+\eta+\frac{\tau}{2})]}
\frac{ \Gamma\big( \frac{1}{\tau} (\pm x\pm z_2 +b)|
-\frac{1}{\tau}, -\frac{2\eta}{\tau}\big) }
{ \Gamma\big(\frac{2b}{\tau},\pm \frac{2x}{\tau}| -\frac{1}{\tau}, -\frac{2\eta}{\tau}\big) }
\cdot\varphi_3'(z_2,x)\Phi(z_1,x)\frac{dx}{\tau},
\label{S3finm}\end{equation}
where $b=v_1-v_2$, $|e^{2\pi\textup{i}(b\pm z_2)/\tau}|<|\tilde r|^{1/2}$.

Denote as $V_{a,b}$ the space of functions
of two complex variables $\Phi(z_1,z_2)$ such that the products
$$
\mathrm{e}^{\frac{2\pi \textup{i}}{\eta\tau}
z_1^2(a-\frac{1}{2}-\eta- \frac{\tau}{2})}
\mathrm{e}^{\frac{2\pi \textup{i}}{\eta\tau}
z_2^2(b-\frac{1}{2}-\eta- \frac{\tau}{2})}\Phi(z_1,z_2)
$$
are even and periodic in $z_1$ and $z_2$
with the period $\tau$ and which do not have simple poles in the domains
$\text{Im}(\eta/\tau)\leq \text{Im}(z_1/\tau), \text{Im}(z_2/\tau)\leq -\text{Im}(\eta/\tau)$.
Then the operators $\mathrm{S}_1$,  $\mathrm{S}_2$ and $\mathrm{S}_3$
map the space  $V_{a,b}$ for $a=u_1-u_2$ and $b=v_1-v_2$ onto itself and they satisfy
the defining intertwining relations \eqref{RLL13}, \eqref{RLL2} and  \eqref{RLL4}
provided in the corresponding $\mathrm{L}$-operator \eqref{L_op} one uses the Sklyanin
algebra generators of the form
\begin{eqnarray}\nonumber && \makebox[-2em]{}
\mathbf{S}^a_{mod}=  e^{-\pi\textup{i} \eta}
\frac{(\textup{i})^{\delta_{a,2}}
\theta_{a+1}(\eta)}{\theta_1(2 z) } \Bigl[\,\theta_{a+1} \left(2
z-2\eta\ell+\textstyle{\frac{1}{2}}\right)\cdot e^{-2\pi\textup{i}z}\cdot
 \mathrm{e}^{\eta \partial_z}
\\ && \makebox[10em]{}
- \theta_{a+1} \left(-2z-2\eta\ell+\textstyle{\frac{1}{2}}\right)\cdot
e^{2\pi\textup{i}z}\cdot  \mathrm{e}^{-\eta \partial_z}\, \Bigl].
\label{modSklgen'}\end{eqnarray}
\end{theorem}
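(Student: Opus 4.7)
The plan is to mirror the derivation of Theorem 3 (the $\mathrm{Im}(\eta)>0$ case) step by step, replacing the elliptic gamma function $\Gamma(z|\tau,2\eta)$ everywhere by the modified elliptic gamma function $G(z;\boldsymbol{\omega})$, whose key finite-difference relation \eqref{G_eq} holds without any assumption on the sign of $\mathrm{Im}(\eta)$ and, in particular, remains valid on the line $|q|=1$. The $1/2$-shift in the $\ell$-dependent theta functions of \eqref{modSklgen'} is handled exactly as in the discussion surrounding \eqref{autoSklyan}--\eqref{S2mod}: it only shifts the Jacobi theta index in the $\mathrm{S}_2$-equations (from $\theta_4$ to $\theta_3$) and is invisible in the $\mathrm{S}_{1,3}$-equations.

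First I would treat $\mathrm{S}_2$. Since it acts as multiplication, the $\mathrm{M}$-- and $\mathrm{N}$--matrices at the boundaries cancel from \eqref{RLL2}, and the same separation of variables $\mathrm{S}(z_1,z_2)=\Phi_+(z_1{+}z_2)\Phi_-(z_1{-}z_2)$ reduces everything to a single first-order equation of the form \eqref{ab} with $\theta_4\to\theta_3$. Setting $2\eta=\omega_1/\omega_2$, $\tau=\omega_3/\omega_2$ and rescaling as indicated in the text, the particular solution is given by $G$; using the second representation \eqref{MEGF2} of $G$ produces exactly the exponential prefactor and the $\Gamma(\,\cdot\,|-1/\tau,-2\eta/\tau)$ structure displayed in \eqref{S2finm}. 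The inversion $\mathrm{S}_2(-a)\mathrm{S}_2(a)=\II$ follows from the inversion formula for $G$.

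Second I would derive the kernel of $\mathrm{S}_{1,3}$. The linear combinations that produced \eqref{eqxfin}, \eqref{eqzfin}, \eqref{eqxz} are insensitive to the $1/2$-shift (the shift cancels between the $a=3,4$ and $a=1,2$ pairs, as already observed), so the same three finite-difference equations must be solved. Their general $G$-valued solution is \eqref{Delta3}; rewriting via \eqref{MEGF2} absorbs the factor $e^{2\pi\textup{i}(x^{2}-z^{2})/\omega_{1}\omega_{2}}$ into a renormalization of the phase $\varphi_1'$ and produces \eqref{S1finm}, \eqref{S3finm}. The integration interval is dictated by the periodicity of the $G$-ratio, which has period $\omega_3=\tau$; hence one takes $[0,\tau]$. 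The function space $V_{a,b}$ is then determined by the requirement that the similarity-conjugated kernel act on even $\tau$-periodic functions, exactly as the original class $\Psi(-x)=\Psi(x+1)=\Psi(x)$ was determined in the $|q|<1$ case.

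Third I would check the analytic conditions that allow the shifts $x\to x\pm\eta$ under the integral sign to preserve the value of the integral. By the same contour-translation argument used above \eqref{Zconstr}, the relevant parallelogram $x\in[-\eta,\tau-\eta,\tau+\eta,\eta]$ must be free of poles of the $G$-ratio; the divisor of $G$ together with the inequalities $|e^{2\pi\textup{i}(a\pm z)/\tau}|<|\tilde r|^{1/2}$ and the assumed pole-free behaviour of $\varphi_k'$ guarantee this. The hypothesis $\mathrm{Im}(\eta/\tau)<0$ (with $\mathrm{Re}(\eta)>0$ in the borderline case $|q|=1$) ensures that $|\tilde r|<1$, so that the same annulus geometry as in the $|q|<1$ case applies to the modular transformed bases $\tilde p,\tilde r$. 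The main obstacle is precisely this verification: on the line $|q|=1$ the old ``parallelogram with real edges'' argument is unavailable, and one must use the modular transformation $\tau\to-1/\tau$ implicit in \eqref{MEGF2} to reinterpret the whole construction as a $|\tilde r|<1$ problem in the transformed bases. Once this reinterpretation is in place, intertwining with \eqref{modSklgen'} and the relations \eqref{RLL13}, \eqref{RLL2}, \eqref{RLL4} follow verbatim from the corresponding arguments of Theorem 3 (and, should one wish, the Coxeter relations of Section~\ref{Coxeter} hold by invoking the elliptic beta integral in the bases $\tilde p,\tilde r$).
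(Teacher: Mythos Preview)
Your proposal is correct and follows essentially the same route as the paper's own derivation preceding the theorem: solve the same first-order difference equations with the modified elliptic gamma function $G(u;\boldsymbol{\omega})$, rewrite via the second representation \eqref{MEGF2} to obtain the $\Gamma(\,\cdot\,|-\tfrac{1}{\tau},-\tfrac{2\eta}{\tau})$ form with its exponential prefactors, fix the integration interval $[0,\tau]$ from $\omega_3$-periodicity, and determine $V_{a,b}$ from the resulting quasi-periodicity of the kernel. Your remark that the whole construction is a modular reinterpretation into the bases $\tilde p,\tilde r$ (where $|\tilde r|<1$ under the hypothesis $\mathrm{Im}(\eta/\tau)<0$) is exactly the mechanism the paper uses, and your handling of the $1/2$-shift matches the paper's observation that equations \eqref{eqxfin}--\eqref{eqxz} are unchanged.
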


It remains to confirm the Coxeter relations for the choice $\varphi_k'=1$.
The relation $\mathrm{S}_2(a)\mathrm{S}_2(-a)=\II$ is evident.
It is not difficult to check the cubic Coxeter relation \eqref{key2}
as it leads to a solution of the star-triangle relation for Im$(\eta)=0$
considered in \cite{spi:conm}. We shall not present corresponding
details -- although it is neater than before, all the exponential factors
cancel and the identity is reduced again to the computation of the
elliptic beta integral  in a particular parameterization \cite{spi:umn}.
Also, it follows from the integral analogue of the Bailey
lemma formulated in terms of the $G(z;{\bf \omega})$-function.
In a similar way,  equalities
$\mathrm{S}_1(a)\mathrm{S}_1(-a)=\mathrm{S}_3(a)\mathrm{S}_3(-a)=\II$
are reduced to the previously considered inversion relations in
a different parameterization because of the
cancellation of exponential factors. In general,
in the arbitrary product $\ldots \mathrm{S}_i\mathrm{S}_j\mathrm{S}_k\ldots$
one never violates the restrictions on space of functions $\Phi(z_1,z_2)$
needed for operators $\mathrm{S}_1$ and  $\mathrm{S}_3$, i.e. all the exponential factors
can be pulled out to the far left and far right.

We could write out the explicit form of the R-operator
using its factorized form, but it is skipped since this is a straightforward procedure
leading to a somewhat cumbersome expression. After dropping
the exponential factors from this expression one would come
to the R-operator which is obtained from \eqref{Rexplicit}
(with $\sqrt{pq}$ replaced by $-\sqrt{pq}$) by a simple modular
transformation $(\omega_2,\omega_3) \to (-\omega_3,\omega_2)$.

As to the elliptic modular double, the R-operator written in terms of
the $G(u;\mathbf{\omega})$-functions (i.e., the form obtained
after scalings $z\to z/\omega_2,$ etc) is symmetric
with respect to the permutation $\omega_1\leftrightarrow\omega_2$.
In the original notation $z,g:=\eta(2\ell+1), \eta,\tau$,
the permutation of these quasiperiods is equivalent to the
changes $\eta\to 1/4\eta, \, \tau\to \tau/2\eta, z\to z/2\eta, g\to g/2\eta$
(here $g$ is considered as an independent variable). Therefore
the derived R-operator also has second RLL-relation, where $\mathrm{L}^{doub}$
is composed of a new Sklyanin algebra generators of the form
\begin{eqnarray}\nonumber && \makebox[-2em]{}
\mathbf{\tilde S}^a_{mod}=  e^{-\frac{\pi\textup{i}}{4 \eta}}
\frac{(\textup{i})^{\delta_{a,2}}
\theta_{a+1}(\textstyle{\frac{1}{4\eta}}|\frac{\tau}{2\eta})}
{\theta_1(\frac{z}{\eta}|\frac{\tau}{2\eta})) } \Bigl[\,\theta_{a+1} \left(
\frac{2z-g+1}{2\eta} \Big|\frac{\tau}{2\eta}\right)\cdot
e^{-\pi\textup{i}z/\eta}\cdot
 \mathrm{e}^{\frac{1}{4\eta} \partial_z}
\\ && \makebox[4em]{}
- \theta_{a+1} \left(\frac{-2z-g+1}{2\eta} \Big|\frac{\tau}{2\eta}\right)\cdot
e^{\pi\textup{i}z/\eta}\cdot  \mathrm{e}^{-\frac{1}{4\eta} \partial_z}\, \Bigl].
\label{modSklgendoub}\end{eqnarray}
This elliptic modular double has been introduced  in \cite{AA2008} as well.
Thus we have found solutions of the Yang-Baxter equation \eqref{YB}
for all possible regions of the key complex parameter $\eta$.

\section{Conclusion}
\setcounter{equation}{0}

In this paper we have merged two constructions from the theory of
quantum integrable systems and the theory of special functions.
One construction is a specific approach to building YBE solutions
developed in \cite{SD,DM0,DM,DKK}. It is based on  the
realization of the permutation group generators by various
operators acting in the functional spaces and it directly leads to
the factorized form of the R-matrices as products of elementary
transposition operators. Another construction is the elliptic beta
integral evaluation \cite{spi:umn} and its various consequences formulated
as an elliptic Fourier transformation and integral Bailey lemma
\cite{spi:bailey}. This result formed a basis for developing
the theory of a principally new and very powerful class of special
functions ---  elliptic hypergeometric integrals \cite{S2,spi:essays}.
As a result of our considerations, both fields have benefited
and mutually enriched each other. The most complicated known R-matrix
at the elliptic level appeared to be defined by an integral operator with
an elliptic hypergeometric kernel and algebraic properties of
the integral Bailey lemma ingredients got a natural interpretation
as Coxeter relations for the permutation group generators.
Moreover, the key integral operator defining the elliptic Fourier
transformation appeared to be an intertwining operator for the
Sklyanin algebra. The general construction shows that
YBE is a simple consequence of a particular word identity in
the group algebra for the braid group $\mathfrak{B}_6$ or, in our case,
of the symmetric group $\mathfrak{S}_6$,
whose generators are realized as integral operators.

Our results can be applied to all known forms of YBE \cite{PA}. In particular,
a generalization of our construction to root systems
is relatively straightforward due to the abundance of
elliptic beta integrals on root systems \cite{spi:essays}
and corresponding elliptic Fourier transformations \cite{spi-war:inversions}.
In the rational case the most general known R-operator for $A_n$-root
system was constructed in \cite{DM0,DM}.
Star-triangle and star-star type relations for the
root systems following from the elliptic hypergeometric integral
identities were considered in \cite{spi:conm,BS2}.

In \cite{VF}, Faddeev and Volkov constructed a solution of YBE
at the $q$-hypergeometric level with the help of the
pentagon relation for noncompact quantum dilogarithms.
A generalization of this model has been found in \cite{spi:conm}
and a question was posed --- is it still related
to the pentagon relation and does there exist an elliptic analogue
of the latter ? As shown in \cite{DKK}, the method used in the present paper
works at the compact $q$-hypergeometric level using $q$-exponential functions.
The noncompact situation can be treated as well after
appropriate replacement of $q$-exponentials by the noncompact quantum dilogarithms.
However, it is easy to degenerate our elliptic results to both
compact and non-compact $q$-levels. From our analysis of the elliptic
hypergeometric constructions it is not clear which relation can be taken as
a direct elliptic analogue of the pentagon relation. Instead of a
potential five-term relation, the key role is played by the hexagon
relation \eqref{STR} emerging in the theory of elliptic
Fourier transformation \cite{spi:bailey} and defining the
Coxeter relation for permutation operators (or the star-triangle
relation in integrable models of statistical mechanics).

We would like to stress that the spin variable $\ell$ in our analysis
takes continuous values. Therefore, strictly speaking, we deal not with the
discrete Ising-type models, but with two-dimensional quantum field theories.
In this context the Yang-Baxter equation can be interpreted as a condition
of factorizing the $N$-body $S$-matrix to the product
of two-body scattering matrices \cite{PA}.

Let us discuss briefly an application of our results to four-dimensional ($4d$)
supersymmetric gauge field theories. The key discovery of \cite{DO}
consists in the fact that superconformal indices of these theories
are described by the elliptic hypergeometric integrals.
For example, the elliptic beta integral evaluation formula gets a remarkable
interpretation as a direct indication on the confinement phenomenon in
the simplest $4d$ supersymmetric quantum chromodynamics.

A relation between $4d$  Nekrasov instanton
partition function and conformal blocks in $2d$ Liouville
field theory was empirically discovered in \cite{AGT}.
In \cite{GPRR}, superconformal indices of $4d$ $\mathcal{N}=2$ supersymmetric
field theories were tied to correlation functions of
$2d$ topological field theories. Our results are relevant to a different type
of $4d/2d$ correspondence discovered in \cite{spi:conm},
where $4d$ superconformal indices coincide with partition functions of
integrable models of $2d$ spin systems.

Seiberg duality is a special electric-magnetic duality of
$4d$ supersymmetric non-abelian gauge field theories.
In the language of elliptic hypergeometric integrals there
are two qualitatively different situations. When the dual theory confines
corresponding superconformal index is identical to some elliptic
beta integral on a root system or, from the statistical mechanics
point of view, to star-triangle relation for multicomponent
spin systems. When the dual theory is a
nontrivial interacting field theory, one deals with symmetry
transformations for integrals equivalent to the star-star
relations in statistical mechanics \cite{spi:conm}.

Since superconformal indices for simple gauge groups coincide with
the statistical sums of elementary cells for spin systems on the plane,
the Seiberg duality transformations represent
the Kramers-Wannier type duality transformations for
corresponding new $2d$ integrable models.
As indicated in \cite{SV} sequential integral transformations
following from the Bailey lemma define superconformal indices of particular quiver
gauge theories. This procedure corresponds to building full two-dimensional
lattice partition functions as prescribed in the theory of quantum integrable systems.

\section*{Acknowledgments}

The authors are deeply indebted to L. D. Faddeev and A. M. Vershik for
useful discussions and general support.
The work of V. S. is supported by RFBR grant
no. 11-01-00980 and NRU HSE scientific fund grant no. 12-09-0064.
The work of S. D. is supported by RFBR grants 11-01-00570, 11-01-12037,
12-02-91052 and Deutsche Forschungsgemeinschaft (KI 623/8-1).

\section{Appendix}
\setcounter{equation}{0}

In this Appendix we collect some useful formulae. The standard
infinite $q$-product is defined as
\begin{equation}
(x;q)_\infty = \prod_{k=0}^{+\infty}(1-q^k\cdot x)
\ ;\quad  q\in \C \ ,\quad  |q|<1.
\label{qinf}\end{equation}
The general theta-function with characteristics has the form
$$
\theta_{a,b}(z|\tau) = \sum_{n\in\mathbb{Z}} \mathrm{e}^{\pi \textup{i}
(n+\frac{a}{2})^2\tau}\cdot \mathrm{e}^{2\pi \textup{i} (n+\frac{a}{2})(z+\frac{b}{2})}.
$$
We use four standard theta-functions
$$
\theta_{1}(z|\tau) = -\theta_{1,1}(z|\tau) = -\sum_{n\in\mathbb{Z}}
\mathrm{e}^{\pi \textup{i} \left(n+\frac{1}{2}\right)^2\tau}\cdot
\mathrm{e}^{2\pi \textup{i}
\left(n+\frac{1}{2}\right)\left(z+\frac{1}{2}\right)}
$$
\begin{equation}
=\textup{i}p^{1/8} e^{-\pi \textup{i}z}\: (p;p)_\infty\: \theta(e^{2\pi \textup{i}z};p),
\label{theta1}\end{equation}
where $p=e^{2\pi \textup{i} \tau}$ and
\begin{equation}
\theta(t;p)=(t;p)_\infty(pt^{-1};p)_\infty,
\label{theta_p}\end{equation}
\be
\theta_{2}(z|\tau) = \theta_{1,0}(z|\tau) = \sum_{n\in\mathbb{Z}}
\mathrm{e}^{\pi \textup{i} \left(n+\frac{1}{2}\right)^2\tau}\cdot
\mathrm{e}^{2\pi \textup{i} \left(n+\frac{1}{2}\right) z }
\ee
\be
\theta_{3}(z|\tau) = \theta_{0,0}(z|\tau) = \sum_{n\in\mathbb{Z}}
\mathrm{e}^{\pi \textup{i} n^2\tau}\cdot \mathrm{e}^{2\pi \textup{i} n z }
\ee
\be
\theta_{4}(z|\tau) = \theta_{0,1}(z|\tau) = \sum_{n\in\mathbb{Z}}
\mathrm{e}^{\pi \textup{i} n^2 \tau}\cdot \mathrm{e}^{2\pi \textup{i}
n\left(z+\frac{1}{2}\right)}.
\ee
The following identities are used to factorize the $\mathrm{L}$-operator and
to derive defining equations for the operators
$\mathrm{S}_1$, $\mathrm{S}_2$, and $\mathrm{S}_3$:
\begin{eqnarray}\label{11} &&
2\,\theta_1(x+y)\,\theta_1(x-y) = \bar\theta_4(x)\,\bar\theta_3(y)
-\bar\theta_4(y)\,\bar\theta_3(x),
\\ \label{22} &&
2\,\theta_2(x+y)\,\theta_2(x-y) = \bar\theta_3(x)\,\bar\theta_3(y)
-\bar\theta_4(y)\,\bar\theta_4(x),
\\ \label{33} &&
  2\,\theta_3(x+y)\,\theta_3(x-y)
= \bar\theta_3(x)\,\bar\theta_3(y) +\bar\theta_4(y)\,\bar\theta_4(x),
\\ \label{44} &&
2\, \theta_4(x+y)\,\theta_4(x-y)=\bar\theta_4(x)\,\bar\theta_3(y)
+\bar\theta_4(y)\,\bar\theta_3(x),
\\ \label{14} &&
2\, \theta_4(x+y)\,\theta_1(x-y)=\bar\theta_1(x)\,\bar\theta_2(y)
-\bar\theta_1(y)\,\bar\theta_2(x),
\\ \label{12} &&
\bar\theta_1(x-y)\,\bar \theta_2(x+y)=\theta_1(2x)\,\theta_4(2y)
-\theta_1(2y)\,\theta_4(2x),
\end{eqnarray}
where $\bar\theta_a(z) \equiv \theta_a\left(z|\frac{\tau}{2}\right)$.
We need also the duplication formula
\begin{equation}
\theta_1(2x|2\tau)=\frac{(-p;p)_\infty}{(p;p)_\infty}
\theta_1(x|\tau)\theta_2(x|\tau),\qquad p=e^{2\pi\textup{i}\tau}.
\label{thetadup}\end{equation}

For Im$(\tau)>0$, Im$(\eta)>0$ the elliptic gamma function is defined by the
double infinite product
\be \Gamma(z|\tau,2\eta) \equiv
\prod_{n,m=0}^{\infty} \frac{1-\mathrm{e}^{2\pi
\textup{i}\left(\tau(n+1)+2\eta(m+1)-z\right)}}{1-\mathrm{e}^{2\pi
\textup{i}\left(\tau n+2\eta m+z\right)}}.
\label{egamma_a}\ee
It is symmetric in its modular parameters $\Gamma(z|\tau,2\eta)=
\Gamma(z|2\eta,\tau)$ and satisfies
equations
\be
\Gamma(z+1|\tau,2\eta)=\Gamma(z|\tau,2\eta),
\ee
 \be \Gamma(z+\tau|\tau,2\eta)
= \theta(e^{2\pi \textup{i} z};e^{4\pi \textup{i}\eta})\cdot\Gamma(z|\tau,2\eta),
\ee
\be \Gamma(z+2\eta|\tau,2\eta)
= \theta(e^{2\pi \textup{i} z};e^{2\pi \textup{i}\tau})\cdot\Gamma(z|\tau,2\eta),
\label{eq_gamma3}\ee
and the normalization condition $\Gamma(\eta+\tau/2|\tau,2\eta)=1$.
One can evidently replace in these equations
$$
\theta(e^{2\pi \textup{i} z};e^{2\pi \textup{i}\tau})=\mathrm{R}(\tau)\cdot \mathrm{e}^{\pi i
z}\theta_1(z|\tau), \quad
\theta(e^{2\pi \textup{i} z};e^{4\pi \textup{i}\eta})=\mathrm{R}(2\eta)\cdot \mathrm{e}^{\pi i
z}\theta_1(z|2\eta),
$$
where the constant $\mathrm{R}(\tau)$ does not depend on $z$:
$\mathrm{R}(\tau) = -\textup{i}\mathrm{e}^{-\frac{\pi \textup{i} \tau}{4}} \cdot
\left(\mathrm{e}^{2\pi \textup{i} \tau};\mathrm{e}^{2\pi \textup{i} \tau}\right)^{-1}_\infty$.

Zeros of $\Gamma(z|\tau,2\eta)$ are located at $z=\Z
+\tau \Z_{>0}+2\eta\Z_{>0}$ and poles at $z=\Z+\tau\Z_{\leq0}+2\eta\Z_{\leq0}$.
The reflection equation for this function has the form
\begin{equation}
\Gamma(z|\tau,2\eta)\Gamma(-z+2\eta+\tau|\tau,2\eta)=1.
\label{refl}\end{equation}
In the multiplicative notation one has
\begin{equation}
\Gamma(t;p,q)=\prod_{j,k=0}^\infty\frac{1-t^{-1}p^{j+1}q^{k+1}}{1-tp^jq^k},
\quad |p|, |q|<1,
\label{egamma_m}\end{equation}
so that $\Gamma(t;p,q)\Gamma(pq/t;p,q)=1$ and
$$
\Gamma(qt;p,q)=\theta(t;p)\Gamma(t;p,q), \quad
\Gamma(pt;p,q)=\theta(t;q)\Gamma(t;p,q).
$$

For incommensurate $\omega_1,\omega_2,\omega_3\in\C$ define three base variables,
\begin{eqnarray*}
&& q= e^{2\pi\textup{i}\frac{\omega_1}{\omega_2}}, \quad
p=e^{2\pi\textup{i}\frac{\omega_3}{\omega_2}}, \quad  r=e^{2\pi\textup{i}\frac{\omega_3}{\omega_1}},
\\ &&
\tilde q= e^{-2\pi\textup{i}\frac{\omega_2}{\omega_1}}, \quad
\tilde p=e^{-2\pi\textup{i}\frac{\omega_2}{\omega_3}},   \quad
\tilde r=e^{-2\pi\textup{i}\frac{\omega_1}{\omega_3}},
\end{eqnarray*}
where $\tilde q,\tilde p,\tilde r$ denote particular modular transformed bases.
The condition that $\sum_{k=1}^3n_k\omega_k\neq 0, \; n_k\in\Z,$ implies
that none of $p$, $q$, and $r$ is a root unity.

For $|q|, |p|<1$ (which assumes $|r|<1$) the modified elliptic gamma function
is defined as
\begin{equation} \label{MEGF1}
G(u;{\bf \omega}) =  \Gamma(e^{2 \pi \textup{i}
u/\omega_2};p,q) \Gamma(r e^{-2 \pi \textup{i}
u/\omega_1};\widetilde{q},r) = \frac{\Gamma(e^{2 \pi \textup{i}
u/\omega_2};p,q)}{\Gamma(\widetilde{q} e^{2 \pi \textup{i} u/\omega_1};\widetilde{q},r)}.
\end{equation}
This is a meromorphic function of $u$ even for $\omega_1/\omega_2>0$,
when $|q|=1$, which is easily seen from its another representation
 \begin{equation} \label{MEGF2}
G(u;{\bf \omega}) \ = \ e^{-\frac{\pi
\textup{i}}{3} B_{3,3}(u;\mathbb{\omega})} \Gamma(e^{-2 \pi \textup{i}
u/\omega_3};\widetilde{r},\widetilde{p}),
\end{equation}
where $B_{3,3}$ is a Bernoulli polynomial of the third order
\begin{eqnarray} \label{B33} && \makebox[-2em]{}
B_{3,3}(u;\mathbb{\omega}) = \frac{1}{\omega_1\omega_2 \omega_3}
\Bigl(u-\frac12\sum_{k=1}^3\omega_k\Bigr)\Bigl((u-\frac12\sum_{k=1}^3\omega_k )^2
-\frac14 \sum_{k=1}^3\omega_k^2\Bigr).
\end{eqnarray}
Multiple Bernoulli polynomials are defined in the theory of Barnes
multiple zeta-function from the following expansion
$$
\frac{x^m e^{xu}}{\prod_{k=1}^m(e^{\omega_k x}-1)}
=\sum_{n=0}^\infty B_{m,n}(u;\omega_1,\ldots,\omega_m)\frac{x^n}{n!}.
$$

This function satisfies the equations
\begin{equation}
G(u+\omega_1)=\theta(e^{2\pi\textup{i} u/\omega_2};p)G(u),
\label{G_eq}\end{equation}
$$
G(u+\omega_2) =\theta(e^{2\pi\textup{i} u/\omega_1};r) G(u),
\qquad   G(u+\omega_3) =e^{-\pi\textup{i}B_{2,2}(u;\mathbf{\omega})} G(u).
$$
and the normalization condition $G(\sum_{m=1}^3\omega_m/2)=1$.
Here
$$
B_{2,2}(u;\mathbf{\omega})=\frac{u^2}{\omega_1\omega_2}
-\frac{u}{\omega_1}-\frac{u}{\omega_2}+
\frac{\omega_1}{6\omega_2}+\frac{\omega_2}{6\omega_1}+\frac{1}{2}
$$
is the second order Bernoulli polynomial appearing in the modular transformation
law for the theta function
\begin{equation}
\theta\left(e^{-2\pi\textup{i}\frac{u}{\omega_1}};
e^{-2\pi\textup{i}\frac{\omega_2}{\omega_1}}\right)
=e^{\pi\textup{i}B_{2,2}(u;\mathbf{\omega})} \theta\left(e^{2\pi\textup{i}\frac{u}{\omega_2}};
e^{2\pi\textup{i}\frac{\omega_1}{\omega_2}}\right).
\label{mod-theta}\end{equation}
The reflection equation for $G(u)$ has the form
$$
G(a,b;{\bf \omega}):=G(a;{\bf \omega})G(b;{\bf \omega})=1,
\qquad a+b=\sum_{k=1}^3\omega_k.
$$

\end{document}